\documentclass[11pt, a4paper]{amsart}
\usepackage{a4wide}
\usepackage{mathrsfs, amssymb,amsmath,url,amsthm}

\title[Equations in oligomorphic clones and the CSP]{Equations in oligomorphic clones and the Constraint Satisfaction Problem for  $\omega$-categorical structures}
\date{\today}

\author[L.~Barto]
{Libor Barto}
	\address{Department of Algebra, MFF UK, Sokolovska 83, 186 00 Praha 8, Czech Republic}
	\email{libor.barto@gmail.com}
	\urladdr{http://www.karlin.mff.cuni.cz/~barto/}

\author[M.~Kompatscher]
{Michael Kompatscher}
    \address{Institut f\"{u}r Diskrete Mathematik und Geometrie, FG Algebra, TU Wien, Austria, and Department of Algebra, Charles University, Czech Republic}
\email{michael@logic.at}
\urladdr{https://www.logic.at/staff/kompatscher/}

\author[M.~Ol\v{s}\'{a}k]
{Miroslav Ol\v{s}\'{a}k}
	\address{Department of Algebra, MFF UK, Sokolovska 83, 186 00 Praha 8, Czech Republic}
	\email{mirek@olsak.net}

\author[T.~van Pham]
{Trung van Pham}
    \address{Institute of Mathematics, Mathematics for Computer Sciences, Hanoi, Vietnam}
    \email{pvtrung@math.ac.vn}
    \urladdr{https://www.logic.at/staff/pvtrung/}

\author[M.~Pinsker]
{Michael Pinsker}
	\address{Institut f\"{u}r Diskrete Mathematik und Geometrie, FG Algebra, TU Wien, Austria, and Department of Algebra, Charles University, Czech Republic}    
	\email{marula@gmx.at}
    \urladdr{http://dmg.tuwien.ac.at/pinsker/}
\thanks{Libor Barto and Miroslav Ol\v{s}\'{a}k were supported by the the Grant Agency of the Czech Republic, grant GA\v CR
13-01832S. The research of Michael Kompatscher, Trung Van Pham, and Michael Pinsker has been funded through project P27600 of the  Austrian Science Fund (FWF)}

\theoremstyle{plain}

    \newtheorem{thm}{Theorem}[section]
    \newtheorem{theorem}[thm]{Theorem}

    \newtheorem{lem}[thm]{Lemma}
    \newtheorem{lemma}[thm]{Lemma}

    \newtheorem{prop}[thm]{Proposition}
    \newtheorem{proposition}[thm]{Proposition}

    \newtheorem{cor}[thm]{Corollary}

    \newtheorem{conj}[thm]{Conjecture}

\theoremstyle{definition}

    \newtheorem{defn}[thm]{Definition}
    \newtheorem{definition}[thm]{Definition}

\theoremstyle{remark}


\def\mA{\mathbb{A}}

\def\mN{\mathbb{N}}
\def\csp{{\rm CSP}}
\def\pol{{\rm Pol}}
\def\mQ{{\mathbb{Q}}}
\def\mi{{\rm mi}}
\def\mx{{\rm mx}}
\def\lex{{\rm lex}}
\def\mP{{\rm \mathbb{P}}}

{
\theoremstyle{definition}
\newtheorem*{observation*}{Observation}

}

\def\mD{\mathbb{D}}



\DeclareMathOperator{\id}{id}
 
\DeclareMathOperator{\Aut}{Aut}

\DeclareMathOperator{\End}{End}
\DeclareMathOperator{\Pol}{Pol}

\DeclareMathOperator{\CSP}{CSP}

\newcommand{\ignore}[1]{}

\newcommand{\C}{{\mathscr C}}

\newcommand{\Q}{{\mathbb Q}}

\newcommand{\To}{\rightarrow}


\newcommand{\M}{{\mathscr M}}

\newcommand{\cM}{\tilde{\M}}
\newcommand{\cC}{\tilde{\C}}

\newcommand{\rest}{{\upharpoonright}}

\newcommand{\clone}[1]{\mathscr{#1}}
\newcommand{\relstr}[1]{\mathbb{#1}}

\newcommand{\cloA}{\clone{A}}
\newcommand{\cloB}{\clone{B}}
\newcommand{\cloC}{\clone{C}}
\newcommand{\relA}{\relstr{A}}
\newcommand{\relB}{\relstr{B}}
\newcommand{\relC}{\relstr{C}}
\newcommand{\relP}{\relstr{P}}
\newcommand{\relQ}{\relstr{Q}}
\newcommand{\relD}{\relstr{D}}

\newcommand{\relS}{\relstr{S}}

\begin{document}

\maketitle

\begin{abstract}
There exist two conjectures for constraint satisfaction problems (CSPs) of reducts of finitely bounded homogeneous structures: the first one states that tractability of the CSP of such a structure is, when the structure is a model-complete core, equivalent to its polymorphism clone satisfying a certain non-trivial linear identity modulo outer embeddings. The second conjecture, challenging the approach via model-complete cores by reflections, states that tractability is equivalent to the linear identities (without outer embeddings) satisfied by its polymorphisms clone, together with the natural uniformity on it, being non-trivial.

We prove that the identities satisfied in the polymorphism clone of a structure allow for conclusions about the orbit growth of its automorphism group, and apply this to show that the two conjectures are equivalent. We contrast this with a counterexample showing that $\omega$-categoricity alone 
is insufficient to imply the equivalence of the two conditions above in a model-complete core.

Taking a different approach, we then show how the Ramsey property of a homogeneous structure can be utilized for obtaining a similar equivalence under different conditions.

We then prove that any polymorphism of sufficiently large arity which is totally symmetric modulo outer embeddings of a finitely bounded structure can be turned into a non-trivial system of linear identities, and obtain non-trivial linear  identities for all tractable cases of reducts of the rational order, the random graph, and the random poset.

Finally, we provide a new and short proof, in the language of monoids, of the theorem stating that every $\omega$-categorical structure is homomorphically equivalent to a model-complete core.
\end{abstract}

\section{Introduction}\label{sect:intro}

In order to keep the presentation of the wide topic of the present article as compact as possible, we postpone most definitions to an own preliminaries section (Section~\ref{sect:prelims}).

\subsection{Constraint Satisfaction Problems}\label{subsect:CSP}

The Constraint Satisfaction Problem (CSP) of a structure $\relA$ in a finite relational language, denoted by $\CSP(\relA)$, is the computational problem of deciding its primitive positive theory: given a sentence $\phi$ which is an existentially quantified conjunction of atomic formulas, decide whether or not $\phi$ holds in $\relA$. When $\relA$ has a finite domain, then its CSP is in NP, and it has been conjectured that its CSP is always either NP-complete or polynomial-time solvable~\cite{FederVardi}. While the CSP of structures with an infinite domain can be of any complexity~\cite{BodirskyGrohe}, and can in particular be undecidable, for a certain class of infinite-domain CSPs a similar dichotomy conjecture as for the finite case has been stated. In fact, two such conjectures have been brought up via different approaches; in the present article we first establish their equivalence, and then investigate in more detail the 
tractability conditions of the two conjectures.

The range of both conjectures are reducts of finitely bounded homogeneous structures, a (proper) subclass of the countable $\omega$-categorical structures. It is well-known, and easy to see from the definition, that the CSP of such structures is contained in NP; both conjectures state that it is always either NP-complete or contained in P, but each conjecture gives a different delineation between the (NP-)hard and the tractable (i.e., polynomial-time solvable) cases.

\subsection{The first conjecture}\label{subsect:first}

The first and older conjecture, formulated by Bodirsky and Pinsker (cf.~\cite{BPP-projective-homomorphisms}), is based on the notion of the model-complete core of an $\omega$-categorical structure, which can be viewed as the simplest representative in the class of an $\omega$-categorical structure with respect to the equivalence relation of homomorphic equivalence. We have the following.
\begin{thm}[Bodirsky~\cite{Cores-journal}]\label{thm:cores}
Every countable $\omega$-categorical structure is homomorphically equivalent to a model-complete core. This model-complete core is unique up to isomorphism and itself $\omega$-categorical.
\end{thm}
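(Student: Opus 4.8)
The plan is to realize the model-complete core as the induced substructure $\relA[e(A)]$ on the image of a well-chosen endomorphism $e \in \End(\relA)$, the endomorphism itself being extracted from the monoid $\End(\relA)$ by a compactness / back-and-forth argument. I use the Ryll-Nardzewski theorem in the form: $\Aut(\relA)$ is oligomorphic, so $A^n$ has only finitely many $\Aut(\relA)$-orbits for every $n$, and two tuples lie in one orbit iff they realize the same complete type; here $A$ denotes the domain of $\relA$.

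\textbf{Existence.} Regard $\End(\relA)$ as a submonoid of $A^A$. For any $f \in \End(\relA)$ the induced substructure $\relA[f(A)]$ is homomorphically equivalent to $\relA$ --- via $f$ and the inclusion $f(A) \hookrightarrow A$ --- so it is enough to find $f$ with $\relA[f(A)]$ a model-complete core. The guiding picture: on the finite set of $\Aut(\relA)$-orbits on $A^n$ there is a preorder $\preceq$, with $O \preceq O'$ iff some endomorphism of $\relA$ maps a tuple of $O$ into $O'$ (well-defined on orbits after composing with automorphisms); endomorphisms can only move tuples upward in $\preceq$, and a structure is a model-complete core exactly when none of its endomorphisms moves any tuple to a different orbit. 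I would construct $e$ by a Fra\"iss\'e-style back-and-forth: enumerate all relevant ``tasks'' --- roughly, pairs consisting of a finite tuple and a potential endomorphic behaviour on it --- and build a sequence $\id = e_0, e_1, \dots$ of endomorphisms of $\relA$, each obtained from the previous one by post-composition with a further endomorphism, so that the $i$-th step deals with the $i$-th task (using that endomorphisms collapsing an avoidable orbit difference exist) while freezing the value of the eventual map on the $i$-th point of a fixed enumeration of $A$; the $e_i$ then converge to an endomorphism $e$, and the construction is arranged so that $\relC := \relA[e(A)]$ has the property that no endomorphism of $\relC$ moves a finite tuple of $e(A)$ out of its $\Aut(\relC)$-orbit. \emph{Setting up this construction correctly is the step I expect to be the main obstacle}: the domain of the core cannot be isolated pointwise inside $A$, the naive lexicographic minimization of orbit numbers need not be well-founded across arities, the ambient monoid $A^A$ is not compact, and one must track which tasks become feasible only after earlier collapses --- so either the back-and-forth bookkeeping or an equivalent compactification of $\End(\relA)$ has to be arranged with care.

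Granting such an $e$, the displayed property says precisely that $\End(\relC)$ is contained in the closure of $\Aut(\relC)$ in the topology of pointwise convergence, i.e.\ that $\relC$ is a model-complete core. It is moreover $\omega$-categorical: if finite tuples $\bar a, \bar b$ from $e(A)$ lie in a common $\Aut(\relA)$-orbit, then --- using an endomorphism of $\relA$ with image inside $e(A)$ that fixes $\bar b$ (another output of the construction) together with an automorphism of $\relA$ sending $\bar a$ to $\bar b$ --- one obtains an endomorphism of $\relC$ sending $\bar a$ to $\bar b$, so by the model-complete core property $\bar a$ and $\bar b$ lie in one $\Aut(\relC)$-orbit; hence $\Aut(\relC)$ has at most as many orbits on $e(A)^n$ as $\Aut(\relA)$ has on $A^n$, which is finite.

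\textbf{Uniqueness.} If $\relC_1$ and $\relC_2$ are model-complete cores homomorphically equivalent to $\relA$, they are homomorphically equivalent to each other, say via $\phi : \relC_1 \to \relC_2$ and $\psi : \relC_2 \to \relC_1$. The composites $\psi \circ \phi$ and $\phi \circ \psi$ are endomorphisms of $\relC_1$ and of $\relC_2$, hence limits of automorphisms, hence embeddings; a short diagram chase then shows $\phi$ and $\psi$ are themselves embeddings. A back-and-forth argument between $\relC_1$ and $\relC_2$ --- whose decisive point is that in a model-complete core every tuple-map realized by an endomorphism is already realized by an automorphism, so the relevant compositions with $\phi$ and $\psi$ can always be straightened by automorphisms --- then yields an isomorphism $\relC_1 \cong \relC_2$. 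The monoid-theoretic thread running through the argument is the interaction between $\End(\relA)$ and $\Aut(\relA)$: endomorphisms collapse the orbit structure, model-complete cores are exactly the structures admitting no further collapse, and such a collapse limit exists and is essentially unique.
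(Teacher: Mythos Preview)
Your preorder picture and the uniqueness argument are correct and standard; the paper likewise defers uniqueness to previous proofs. The gap is exactly where you flag it, and it is genuine. Your construction $e_{i+1} = f_i \circ e_i$, with tasks that ``collapse avoidable orbit differences'', will if made precise produce an element of a minimal closed left ideal of $\End(\relA)$, and the range $Y = e(A)$ will then have the property that every endomorphism of $\relA[Y]$ is a self-embedding. But that is only half of being a model-complete core: you also need every self-embedding of $\relA[Y]$ to be a limit of automorphisms, i.e., $\relA[Y]$ must have \emph{enough} automorphisms. Nothing in your sketch supplies them, and the difficulty is intrinsically circular---to know which tasks to schedule you would need to know $\Aut(\relA[e(A)])$, which depends on the very $e$ you are building.

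The paper resolves this by decoupling the two requirements. Step one is essentially your minimal-collapse idea, phrased cleanly: factor $\End(\relA)$ by the two-sided reachability relation $\sim$, observe that the quotient is compact, and extract via Zorn a minimal non-empty closed left ideal $I$; any $g \in I$ has range $Y$ with the no-further-collapse property. Step two---the idea missing from your sketch---is a Fra\"{i}ss\'{e} argument: rather than using $Y$ itself, one takes the Fra\"{i}ss\'{e} limit $\mathbb{X}$ of the finite induced substructures of $Y$ (in a canonical relational language with one predicate per $\sim_n$-class), verifies amalgamation from the minimality of $I$, and then embeds $\mathbb{X}$ back into $Y \subseteq A$. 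Homogeneity of the Fra\"{i}ss\'{e} limit now guarantees the automorphisms you were missing, and the induced substructure of $\relA$ on $X$ is the model-complete core. The paper explicitly remarks that in general no endomorphism of $\relA$ maps onto the core's domain, so the framing ``core $= \relA[e(A)]$'' is not the route taken; even if that framing can ultimately be made to work, it would require its own argument beyond what you have written.
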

The idea leading to the first conjecture is that the complexity of the CSP of a structure $\relA$ in the range of the conjecture is determined by which finite structures $\relC$ have a primitive positive (pp-) interpretation with parameters in its model-complete core $\relB$. This approach builds on two facts: the first fact being that homomorphically equivalent structures have the same primitive positive theory, and hence $\relA$ and $\relB$ have equal CSPs; and the second fact being that if a structure $\relC$ has a primitive positive interpretation with parameters in an $\omega$-categorical model-complete core $\relB$, then $\CSP(\relC)$ reduces to $\CSP(\relB)$ in polynomial time. It is a well-known fact that the structure 
$$\relS:=(\{0,1\};\{(0,0,1),(0,1,0),(1,0,0)\})$$ pp-interprets all finite structures, and that its CSP is NP-complete.

\begin{conj}\label{conj:old}
Let $\relA$ be a reduct of a finitely bounded homogeneous structure, and let $\relB$ be its model-complete core. Then one of the following holds.
\begin{itemize}
\item[(i)] $\relB$ pp-interprets $\relS$ with parameters (and consequently, $\CSP(\relA)$ is NP-complete).
\item[(ii)] $\CSP(\relA)$ is polynomial-time solvable.
\end{itemize}
\end{conj}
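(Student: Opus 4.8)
The plan is to handle the two alternatives separately and to observe first that the parenthetical assertion in~(i) is the easy, already-known half: if $\relB$ pp-interprets $\relS$ with parameters, then since $\relB$ is an $\omega$-categorical model-complete core, $\CSP(\relS)$ reduces to $\CSP(\relB)$ in polynomial time, and $\CSP(\relB)=\CSP(\relA)$ by homomorphic equivalence, so $\CSP(\relA)$ inherits NP-hardness (and NP-membership from finite boundedness). Hence the genuine content of the conjecture is the \emph{dichotomy}: assuming $\relB$ does \emph{not} pp-interpret $\relS$ with parameters, one must exhibit a polynomial-time decision procedure for $\CSP(\relB)$, and equivalently for $\CSP(\relA)$.

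For the tractable case I would take the algebraic route. The first step is to translate the failure of pp-interpretability of $\relS$ with parameters into a statement about the polymorphism clone $\Pol(\relB_c)$ of $\relB$ expanded by all its elements as constants: by the infinite-domain version of the Barto--Op\v{r}\v{s}al--Pinsker theory, combined with Ol\v{s}\'{a}k's minimal-identity theorem, this non-interpretability should be equivalent to $\Pol(\relB_c)$ satisfying a single fixed non-trivial linear identity modulo outer embeddings (a pseudo-Siggers, or pseudo-Ol\v{s}\'{a}k, identity), which over a model-complete core can be promoted to a genuine identity; these are exactly the tools the present paper develops, and they reduce the conjecture to the implication ``$\Pol(\relB_c)$ satisfies such an identity $\Rightarrow \CSP(\relB)\in\mathrm P$.''

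To prove that implication I would mirror the finite-domain architecture and stratify further: separate the ``bounded width'' regime, governed by weak near-unanimity / cyclic-type terms on $\Pol(\relB_c)$, from the ``few subpowers / Maltsev'' regime, governed by edge or Maltsev terms, together with a combined regime, and in each run the corresponding algorithm — local-consistency (Datalog) checking in the first case, coset/compact-representation maintenance in the second — arguing that $\omega$-categoricity keeps everything finitary, since ``the finite domain'' is replaced by ``the finitely many orbits of tuples'', so consistency tests and the invariants propagated by the algorithm stay effectively computable. A general instance of $\CSP(\relB)$ would then be reduced to these regimes along a chain of orbit-equivalences/congruences by a Zhuk-style or Bulatov-style decomposition, with finite boundedness used to certify no-instances by obstructions of bounded size.

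The main obstacle is precisely this tractability implication, and it is a real one: over finite domains the analogous statement is the Feder--Vardi dichotomy, settled only by the deep independent work of Bulatov and Zhuk, and no infinite-domain analogue of their algorithms is currently available. Concretely one must give meaning to and control ``subalgebras'', ``absorption'', and ``linear Maltsev conditions surviving a decomposition'' for the infinite algebras $\Pol(\relB_c)$ while keeping all of this effective via $\omega$-categoricity, and one must ensure that no obstruction to tractability slips past the algebraic reduction — which is exactly where this paper's counterexample shows that bare $\omega$-categoricity does not suffice, so finite boundedness has to enter the algorithm essentially. I therefore do not expect a short proof; the contribution of the present article is the reduction machinery (linking identities to orbit growth, and collapsing the two conjectures into one) that isolates the implication above as the clean remaining bottleneck.
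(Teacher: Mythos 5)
This statement is a \emph{conjecture} (Conjecture~\ref{conj:old}), not a theorem of the paper: the paper proves no such dichotomy and only establishes results \emph{about} the conjecture (its equivalence with Conjecture~\ref{conj:new}, partial linearization results, and a counterexample delimiting the methods). Your proposal is likewise not a proof. You correctly dispatch the hardness half, which is indeed the known, easy direction, and you correctly identify that the entire content lies in the tractability implication; but your treatment of that implication is a research program, not an argument, and you say so yourself. A sketch that ends with ``no infinite-domain analogue of their algorithms is currently available'' cannot be evaluated as a proof of the statement; the gap is the whole theorem.

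Beyond the admitted gap, one intermediate step you assert is concretely wrong as stated. You claim that the pseudo-Siggers identity modulo outer embeddings ``over a model-complete core can be promoted to a genuine identity.'' The paper explicitly leaves this open (Section~\ref{subsect:linearization}: it is unknown whether non-trivial linear identities modulo outer embeddings imply non-trivial linear identities, even whether an h1 clone homomorphism to $\mathbf 1$ implies a uniformly continuous one), and Theorem~\ref{thm:counterexample} shows the promotion \emph{fails} for general $\omega$-categorical model-complete cores: the expanded atomless Boolean algebra has a Siggers term modulo outer embeddings yet admits a (uniformly continuous) h1 clone homomorphism to $\mathbf 1$, so it satisfies no non-trivial set of linear identities at all. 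What the paper does prove for the actual range of the conjecture (Corollary~\ref{cor-equivconjectures}, via the orbit-growth bound of Theorem~\ref{thm:equivalence}) is only the weaker statement that the pseudo-Siggers condition excludes a \emph{uniformly continuous} h1 clone homomorphism to $\mathbf 1$; genuine linear identities are obtained only in special cases (Theorems~\ref{thm:finitelybounded} and~\ref{thm:randomgraph}) under hypotheses stronger than a pseudo-Siggers term. So even the algebraic preprocessing you propose before the (missing) algorithm rests on an open problem.
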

From our remarks above it follows that if condition (i) in Conjecture~\ref{conj:old} holds, then $\CSP(\relA)$ is indeed NP-complete. What remains to prove is that if this condition is not satisfied, then $\CSP(\relA)$ is tractable. The following equivalent conditions have been established for this situation via the polymorphism clone $\Pol(\relB)$ of a structure $\relB$ ((ii) in~\cite{Topo-Birk}, and (iii), (iv) in~\cite{BartoPinskerDichotomy}). We denote the clone of projections on the set $\{0,1\}$ by $\mathbf 1$; then $\mathbf 1=\Pol(\relS)$.

\begin{thm}\label{thm:siggers}
Let $\relB$ be an $\omega$-categorical model-complete core. The following are equivalent.
\begin{itemize}
\item[(i)] $\relB$ does not pp-interpret $\relS$ with parameters.
\item[(ii)] No stabilizer of $\Pol(\relB)$ maps homomorphically and continuously to $\mathbf 1$.
\item[(iii)] No stabilizer of $\Pol(\relB)$ maps homomorphically to $\mathbf 1$.
\item[(iv)] $\Pol(\relB)$ has a Siggers term modulo outer embeddings, i.e., there exist $e_1,e_2, f\in\Pol(\relB)$ such that the identity 
$$
e_1\circ f(x,y,x,z,y,z)= e_2\circ f(y,x,z,x,z,y)
$$
holds in $\Pol(\relB)$.
\end{itemize}
\end{thm}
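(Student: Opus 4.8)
\medskip
\noindent\textbf{Proof plan.} I would establish the cycle of implications (iii)$\Rightarrow$(ii)$\Rightarrow$(i)$\Rightarrow$(iv)$\Rightarrow$(iii). The implication (iii)$\Rightarrow$(ii) is immediate, a continuous homomorphism being in particular a homomorphism. For (ii)$\Rightarrow$(i) I would invoke the topological version of Birkhoff's theorem for primitive positive interpretations of $\omega$-categorical structures: a finite structure $\relC$ has a pp-interpretation with parameters in $\relB$ if and only if some stabilizer of $\Pol(\relB)$ admits a continuous clone homomorphism to $\Pol(\relC)$. Applying this with $\relC=\relS$ and using that $\Pol(\relS)=\mathbf 1$ is the clone of projections --- so that any clone homomorphism into $\mathbf 1$ is automatically surjective --- gives exactly (i)$\Leftrightarrow$(ii); one only observes that a pp-interpretation mentions finitely many parameters, so the witnessing homomorphism lives on a single stabilizer.

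For (iv)$\Rightarrow$(iii) I would argue by contradiction. Suppose $\Pol(\relB)$ has a Siggers term modulo outer embeddings, witnessed by self-embeddings $e_1,e_2$ and $f\in\Pol(\relB)$, and suppose a stabilizer $\mathscr{S}=\Pol(\relB,b_1,\dots,b_k)$ carries a clone homomorphism $\xi\colon\mathscr{S}\to\mathbf 1$. Since $\relB$ is a model-complete core, every endomorphism preserves the orbits of tuples under $\Aut(\relB)$; in particular the endomorphism $x\mapsto e_1(f(x,\dots,x))$ does, so there is $\beta\in\Aut(\relB)$ mapping $e_1(f(b_i,\dots,b_i))$ to $b_i$ for every $i$. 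As $e_1$ and $e_2$ agree on the image of the diagonal of $f$ (set $x=y=z$ in the Siggers identity), the six-ary operations $h:=\beta\circ e_1\circ f$ and $h':=\beta\circ e_2\circ f$ both fix every $b_i$, hence lie in $\mathscr{S}$, and postcomposing the Siggers identity with $\beta$ yields $h(x,y,x,z,y,z)=h'(y,x,z,x,z,y)$. Applying $\xi$ and minor-preservation to these two equal ternary members of $\mathscr{S}$ gives $\delta(\xi(h))=\delta'(\xi(h'))$ for the index tuples $\delta=(1,2,1,3,2,3)$ and $\delta'=(2,1,3,1,3,2)$. It then remains to see that $\xi(h)=\xi(h')$: since $e_1,e_2$ lie in the closure of $\Aut(\relB)$, for each finite $F\subseteq B$ one finds $\gamma_F\in\Aut(\relB)$ fixing every $b_i$ with $h'$ and $\gamma_F\circ h$ agreeing on $F$, and a K\"onig's lemma argument over the finite subsets --- the branching finite by $\omega$-categoricity --- produces a unary $\mu\in\mathscr{S}$ with $h'=\mu\circ h$, whence $\xi(h')=\xi(h)$. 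Thus $\delta$ and $\delta'$ agree in coordinate $\xi(h)\in\{1,\dots,6\}$, contradicting the fact that they disagree in every coordinate.

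The implication (i)$\Rightarrow$(iv) is the heart of the matter and is where I expect the main obstacle. It is an $\omega$-categorical, ``modulo outer embeddings'' analogue of Siggers's theorem. Assuming $\relB$ does not pp-interpret $\relS$ with parameters, hence (by the Birkhoff theorem above) that no stabilizer of $\Pol(\relB)$ has a continuous clone homomorphism to $\mathbf 1$, I would first localize this to the statement that on every finite substructure the relevant finite operation clone does not map to $\mathbf 1$ --- here the parameters and the model-complete core property are used to handle the lack of idempotency, essentially by treating the orbits of $\Aut(\relB)$ as additional relations. Applying the finite Siggers theorem (in a form that produces the identity only up to postcomposition with a local self-embedding) on each finite substructure gives local approximate Siggers operations, which one then amalgamates by a K\"onig's lemma argument over the directed system of finite substructures --- finite branching again coming from $\omega$-categoricity --- extracting in the limit a genuine $f\in\Pol(\relB)$ together with limit self-embeddings $e_1,e_2$ witnessing the identity. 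The delicate points are the localization step and, above all, carrying the correction embeddings coherently through the compactness limit so that a single identity survives; it is precisely the impossibility of achieving the Siggers identity on the nose over an infinite domain that forces the outer embeddings into the statement.
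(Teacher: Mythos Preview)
The paper does not prove this theorem; it is quoted in the introduction as an established result, with the equivalences attributed to \cite{Topo-Birk} (for the equivalence of (i) and (ii)) and \cite{BartoPinskerDichotomy} (for (iii) and (iv)). There is therefore no proof in the present paper to compare your proposal against.

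That said, your outline is essentially the approach of those references. The implications (iii)$\Rightarrow$(ii)$\Rightarrow$(i) are standard via topological Birkhoff. Your argument for (iv)$\Rightarrow$(iii) is correct and is exactly the fact the paper later invokes without proof in Lemma~\ref{lem:infiniteambiguity} (``linear identities modulo outer functions which hold in a model-complete core also hold in all of its stabilizers''); your use of the diagonal to land $h,h'$ in the stabilizer, and the local-to-global step producing $\mu\in\mathscr{S}$ with $h'=\mu\circ h$, are the right mechanisms. For (i)$\Rightarrow$(iv) your sketch --- localize using that orbits are pp-definable in a model-complete core, apply the finite Siggers theorem on each finite piece, then run a K\H{o}nig's lemma/compactness argument to globalize while tracking the outer correction maps --- is indeed the strategy of \cite{BartoPinskerDichotomy}, and you correctly identify the passage of the outer embeddings through the limit as the place where the genuine work lies.
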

Observe that the very recent condition~(iv) turns, for the first time, the supposed tractability criterion of Conjecture~\ref{conj:old} into a positive statement, nourishing the hope for a positive answer to the conjecture.

\subsection{The second conjecture}\label{subsect:second}

The second and younger conjecture was born from the observation that the usage of homomorphic equivalence and pp-interpretations might not be optimal in the order which leads to Conjecture~\ref{conj:old}, as the crucial structure $\relS$ might, for example, be homomorphically equivalent to a structure with a pp-interpretation in $\relA$, but not pp-interpretable with parameters by the model-complete core of $\relA$. This suggests the following weaker conjecture, which does use the reductions by homomorphic equivalence and pp-interpretations in the best possible way~\cite{wonderland}.

\begin{conj}\label{conj:new}
Let $\relA$ be a reduct of a finitely bounded homogeneous structure. Then one of the following holds.
\begin{itemize}
\item[(i)] $\relS$ is homomorphically equivalent to a structure with a pp-interpretation in $\relA$ (and consequently, $\CSP(\relA)$ is NP-complete).
\item[(ii)] $\CSP(\relA)$ is polynomial-time solvable.
\end{itemize}
\end{conj}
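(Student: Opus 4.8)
The plan is to concentrate on the nontrivial implication: if (i) fails, then $\CSP(\relA)$ is polynomial-time solvable. The converse implication — that (i) forces NP-completeness — is already contained in the facts recalled before the statement: if $\relC$ has a pp-interpretation in $\relA$ then $\CSP(\relC)$ reduces to $\CSP(\relA)$ in polynomial time, homomorphically equivalent structures have the same CSP, $\relS$ pp-interprets every finite structure and $\CSP(\relS)$ is NP-complete, and $\CSP(\relA)$ lies in NP since $\relA$ is a reduct of a finitely bounded homogeneous structure; chaining these gives NP-completeness of $\CSP(\relA)$.

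First I would pass to the model-complete core. By Theorem~\ref{thm:cores}, $\relA$ is homomorphically equivalent to an $\omega$-categorical model-complete core $\relB$, and hence $\CSP(\relA)=\CSP(\relB)$ because homomorphically equivalent structures share their primitive positive theory. The next step is to convert the negation of~(i) into an algebraic statement about $\relB$: namely, to show that if $\relS$ is not homomorphically equivalent to a structure with a pp-interpretation in $\relA$, then $\relB$ does not pp-interpret $\relS$ with parameters. Proving this reduction is exactly what identifies Conjecture~\ref{conj:new} with Conjecture~\ref{conj:old}, and is where orbit-growth information about $\Aut(\relA)$ extracted from identities in $\Pol(\relA)$ enters. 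Granting it, Theorem~\ref{thm:siggers} lets us replace the hypothesis by any of its equivalents: $\Pol(\relB)$ has a Siggers operation modulo outer embeddings, or no stabilizer of $\Pol(\relB)$ maps clone-homomorphically and continuously onto $\mathbf 1$. So the task becomes: an $\omega$-categorical model-complete core with a nontrivial polymorphism clone in this sense has a tractable CSP.

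The remaining and genuinely hard step is to turn this algebraic nontriviality into an algorithm, following the template of the finite-domain dichotomy theorem of Bulatov and of Zhuk. One first exploits $\omega$-categoricity to represent an instance of $\CSP(\relB)$ on $n$ variables by a finite object — the orbits of tuples of variables under $\Aut(\relB)$, of which there are finitely many up to any fixed arity — and runs a local-consistency (bounded-width) procedure on this finite representation. When $\Pol(\relB)$ admits no ``ability to count'' — detectable through the further linear identities forced by the Siggers term — one argues, via an absorption theory developed inside the oligomorphic clone $\Pol(\relB)$, that local consistency already decides the instance, with finite bounds on the consistency level extracted from the finite boundedness of the homogeneous cover. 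When counting is present one isolates, inside each connected piece of a consistent instance, a congruence-like structure carrying a Maltsev-type polymorphism, solves it by a few-subpowers/Gaussian-elimination-style algorithm adapted to the infinite domain, and recurses on the quotient. Each of these ingredients — closed subalgebras, topological quotients preserving $\omega$-categoricity, infinitary absorption, cyclic and Siggers terms in oligomorphic clones — has to be re-founded over infinite domains.

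The main obstacle is precisely this algorithmic half: there is at present no infinite-domain analogue of the Bulatov--Zhuk dichotomy, and essentially every reduction of the finite theory (idempotent reductions, subpowers, quotients, the derivation of bounded width from the absence of a cyclic obstruction, the structure theory of finite algebras) rests on finiteness and must be re-proved under topological and oligomorphic hypotheses, where even stating the right theorems is part of the difficulty. A secondary but real obstacle is the presence of the outer embeddings $e_1,e_2$ in condition~(iv) of Theorem~\ref{thm:siggers}: whether $\omega$-categoricity alone permits absorbing them into $f$ — equivalently, whether the tractability criteria of Conjectures~\ref{conj:old} and~\ref{conj:new} genuinely coincide at the algorithmic level — is exactly the point at which one must bring in the Ramsey property of the homogeneous cover and the canonical uniformity on $\Pol(\relB)$, as developed in this article.
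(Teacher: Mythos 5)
The statement you are addressing is a conjecture, not a theorem: the paper states Conjecture~\ref{conj:new} as an open problem and does not prove it, so there is no proof in the paper to compare yours against. What the paper actually establishes is that Conjecture~\ref{conj:new} is \emph{equivalent} to Conjecture~\ref{conj:old} (Corollary~\ref{cor-equivconjectures}), by showing via the orbit-growth argument of Section~\ref{sect:equiv} (Theorem~\ref{thm:equivalence}, Corollary~\ref{cor:equiv-cores}) that for reducts of homogeneous structures in a finite language the tractability criteria of the two conjectures coincide; it proves neither conjecture. Your proposal correctly disposes of the easy direction (condition~(i) implies NP-completeness, by chaining the standard reductions), and it correctly identifies that passing from the failure of~(i) to the condition ``$\relB$ does not pp-interpret $\relS$ with parameters'' is exactly the content of the paper's equivalence result --- and you are right that this step genuinely needs the orbit-growth bound coming from the finite-language hypothesis, since Theorem~\ref{thm:counterexample} shows it fails for general $\omega$-categorical model-complete cores.

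The genuine gap is everything after that point. Your ``remaining and genuinely hard step'' --- turning a Siggers term modulo outer embeddings in $\Pol(\relB)$ into a polynomial-time algorithm for $\CSP(\relB)$ --- \emph{is} the conjecture; the paragraph sketching local consistency, absorption, and few-subpowers over an oligomorphic clone is a research programme, not an argument, and you acknowledge yourself that no infinite-domain analogue of the Bulatov--Zhuk machinery exists and that each ingredient ``has to be re-founded over infinite domains.'' A proof attempt that reduces the statement to an open problem of at least the same difficulty does not constitute a proof. If your goal was to prove what the paper proves about this conjecture --- namely its equivalence with Conjecture~\ref{conj:old} --- then the missing substance is the argument of Lemmas~\ref{lem:doubleexp} and~\ref{lem:infiniteambiguity}: one shows that a uniformly continuous h1 clone homomorphism to $\mathbf 1$ together with a non-trivial linear identity modulo outer unary functions in a model-complete core forces infinite ambiguity degree and hence at least double exponential orbit growth, which is then ruled out by the finite relational language; none of this appears in your text beyond the remark that ``orbit-growth information enters.''
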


It has been remarked in~\cite{wonderland} that the two conjectures are equivalent for finite structures. While more likely to be true, one disadvantage of Conjecture~\ref{conj:new} is that there is no unique optimal structure that can be pp-interpreted in $\relA$, as opposed to the model-complete core for homomorphic equivalence. Similarly to (ii) in Theorem~\ref{thm:siggers}, the authors of~\cite{wonderland} did however provide an equivalent tractability criterion using identities and topology.

\begin{thm}\label{thm:h1}
Let $\relA$ be $\omega$-categorical. The following are equivalent.
\begin{itemize}
\item[(i)] $\relS$ is not homomorphically equivalent to a structure with a pp-interpretation in $\relA$.
\item[(ii)] $\Pol(\relA)$ does not have a uniformly continuous h1 clone homomorphism to $\mathbf 1$.
\end{itemize}
\end{thm}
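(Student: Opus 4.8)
The plan is to prove both implications by explicit back-and-forth constructions relating, on one side, structures that are homomorphically equivalent to $\relS$ and pp-interpretable in $\relA$, and, on the other, uniformly continuous h1 clone homomorphisms $\Pol(\relA)\to\mathbf 1$. I will use three standard facts: that $\mathbf 1=\Pol(\relS)$; that an h1 clone homomorphism is precisely an arity-preserving map that commutes with taking minors; and the description of pp-definability over an $\omega$-categorical structure — a finitary relation on $A$ is pp-definable in $\relA$ if and only if it is preserved by $\Pol(\relA)$ (preservation by $\Pol(\relA)$ forces preservation by $\Aut(\relA)$, hence, by $\omega$-categoricity, that the relation is a finite union of orbits and thus first-order definable, and from there the pp-definition is recovered).

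First, suppose $\relS$ is homomorphically equivalent to a structure $\relC$ pp-interpretable in $\relA$, say via homomorphisms $u\colon\relS\to\relC$ and $v\colon\relC\to\relS$; I construct a uniformly continuous h1 clone homomorphism $\Pol(\relA)\to\mathbf 1$. A $d$-dimensional pp-interpretation of $\relC$ in $\relA$, with coordinate map $\pi\colon D\twoheadrightarrow C$ and $D\subseteq A^d$, yields in the usual way a clone homomorphism $\mu\colon\Pol(\relA)\to\Pol(\relC)$ sending $f$ to the polymorphism of $\relC$ induced by applying $f$ coordinatewise through $\pi$. Fixing $\pi$-preimages for a finite subset of $C$, the values of $\mu(f)$ on that subset depend only on the restriction of $f$ to a finite set of tuples that does not depend on the arity, so $\mu$ is uniformly continuous. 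The homomorphisms $u,v$ give the reflection $\xi\colon\Pol(\relC)\to\Pol(\relS)=\mathbf 1$, $\xi(g):=v\circ g\circ(u,\dots,u)$, which is an h1 clone homomorphism and is uniformly continuous because $\xi(g)$ depends only on the restriction of $g$ to the finite set $u(S)$. Then $\xi\circ\mu$ is a uniformly continuous h1 clone homomorphism $\Pol(\relA)\to\mathbf 1$.

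Conversely, suppose $\xi\colon\Pol(\relA)\to\mathbf 1$ is a uniformly continuous h1 clone homomorphism; I construct a structure $\relC$ that is pp-interpretable in $\relA$ and homomorphically equivalent to $\relS$. Uniform continuity yields a single finite $S_0\subseteq A$ such that $\xi(f)$ is determined by $f\rest_{S_0^n}$ for every arity $n$. Take $\relC$ to be the finite-dimensional ``restricted free structure over $\Pol(\relA)$ guided by $\relS$'': its domain is $D:=\{g\rest_{S_0^2}:g\in\Pol(\relA)^{(2)}\}\subseteq A^{|S_0|^2}$, and its single ternary relation is
$$
R^{\relC}:=\bigl\{\bigl(h(x_0,x_0,x_1)\rest_{S_0^2},\,h(x_0,x_1,x_0)\rest_{S_0^2},\,h(x_1,x_0,x_0)\rest_{S_0^2}\bigr):h\in\Pol(\relA)^{(3)}\bigr\},
$$
the three binary minors corresponding to the three tuples of $R^{\relS}$. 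Both $D$ and $R^{\relC}$ are of the form $\{h(\bar c_1,\dots,\bar c_k):h\in\Pol(\relA)^{(k)}\}$ for fixed tuples $\bar c_i$ with entries in $S_0$, hence are preserved by $\Pol(\relA)$ and so pp-definable in $\relA$; this is a pp-interpretation of $\relC$ in $\relA$ of finite dimension $|S_0|^2$. For $\relC\leftrightarrows\relS$: sending $0$ and $1$ to the restrictions to $S_0^2$ of the two binary projections is a homomorphism $\relS\to\relC$, the membership witnesses for the three tuples of $R^{\relS}$ being the three ternary projections; and sending $g\rest_{S_0^2}$ to the element of $\{0,1\}$ named by the binary projection $\xi(g)$ is well defined by the choice of $S_0$ and is a homomorphism $\relC\to\relS$, since for a tuple of $R^{\relC}$ witnessed by $h$ the minor-preservation of $\xi$ computes all three images from the single projection $\xi(h)$, whose three possible values reproduce exactly the three tuples of $R^{\relS}$.

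I expect the main obstacle to be the finite-dimensionality in the second construction. Over a finite $\relA$ one uses the free structure over $\Pol(\relA)$, with domain all of $\Pol(\relA)^{(2)}$; over an infinite $\relA$ that object sits inside $A^{A^2}$ and is not a legitimate, finite-dimensional interpretation, and uniform continuity of $\xi$ is exactly what allows truncation to $S_0$. The accompanying delicate point is checking that the truncated domain and relation remain pp-definable in $\relA$, which relies on the $\omega$-categorical characterisation of pp-definability through $\Pol(\relA)$-invariance, including the reduction of an a priori infinitary pp-definition to a finitary one.
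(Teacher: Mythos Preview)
The paper does not prove this theorem; it is quoted from~\cite{wonderland} without proof. Your argument is correct and is essentially the one given there: one direction composes the (uniformly continuous) clone homomorphism induced by a pp-interpretation with the h1 clone homomorphism coming from the reflection along $u,v$; the other direction builds the finite-dimensional free structure of $\Pol(\relA)$ generated by $\relS$, using uniform continuity to truncate to $S_0$ and the $\omega$-categorical Pol--Inv correspondence to get pp-definability of the resulting domain and relation.

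One small point you leave implicit: for the map $\relS\to\relC$ to be non-degenerate you need $\pi^2_1\rest_{S_0^2}\neq\pi^2_2\rest_{S_0^2}$, i.e.\ $|S_0|\ge 2$. This is automatic, since an h1 clone homomorphism to $\mathbf 1$ is strong and hence separates the two binary projections; alternatively one may simply enlarge $S_0$.
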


Note that a positive statement equivalent to the statements of Theorem~\ref{thm:h1}, i.e., an analogue of (iv) in Theorem~\ref{thm:siggers} is missing, leaving Conjecture~\ref{conj:new} somewhat less accessible than Conjecture~\ref{conj:old}.

\subsection{Equivalence of the conjectures}\label{subsect:intro:equivalence}

The results known so far concerning identities in polymorphism clones were shown for all $\omega$-categorical model-complete cores, rather than for the considerably more restricted class of structures concerned by the conjectures; it is probably fair to say that it seemed inconceivable that assumptions like finite boundedness would be useful when proving such structural results (these assumptions are, however, essential for the algorithmic aspects of the CSPs). Therefore, the most likely way of showing the equivalence of the conjectures seemed by proving that for all $\omega$-categorical model-complete cores, conditions (iv) in Theorem~\ref{thm:siggers} and (ii) in Theorem~\ref{thm:h1} are equivalent: that is, since the other implication is well-known and easy, that a Siggers term modulo outer embeddings prevents a uniformly continuous h1 clone homomorphism to $\mathbf 1$. 

We will, however, provide a counterexample, basically the atomless Boolean algebra with the right choice of relations, showing that this is not true in general.

\begin{thm}\label{thm:counterexample}
There exists an $\omega$-categorical model-complete core structure $\relA$ such that:
\begin{itemize}
\item[(i)] No stabilizer of $\Pol(\mathbb A)$ has a continuous clone homomorphism to $\mathbf 1$ (and hence, by Theorem~\ref{thm:siggers},  $\Pol(\relA)$ has a Siggers term modulo outer embeddings).
\item[(ii)] $\Pol(\mathbb A)$ has a uniformly continuous h1 clone homomorphism to $\mathbf 1$.
\end{itemize}
\end{thm}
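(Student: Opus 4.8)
The plan is to take for $\relA$ the countable atomless Boolean algebra, viewed as the algebra of clopen subsets of the Cantor space $2^{\omega}$, equipped with a carefully chosen family of relations invariant under $\Aut(\mathbb B)$, where $\mathbb B$ denotes the Boolean algebra itself and $\Aut(\mathbb B)$ its group of Boolean-algebra automorphisms (equivalently, the homeomorphisms of $2^{\omega}$ acting on clopen sets). Since $\mathbb B$ is $\omega$-categorical and $\Aut(\mathbb B)$ is oligomorphic, any such expansion remains $\omega$-categorical; the relations will be picked rich enough that $\Aut(\relA)=\Aut(\mathbb B)$ and that any endomorphism of $\relA$ restricts on every finite set to a partial automorphism of $\mathbb B$, whence $\End(\relA)=\overline{\Aut(\mathbb B)}$ and $\relA$ is a model-complete core. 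On the one hand we want a ``1-in-3''-type structure to live in $\relA$ up to homomorphic equivalence and pp-interpretation, which gives~(ii) via Theorem~\ref{thm:h1}; on the other hand we want a ``generic'' polymorphism forcing a linear identity modulo outer embeddings, which gives~(i) via Theorem~\ref{thm:siggers}. The conceptual crux — and the reason $\omega$-categoricity suffices here while finite boundedness would not — is that these demands are reconcilable precisely because ``modulo outer embeddings'' is a genuinely weaker, non-height-$1$ condition; a \emph{genuine} Siggers, or even a commutative, polymorphism would contradict the existence of an h1 clone homomorphism to $\mathbf 1$.

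For~(ii) I would work with the ternary relation $R$ consisting of all triples $(x,y,z)$ of clopen subsets of $2^{\omega}$ that partition $2^{\omega}$, allowing empty blocks; the relations of $\relA$ are arranged so that a structure homomorphically equivalent to $(B;R)$ (possibly defined on a power of $\mathbb B$ via a quotient, so as not to over-constrain $\Pol(\relA)$) is pp-interpretable in $\relA$. Then $(B;R)$ is homomorphically equivalent to $\relS$: fixing any point $p\in 2^{\omega}$, the map sending a clopen set to $1$ if it contains $p$ and to $0$ otherwise is a homomorphism $(B;R)\to\relS$, since a partition of $2^{\omega}$ into at most three blocks has exactly one block containing $p$; and the map sending $0,1\in\relS$ to the bottom and top of $\mathbb B$ is a homomorphism $\relS\to(B;R)$, as $(1,0,0),(0,1,0),(0,0,1)\in R$. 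Thus $\relS$ is homomorphically equivalent to a structure pp-interpretable in $\relA$, and Theorem~\ref{thm:h1} yields the uniformly continuous h1 clone homomorphism $\Pol(\relA)\to\mathbf 1$ required in~(ii).

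For~(i) I would exhibit a Siggers operation modulo outer embeddings and invoke the equivalence of~(i),~(ii) and~(iv) in Theorem~\ref{thm:siggers}. Here the relations of $\relA$ are kept lean enough that $\Pol(\relA)$ still contains a suitable ``generic'' operation $f$ on clopen sets, built via a fixed homeomorphism $2^{\omega}\cong(2^{\omega})^{6}$ with the feature that permuting the arguments of $f$ amounts to permuting the six Cantor-space coordinates, i.e.\ $f(x_{\rho(1)},\dots,x_{\rho(6)})=\sigma_{\rho}\bigl(f(x_{1},\dots,x_{6})\bigr)$ for automorphisms $\sigma_{\rho}\in\Aut(\mathbb B)$. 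Specialising to $\rho=(1\,2)(3\,4)(5\,6)$: applying these transpositions to $(x,y,x,z,y,z)$ produces $(y,x,z,x,z,y)$, so $f(y,x,z,x,z,y)=\sigma_{\rho}\bigl(f(x,y,x,z,y,z)\bigr)$, that is $e_{1}\circ f(x,y,x,z,y,z)=e_{2}\circ f(y,x,z,x,z,y)$ with $e_{1}:=\sigma_{\rho}\in\Aut(\mathbb B)\subseteq\Pol(\relA)$ and $e_{2}:=\id$. This is condition~(iv) of Theorem~\ref{thm:siggers}, hence condition~(i) of the present statement.

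The main obstacle is that the three requirements — model-complete core, (i), and (ii) — pull against each other and must be satisfied by a single choice of relations. Rich relations are needed to pin down the endomorphism monoid and to carry the partition structure used for~(ii); lean relations are needed so that the generic operation $f$ remains a polymorphism, and in particular so that $\Pol(\relA)$ is not forced down to the clone of Boolean-algebra homomorphisms (which possesses no Siggers term modulo outer embeddings) while also not acquiring a commutative or totally symmetric polymorphism outright (either of which would destroy the h1 clone homomorphism to $\mathbf 1$, hence~(ii); note also that a polymorphism symmetric modulo automorphisms of arity $\ge 2$ is dangerous, since its diagonal tends to collapse orbits and spoil the core property, so the symmetry-implementing automorphisms $\sigma_\rho$ must share a large common fixed subalgebra). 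The delicate heart of the argument is therefore the verification that the atomless Boolean algebra — unlike any finitely bounded homogeneous structure — has exactly enough room for such a choice, controlling simultaneously its endomorphism monoid, its pp-interpretable structures, and its essentially non-unary polymorphisms.
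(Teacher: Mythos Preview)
Your high-level instinct is right: the counterexample is the countable atomless Boolean algebra, and your idea for~(ii) --- map a clopen set to $1$ iff it contains a fixed point $p$ --- is exactly the paper's ultrafilter argument in Stone-dual language. But the proposal is a plan, not a proof: you never actually choose the relations of $\relA$, and you explicitly flag the reconciliation of the three requirements as the ``delicate heart'' left unverified. There is, moreover, a concrete mismatch between your two constructions that blocks any naive completion.

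Your $6$-ary $f$ is the box-product map $(U_1,\dots,U_6)\mapsto h^{-1}[U_1\times\cdots\times U_6]$ induced by a homeomorphism $h\colon 2^\omega\to(2^\omega)^6$. This map does \emph{not} preserve $\lor$ (since $(U_1\times U_2)\cup(V_1\times V_2)\neq(U_1\cup V_1)\times(U_2\cup V_2)$), and a short computation shows it does not preserve your partition relation $R$ either: if $(x_i,y_i,z_i)$ are partitions for $i=1,\dots,6$, the three boxes $\prod x_i$, $\prod y_i$, $\prod z_i$ do not cover $(2^\omega)^6$. So any choice of relations rich enough to pp-define $R$ (which you need for~(ii)) excludes $f$ from $\Pol(\relA)$, killing~(i). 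Relatedly, your parenthetical claim that the clone of Boolean-algebra homomorphisms ``possesses no Siggers term modulo outer embeddings'' is false --- this is precisely what the paper establishes.

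The paper's resolution is far simpler than you anticipate: take $\relA$ to be the Boolean algebra (graphs of $\land,\lor,\neg,0,1$) together with $\neq$. Then $\Pol(\relA)$ consists of the \emph{injective} Boolean-algebra homomorphisms $B^n\to B$, and $\neq$ forces $\End(\relA)=\overline{\Aut(\mathbb B)}$, giving the model-complete core. For~(ii), an ultrafilter $F$ (your point $p$) yields the h1 homomorphism directly: any $n$-ary polymorphism sends the standard basis tuples in $\{0,1\}^n$ to a partition of $1$, exactly one block of which lies in $F$; this is uniformly continuous since it depends only on the restriction to $\{0,1\}^n$. For~(i), the key is the \emph{direct} product isomorphism $B\times B\cong B$ (both are countable atomless), not the free product you implicitly use via the Cantor-space product. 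Writing $B$ as freely generated by $\{c_i\}$, one gets for any finite $C=\{c_1,\dots,c_n\}$ an isomorphism $f\colon B\times B\to B$ fixing each $c_i$ and satisfying $f(x,y)=e\circ f(y,x)$ for the automorphism $e$ negating a single fresh generator. This $f$ is a genuine Boolean-algebra homomorphism, hence a polymorphism preserving all the relations (including $R$), injective, and lies in the stabilizer of $C$; thus every stabilizer contains a binary operation symmetric modulo outer embeddings, and no stabilizer maps to~$\mathbf 1$.
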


Surprisingly, on the other hand, it turns out that every structure $\relA$ which is a counterexample as above must have at least double exponential orbit growth. This is remarkable in that it is the first instance discovered where structural higher-arity information about the polymorphism clone of an $\omega$-categorical structure yields information about its automorphism group.

\begin{thm}\label{thm:equivalence}
Let $\relA$ be a structure with the properties stated in Theorem~\ref{thm:counterexample}. Then its automorphism group $\Aut(\relA)$ must have at least double exponential orbit growth.
\end{thm}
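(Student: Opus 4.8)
We prove the contrapositive: assuming $\relA$ satisfies item~(ii) of Theorem~\ref{thm:counterexample} --- so that $\Pol(\relA)$ carries a uniformly continuous h1 clone homomorphism $\xi\colon\Pol(\relA)\to\mathbf 1$ --- together with the assumption that $\Aut(\relA)$ has \emph{less} than double exponential orbit growth, we construct a continuous clone homomorphism from some stabilizer of $\Pol(\relA)$ to $\mathbf 1$; since $\Pol(\relA,\bar c)$ is a stabilizer of $\Pol(\relA)$, this contradicts item~(i) of Theorem~\ref{thm:counterexample} (cf.\ Theorem~\ref{thm:siggers}), and the theorem follows. Fix a finite set $S\subseteq A$ witnessing the uniform continuity of $\xi$, so that $\xi(f)$ is determined by the restriction $f\rest_{S^n}$ for every $f\in\Pol(\relA)^{(n)}$, and write $m:=|S|$ and $o_\relA(n)$ for the number of orbits of $n$-tuples under $\Aut(\relA)$.

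As a preliminary normalisation, we use $\omega$-categoricity together with uniform continuity to show that $\xi(f)$ in fact depends only on the $\Aut(\relA)$-orbit of the $m^n$-tuple $(f(\bar t))_{\bar t\in S^n}\in A^{m^n}$. Hence for each arity $n$ the restriction $\xi\rest_{\Pol(\relA)^{(n)}}$ factors through a function whose domain has at most $o_\relA(m^n)$ elements, and the analogous statement holds over any expansion $(\relA,\bar c)$ by finitely many constants, whose orbit-counting function is still sub-double-exponential.

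The core of the proof is to upgrade $\xi$ to a genuine clone homomorphism on a stabilizer. Suppose for contradiction that this fails for every finite tuple $\bar c$: then $\xi\rest_{\Pol(\relA,\bar c)}$ is not multiplicative, witnessed by some $f\in\Pol(\relA,\bar c)^{(n)}$ with $\xi(f)=j$ and $g_1,\dots,g_n\in\Pol(\relA,\bar c)^{(k)}$ with $\xi(f\circ(g_1,\dots,g_n))\neq\xi(g_j)$. By the normalisation, such a violation is recorded by the $\Aut(\relA)$-orbit of a single finite tuple --- the values of $f$ on $S^n$, of each $g_i$ on $S^k$, of $f$ at the $m^k$ composition points $(g_1(\bar t),\dots,g_n(\bar t))$, and the constants $\bar c$ --- whose length is bounded by a fixed polynomial in $m^n$, $m^k$ and $|\bar c|$. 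Letting $\bar c$ run through an exhaustion of $A$ by finite sets, and using that enlarging $\bar c$ can only destroy violations, never create them, one shows that the resulting stream of violations consumes, as the arities grow, a doubly exponential number of pairwise $\Aut(\relA)$-inequivalent tuples of only polynomially growing length; this contradicts the assumed orbit growth. Therefore $\xi\rest_{\Pol(\relA,\bar c)}$ is multiplicative for some finite $\bar c$, and --- being h1 and a restriction of the uniformly continuous $\xi$ --- it is the sought continuous clone homomorphism to $\mathbf 1$.

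The main obstacle is the quantitative estimate just invoked: one must prove that the successive multiplicativity violations genuinely exhaust a \emph{doubly exponentially} growing supply of orbit-types of polynomially-long tuples (equivalently, that no bounded number of adjoined constants kills all violations), rather than merely reshuffling finitely many. This requires a careful simultaneous induction on the arities $n,k$ and on $|\bar c|$, in which the double exponential emerges as the composition of two exponential blow-ups: the exponential $m^n$ from restricting an $n$-ary polymorphism to $S^n$, and a further exponential from the number of composition patterns $f\circ(g_1,\dots,g_n)$ that $\xi$ is forced to separate. Making these interact correctly while keeping the counted tuples of polynomial length is the delicate step --- and it is precisely the point at which genuinely higher-arity information about $\Pol(\relA)$ (the incompatibility of an h1 clone homomorphism to $\mathbf 1$ with Siggers-type identities that hold \emph{outright} rather than merely \emph{modulo outer embeddings}) is converted into a statement about the orbit growth of $\Aut(\relA)$.
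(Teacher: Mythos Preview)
Your approach diverges substantially from the paper's, and it contains two genuine gaps.

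The first is the ``normalisation'' step. You assert that $\xi(f)$ depends only on the $\Aut(\relA)$-orbit of the tuple $(f(\bar t))_{\bar t\in S^n}$, so that $\xi$ on $n$-ary functions factors through a set of size at most $o_\relA(m^n)$. But an h1 clone homomorphism preserves only composition with \emph{projections}; there is no reason whatsoever that $\xi(\alpha\circ f)=\xi(f)$ for $\alpha\in\Aut(\relA)$. If $f$ and $f'$ have output tuples on $S^n$ lying in the same orbit, say $\alpha\circ f\rest_{S^n}=f'\rest_{S^n}$, then uniform continuity gives $\xi(\alpha\circ f)=\xi(f')$, not $\xi(f)=\xi(f')$. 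So $\xi$ does not factor through orbits, and the bound you derive from this is unfounded. Neither $\omega$-categoricity nor the model-complete core assumption repairs this.

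The second gap you essentially concede yourself: the entire argument rests on the quantitative claim in your last paragraph, that multiplicativity violations on ever larger stabilizers force doubly exponentially many orbits of tuples of only polynomial length. You describe this as ``the delicate step'' and sketch where the two exponentials should come from, but you do not carry it out, and it is far from clear that it can be carried out. Nothing in your outline explains why distinct violations must live in distinct orbits, nor why the supply of violations grows doubly exponentially rather than being, say, bounded. As written this is a programme, not a proof.

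The paper does something quite different and does not try to upgrade $\xi$ to a clone homomorphism on a stabilizer. From the uniformly continuous h1 homomorphism it extracts (on a finite power of $\Pol(\relA)$) a \emph{retractional witness}: a map $g\colon C\to S$ with $|S|\geq 2$ such that $g\circ t\rest_{S}$ is a projection for every $t$ in the clone. It then defines the \emph{ambiguity degree} of a term $t$ as the number of coordinates $i$ for which some retractional witness makes $g\circ t\rest_{S}$ the $i$-th projection, and proves two lemmas. First, using the model-complete core hypothesis to transfer the non-trivial linear identity modulo outer unaries to stabilizers, any term of ambiguity degree $n$ can be composed into one of degree $2n$, so the degree is unbounded. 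Second, a term $t$ of ambiguity degree at least $2^n$ produces $2^{2^n}-1$ explicit $n$-tuples in pairwise distinct orbits: for each non-empty $R\subseteq S^n$ one applies $t$ componentwise to a list of tuples enumerating $R$, and the retractional witnesses separate the resulting tuples. This yields the double exponential directly and constructively, with no counting of violations and no attempt to make $\xi$ multiplicative.
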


From this, it is straightforward to derive the equivalence of the CSP conjectures, answering Problem~8.3 in~\cite{wonderland} to the positive, and showing that the implication from (4) to (3) in Corollary 5.3 of~\cite{BartoPinskerDichotomy} holds.

\begin{cor}\label{cor-equivconjectures}
Let $\mathbb A$ be a reduct of a structure which is homogeneous in a finite relational language, and let $\mathbb B$ be its model-complete core. Then the following are equivalent.
\begin{itemize}
\item[(i)] Some stabilizer of $\Pol(\mathbb B)$ has a continuous clone homomorphism to $\mathbf 1$.
\item[(ii)] $\Pol(\mathbb A)$ has a uniformly continuous h1 clone homomorphism to $\mathbf 1$.
\item[(iii)] $\Pol(\mathbb B)$ has a uniformly continuous h1 clone homomorphism to $\mathbf 1$.
\end{itemize}
In particular, Conjecture~\ref{conj:old} holds if and only if Conjecture~\ref{conj:new} holds.
\end{cor}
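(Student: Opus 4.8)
The plan is to assemble Corollary~\ref{cor-equivconjectures} from the two new results, Theorem~\ref{thm:counterexample} and Theorem~\ref{thm:equivalence}, together with the already-established characterisations in Theorems~\ref{thm:siggers} and~\ref{thm:h1} and one small orbit-counting observation about model-complete cores. Throughout, recall that $\relB$ is $\omega$-categorical by Theorem~\ref{thm:cores} and that $\relA$, being a reduct of a structure homogeneous in a finite relational language, is $\omega$-categorical as well, so that Theorems~\ref{thm:siggers} and~\ref{thm:h1} apply to both whenever needed.

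First I would handle the implications that need no new input. For (ii) $\Leftrightarrow$ (iii): $\relA$ and $\relB$ are homomorphically equivalent, hence each pp-constructs the other (homomorphic equivalence is a particular pp-construction and pp-constructions compose); therefore $\relS$ is homomorphically equivalent to a structure with a pp-interpretation in $\relA$ if and only if the same holds with $\relB$ in place of $\relA$, and by Theorem~\ref{thm:h1} (applied once to $\relA$ and once to $\relB$) this is exactly the equivalence of (ii) and (iii). For (i) $\Rightarrow$ (iii): by the equivalence of (i) and (ii) in Theorem~\ref{thm:siggers}, condition~(i) of the corollary says precisely that $\relB$ pp-interprets $\relS$ with parameters; since orbits of finite tuples are pp-definable in an $\omega$-categorical model-complete core, the parameters can be absorbed to show that $\relS$ is homomorphically equivalent to a structure pp-interpretable in $\relB$ without parameters, which by Theorem~\ref{thm:h1} is condition~(iii). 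Together these also yield (i) $\Rightarrow$ (ii).

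The substantive direction is (iii) $\Rightarrow$ (i), which I would prove by contradiction. Assume (iii) holds but (i) fails. Then $\relB$ is an $\omega$-categorical model-complete core such that no stabilizer of $\Pol(\relB)$ has a continuous clone homomorphism to $\mathbf 1$, yet $\Pol(\relB)$ has a uniformly continuous h1 clone homomorphism to $\mathbf 1$; that is, $\relB$ satisfies both clauses of Theorem~\ref{thm:counterexample}. By Theorem~\ref{thm:equivalence}, $\Aut(\relB)$ must then have at least double exponential orbit growth. On the other hand, I claim the orbit growth of $\relB$ is bounded above by that of $\relA$, which is strictly below double exponential: indeed $\relA$ is a reduct of a structure homogeneous in a finite relational language of some maximal arity $k$, so the number of orbits of $\Aut(\relA)$ on $n$-tuples is at most the number of labelled structures on $n$ points in that language, hence at most $2^{O(n^{k})}$. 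This contradiction finishes the argument.

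It remains to prove the orbit comparison, which is the one genuinely new step in the derivation. Fix homomorphisms $u \colon \relA \to \relB$ and $v \colon \relB \to \relA$. The key point is that whenever $\bar b, \bar b' \in B^{n}$ and $v(\bar b), v(\bar b')$ lie in a common $\Aut(\relA)$-orbit, then $\bar b, \bar b'$ already lie in a common $\Aut(\relB)$-orbit: picking $\alpha \in \Aut(\relA)$ with $\alpha(v(\bar b)) = v(\bar b')$, both $u \circ \alpha \circ v$ and $u \circ v$ are endomorphisms of $\relB$, and since $\relB$ is a model-complete core we have $\End(\relB) = \overline{\Aut(\relB)}$, so there are $\beta, \gamma \in \Aut(\relB)$ agreeing with these endomorphisms respectively on the (finite) coordinate sets of $\bar b$ and $\bar b'$; then $\beta(\bar b) = u(\alpha(v(\bar b))) = u(v(\bar b')) = \gamma(\bar b')$, whence $\bar b' = \gamma^{-1}\beta(\bar b)$. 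Consequently the fibres of the map $\bar b \mapsto (\Aut(\relA)\text{-orbit of } v(\bar b))$ each lie inside a single $\Aut(\relB)$-orbit, so the number of $\Aut(\relB)$-orbits on $B^{n}$ is at most the number of such fibres, which is at most the number of $\Aut(\relA)$-orbits on $A^{n}$. The main obstacle of the corollary is of course Theorem~\ref{thm:equivalence}, which we are assuming; within the present argument the only delicate points are this orbit comparison and the parameter-absorption step in (i) $\Rightarrow$ (iii). Finally, the statement about the conjectures is immediate: by Theorem~\ref{thm:siggers} clause~(i) of Conjecture~\ref{conj:old} is equivalent to (i) of the corollary, by Theorem~\ref{thm:h1} clause~(i) of Conjecture~\ref{conj:new} is equivalent to (ii) of the corollary, the remaining clause of each conjecture is tractability of $\CSP(\relA)$, and (i) $\Leftrightarrow$ (ii) holds for every structure in the common range of the conjectures; hence the two conjectures assert the same thing.
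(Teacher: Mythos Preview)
Your proposal is correct and follows essentially the same route as the paper: (ii)$\Leftrightarrow$(iii) via \cite{wonderland}/Theorem~\ref{thm:h1}, (i)$\Rightarrow$(iii) via \cite{wonderland}, and (iii)$\Rightarrow$(i) by contraposition using Theorem~\ref{thm:equivalence} together with the fact that $\Aut(\relB)$ has no faster orbit growth than $\Aut(\relA)$, which is sub-double-exponential. You are more explicit than the paper in two places---the paper simply cites \cite{wonderland} for (i)$\Rightarrow$(iii) and defers the orbit-growth comparison to Section~\ref{sect:cores}, whereas you spell out the parameter-absorption and the $u,v$-argument---but the logical architecture is identical.
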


\subsection{The Ramsey property}\label{subject:intro:Ramsey}

Via an alternative approach involving Ramsey theory, we will then show a statement similar to Corollary~\ref{cor-equivconjectures} under different, and incomparable, conditions. Although this might seem irrelevant for CSPs considering our results above which cover the entire range of Conjectures~\ref{conj:old} and~\ref{conj:new}, it is interesting that various conditions of very different nature seem to imply this statement, while at the same time we know from our counterexample in Theorem~\ref{thm:counterexample} that $\omega$-categoricity alone is not sufficient. Observe that in the following theorem, there is no requirement of finite language, or orbit growth, or even $\omega$-categoricity; on the other hand, we require the non-trivial linear identities to be satisfied modulo embeddings of an ordered Ramsey structure.

\begin{thm}\label{thm:Ramsey}
Let $\relA$ be a reduct of an ordered homogeneous Ramsey structure $\relD$. If  $\Pol(\relA)$ satisfies a non-trivial set of linear identities modulo outer embeddings of $\relD$, then $\Pol(\relA)$ does not have a uniformly continuous h1 clone homomorphism to $\mathbf 1$.
\end{thm}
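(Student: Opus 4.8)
The plan is to prove the contrapositive in spirit: assuming $\Pol(\relA)$ has a uniformly continuous h1 clone homomorphism $\xi$ to $\mathbf 1$, I want to show that $\Pol(\relA)$ cannot satisfy any non-trivial set of linear identities modulo outer embeddings of $\relD$. The key tool is that the Ramsey property of $\relD$ lets one ``canonize'' polymorphisms: by a standard Ramsey argument (the canonical-function technique of Bodirsky--Pinsker--Tsankov), every polymorphism of $\relA$, after composing with automorphisms of $\relD$ on the outside and looking at it on a sufficiently large but finite piece of the domain, behaves on each orbit of tuples like a canonical function, i.e., one that is determined by the orbits of its arguments. The first step is therefore to set up this canonization machinery in the present setting, where it must interact with the outer embeddings $e$ of $\relD$ appearing in the identities.

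Next I would analyze what a uniformly continuous h1 clone homomorphism $\xi\colon \Pol(\relA)\to\mathbf 1$ does to such canonical functions. Uniform continuity means $\xi$ is determined by the restriction of polymorphisms to some fixed finite subset $F$ of the domain, in the sense that $\xi(f)$ depends only on $f\rest F$ up to the action of $\Aut(\relA)$ (and, since $\mathbf 1$ is the clone of projections, $\xi(f)$ is a coordinate). The crucial point is that outer embeddings $e\in\Emb(\relD)$ are invertible on their range, so composing a polymorphism on the outside with $e$ does not change which coordinate $\xi$ picks out: one checks that $\xi(e\circ f)=\xi(f)$, using that $e$ restricted to a large enough finite set is a bijection onto its image and that $\xi$ is an h1 map. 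Consequently $\xi$ descends to a well-defined ``coordinate-selecting'' functional on polymorphisms modulo outer embeddings of $\relD$.

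Then the endgame is the usual h1-obstruction argument: given any non-trivial finite set $\Sigma$ of linear identities, I would exhibit a term operation $t$ that $\Pol(\relA)$ must interpret if it satisfies $\Sigma$ modulo outer embeddings, but such that applying $\xi$ (now well-defined on such identities by the previous paragraph) to both sides of an identity in $\Sigma$ yields, after the usual combinatorial bookkeeping of how projections compose, a contradiction with the non-triviality of $\Sigma$ — this is exactly the mechanism by which ``$\mathbf 1$ satisfies only trivial linear identities'' propagates back. Here one uses that $\xi$ being an h1 clone homomorphism means it respects the identities coordinate-wise, reducing $\Sigma$ to a set of identities over $\mathbf 1$, which by non-triviality of $\Sigma$ cannot all hold.

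The main obstacle, and where the Ramsey hypothesis is genuinely needed, is the interaction of canonization with the outer embeddings: a priori the embeddings $e_i$ in a system of identities $e_1\circ f(\dots)=e_2\circ g(\dots)$ could ``move'' tuples between orbits in a way that defeats a naive orbit-by-orbit analysis, and one must argue that by precomposing the whole configuration with a suitable element of $\Aut(\relD)$ (available by homogeneity and the Ramsey property) one can assume all relevant finite configurations sit inside a single well-understood ``pattern'' on which $\xi$ is constant. Making this uniform over the — possibly infinitely many — instances required to witness an h1 homomorphism, while only using the uniform continuity of $\xi$ to bound everything by a single finite $F$, is the delicate part; once it is in place, the rest is the by-now-standard projection-clone arithmetic.
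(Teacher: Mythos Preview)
Your overall shape is right---assume a uniformly continuous h1 homomorphism $\xi$, use uniform continuity to reduce to a finite witness set $F$, then argue that the outer embeddings can be stripped so that the identities descend to $\mathbf 1$---but the step where you claim ``one checks that $\xi(e\circ f)=\xi(f)$'' is a genuine gap, and it is exactly the heart of the proof. An h1 clone homomorphism preserves composition with \emph{projections} only; it need not satisfy $\xi(e\circ f)=\xi(e)(\xi(f))$ for unary $e$, and the injectivity of $e$ on finite sets buys you nothing here: $\xi$ depends on $f\rest_{F^m}$, and $(e\circ f)\rest_{F^m}=e\circ(f\rest_{F^m})$ is simply a different partial map on which $\xi$ may take a different value. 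Indeed, the paper's counterexample (the atomless Boolean algebra) has a polymorphism satisfying $f(x,y)=e\circ f(y,x)$ for an embedding $e$, together with a uniformly continuous h1 homomorphism $\xi$ to $\mathbf 1$; if your claim held, $\xi(f)$ and $\xi(f(y,x))$ would be the same projection, which is impossible. That counterexample is not a reduct of an ordered Ramsey structure, but it shows that your proposed mechanism---injectivity of $e$ plus h1-ness of $\xi$---cannot be what makes the argument work.

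What the paper actually does is use the Ramsey property \emph{directly} to manufacture the invariance you want, rather than deduce it from general principles. In the combinatorial proof, one colors isomorphic copies of the structure induced on $s^i[F^m]$ by the value $\xi(\alpha\circ s^i)$ (well-defined because $\relD$ is ordered, so the automorphism $\alpha$ realizing a given copy is unique on that set); Ramsey then yields a copy of a larger configuration on which all these colorings are constant, and \emph{there} one gets $\xi(\beta\circ u^i\circ s^i)=\xi(\beta\circ s^i)$ for a single automorphism $\beta$. Alternatively, the paper gives a second proof via extreme amenability of $\Aut(\relD)$: let $\Aut(\relD)$ act on the orbit closure of $\xi$ in the (compact) space of arity-preserving maps to $\mathbf 1$; a fixed point $\xi'$ of this action is an h1 map satisfying $\xi'(\alpha\circ f)=\xi'(f)$ for all $\alpha$, which is exactly the embedding-invariance you asserted but could not prove for $\xi$ itself. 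Your invocation of the ``canonical function'' machinery is a red herring here: nothing in the argument requires turning polymorphisms into canonical functions; the Ramsey property is used to make $\xi$ (or a limit of translates of $\xi$) blind to outer embeddings, not to canonize the polymorphisms themselves.
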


Note that Theorem~\ref{thm:Ramsey} corresponds to the contrapositive of the non-trival implication from (iii) to (i) in Corollary~\ref{cor-equivconjectures}, via the fact that (i) there is equivalent to the existence of a Siggers term modulo outer embeddings.

We would also like to remark that the situation of Theorem~\ref{thm:Ramsey} is particularly interesting for the approach to Conjectures~\ref{conj:old} and~\ref{conj:new} via canonical functions, as surveyed in~\cite{BP-reductsRamsey} (cf.~also the recent~\cite{canonical}); indeed, many of the successful CSP classifications via that approach yield tractable situations as in Theorem~\ref{thm:Ramsey}.

\subsection{Linearization}\label{subsect:linearization}

Corollary~\ref{cor-equivconjectures}, combined with Theorem~\ref{thm:siggers}, implies that if an $\omega$-categorical model-complete core $\mathbb B$ has a Siggers polymorphism modulo outer embeddings, then $\Pol(\mathbb B)$ does not have a uniformly continuous h1 clone homomorphism onto $\mathbf 1$. 
 It does not imply that in that situation, $\Pol(\relB)$ satisfies non-trivial linear identities, i.e., that $\Pol(\relB)$ does not have an h1 clone homomorphism to $\mathbf 1$ disregarding the uniformity on $\Pol(\relB)$. The situation in Theorem~\ref{thm:Ramsey} is similar. It is hitherto unknown under which conditions non-trivial linear identities modulo outer embeddings imply non-trivial linear identities in a polymorphism clone; as of today, we cannot even refute the possibility that the existence of an h1 homomorphism to $\mathbf 1$ implies the existence of a uniformly continuous such homomorphism in general. This question, for $\omega$-categorical model-complete cores, corresponds to the implication from (6) to (4) in~\cite{BartoPinskerDichotomy}.

Approaching this problem, we are going to show that under the assumption of finite boundedness, and stronger identities than the Siggers identity modulo outer embeddings, we can derive the satisfaction of non-trivial linear identities in a polymorphism clone.

\begin{thm}\label{thm:finitelybounded}
Let $\relA$ be a reduct of a finitely bounded homogeneous structure $\relD$ which is given by a set of forbidden substructures all of which have size at most $k \geq 3$.  If $\Pol(\relA)$ contains a $k$-ary polymorphism $f$ which is totally symmetric modulo outer embeddings of $\relD$, i.e., for all permutations $\rho$ of $\{1,\ldots,k\}$ satisfies an identity of the form
$$
e_{1, \rho}\circ f(x_1,\ldots,x_k)=e_{2, \rho}\circ f(x_{\rho(1)},\ldots,x_{\rho(k)}),
$$
where $e_{1, \rho}, e_{2, \rho}\in\overline{\Aut(\relD)}$, then $\Pol(\relA)$ does not have an h1 clone homomorphism to $\mathbf 1$.
\end{thm}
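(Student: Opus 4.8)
The starting point is the standard reformulation: a clone has no h1 clone homomorphism to $\mathbf 1$ precisely when it satisfies some non-trivial system of linear (height-$1$) identities. So the task is to produce, from $f$ and the outer embeddings $e_{1,\rho},e_{2,\rho}$, a concrete non-trivial system of linear identities that holds honestly in $\Pol(\relA)$. The fundamental obstruction is that the $e_{i,\rho}$ sit on the \emph{outside} of $f$: neither an h1 homomorphism nor, equivalently, the minion structure of $\Pol(\relA)$ sees such outer unary factors, and since these embeddings are non-invertible self-maps no finite composition inside $\Pol(\relA)$ can cancel them.

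First I would normalise: since the forbidden substructures have size $\le k$ we may assume the relational language of $\relD$, hence of $\relA$, uses relations of arity $\le k$, and that an $n$-ary map is a polymorphism of $\relA$ exactly when it sends every $k$-tuple of equal-length tuples that induce age-$\relD$ substructures to a tuple again inducing such a substructure. For a \emph{$k$-ary} map this test examines at most $k$ of its arguments at once, and the $\le k$ elements involved (together with the witnesses for all relations among them) span a structure of size $\le k$, i.e.\ a copy of a substructure of $\relD$. This ``$k$-locality'' of the polymorphism test for $k$-ary maps --- which is exactly why the bound must match the arity --- is what will let me build and move $k$-ary polymorphisms between $\relD$ and auxiliary structures.

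The next step turns ``modulo embeddings'' into ``modulo automorphisms''. On $R:=\ran(f)$ the map $\phi_\rho:=e_{2,\rho}^{-1}\circ e_{1,\rho}$ is well defined, because the range of $e_{1,\rho}\circ f$ lies inside the range of $e_{2,\rho}$; one checks $\phi_\rho$ is a bijection of $R$ (since $\rho\mapsto\phi_\rho$ is an anti-homomorphism of the group $S_k$) and in fact an automorphism of the induced substructure $\relD[R]$, being a composite of a self-embedding with the inverse of a self-embedding; and $\phi_\rho(f(x_1,\dots,x_k))=f(x_{\rho(1)},\dots,x_{\rho(k)})$. Thus, after restricting everything built from $f$ to $R$, the operation $f$ becomes a $k$-ary polymorphism of the reduct $\relA[R]$ that is totally symmetric modulo \emph{honest} automorphisms of $\relA[R]$, and it suffices to eliminate those. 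Note that in the finite-core case such honest automorphisms collapse to the identity by idempotency, so ``symmetric modulo automorphisms'' already forbids an h1 homomorphism to $\mathbf 1$; here there is no idempotency, and indeed by Theorem~\ref{thm:counterexample} $\omega$-categoricity alone is not enough, so this is exactly the place where finite boundedness and the bound $=k$ must be spent.

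The remaining and hardest step is therefore to convert ``totally symmetric modulo automorphisms'' into an honest non-trivial system of linear identities inside $\Pol(\relA)$. The plan is a homogenisation/amalgamation argument: using finite boundedness, reconstitute the situation inside a homogeneous structure of the age of $\relD$, assembled from the relevant $\le k$-element templates, to which the partial automorphisms $\phi_\rho$ lift (by homogeneity of the ambient structure) and into which $f$ lifts as a $k$-ary polymorphism --- each lift being a $k$-local check as above --- and then amalgamate further so as to fuse the $\phi_\rho$-translates of $f$ into a single honestly totally symmetric $k$-ary polymorphism, or at worst into an honest loop-condition-type identity system; finally pull these identities back into $\Pol(\relA)$ via the embeddings of $R$ into $\relD$ (which lie in $\overline{\Aut(\relD)}\subseteq\Pol(\relA)$) together with the fact that the subclone generated by $f$ restricts onto the clone produced. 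The main obstacle is exactly this amalgamation: one must simultaneously trivialise all $2\cdot k!$ outer maps while (i) remaining inside the age of $\relD$, (ii) keeping the symmetrised operation a genuine polymorphism, and (iii) not losing the homomorphism back to $\Pol(\relA)$; here finite boundedness (so that the amalgamation stays in a well-behaved class), the matching of the bound with $k$ (for the $k$-local polymorphism test), and full $S_k$-symmetry rather than a single Siggers term (for the freedom needed to fuse the translates) are all used in an essential way.
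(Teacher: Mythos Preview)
Your proposal has the right instinct---an amalgamation inside the age of $\relD$, using that forbidden substructures have size at most $k$---but the concrete mechanism is missing, and two of your reductions do not go through as stated.

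First, Step~2. The range $R=f[D^k]$ need not be closed under $f$, so ``$f$ becomes a polymorphism of $\relA[R]$'' is unjustified; and even granting automorphisms $\phi_\rho$ of $\relD[R]$, there is no retraction $\relA\to\relA[R]$ in sight, so linear identities produced over $R$ do not transfer back to $\Pol(\relA)$. The paper never leaves $\Pol(\relA)$: every function it constructs is of the form $e\circ f$ with $e\in\overline{\Aut(\relD)}\subseteq\Pol(\relA)$, so no polymorphism test is ever needed (in particular your Step~1 ``$k$-locality of the polymorphism test for $\relA$'' is neither correct---$\relA$ is only a reduct of $\relD$---nor used).

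Second, and more importantly, the paper does \emph{not} aim at a single honestly totally symmetric $k$-ary operation; that target is likely unreachable, and your fallback ``loop-condition-type system'' is exactly what needs to be made precise. The paper's route is: pass to the binary \emph{nu-minors} $h_i^f(x,y)=f(x,\dots,x,y,x,\dots,x)$ and prove two lemmas.
\begin{itemize}
\item[(a)] \textbf{Pigeonhole lemma.} If there are binary $g_1,\dots,g_{2k-1}\in\Pol(\relA)$ such that for every injection $\psi\colon\{1,\dots,k\}\hookrightarrow\{1,\dots,2k-1\}$ some $f_\psi\in\Pol(\relA)$ has nu-minors $g_{\psi(1)},\dots,g_{\psi(k)}$, then no h1 homomorphism to $\mathbf 1$ exists: under such a map, two of the $g_i$ with $i$ in $\psi[\{1,\dots,k\}]$ would land on the same projection for some $\psi$, contradicting that they are nu-minors of one operation. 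This is the ``honest loop condition'' you were looking for; note the crucial $2k-1$.
\item[(b)] \textbf{Amalgamation lemma.} For finite $F$, build a structure on $(F^2\times\{1,\dots,2k-1\})/\!\sim$, where each injection $\psi$ contributes a piece $\relstr X_\psi$ isomorphic (via $(a,b,\psi(i))\mapsto h_i^f(a,b)$) to the substructure of $\relD$ induced by $\{h_i^f(a,b):i\le k,\ (a,b)\in F^2\}$. Total symmetry of the nu-minors makes all $\relstr X_\psi$ canonically isomorphic, so the pieces glue; $k\ge 3$ is used to get transitivity of the glued kernel $\sim$ (any three indices lie in one $\psi$). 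Finite boundedness with bound $k$ is used \emph{here}: any $k$ points of the glued structure already sit in a single $\relstr X_\psi$, hence in the age of $\relD$, so the whole structure embeds into $\relD$. Composing with that embedding gives $g_1^F,\dots,g_{2k-1}^F$; a standard compactness step and homogeneity then produce global $g_i$ and $f_\psi:=e_\psi\circ f$ realising (a).
\end{itemize}
So the missing ingredients in your plan are the passage to binary nu-minors (which makes the amalgamation two-dimensional and tractable), the ``$2k-1$'' pigeonhole system that serves as the actual non-trivial linear identities, and the precise gluing over $\{1,\dots,2k-1\}$-indexed copies whose $k$-element substructures are controlled by finite boundedness.
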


%

From Theorem~\ref{thm:finitelybounded} and the classifications in~\cite{ecsps},~\cite{tcsps-journal},~\cite{BodPin-Schaefer-both}, and~\cite{posetCSP16} it follows directly that most reducts of the rationals, the random graph, and the random partial order with tractable CSPs have a polymorphism clone satisfying non-trivial linear identities. Using a similar proof technique for the remaining cases we obtain the following.

\begin{thm}\label{thm:randomgraph} \label{thm:linearisationtheorm}
Let $\relA$ be a reduct of one of the following structures:
\begin{itemize}
\item $(\mathbb N;=)$;
\item the order $(\mathbb Q;\leq )$ of the rational numbers;
\item the random partial order;
\item the random graph. 
\end{itemize}
Then $\Pol(\relA)$ has a uniformly continuous h1 clone homomorphism to $\mathbf 1$ if and only if it has an h1 clone homomorphism to $\mathbf 1$. When $\relA$ has a finite language, then its CSP is tractable if and only if $\Pol(\relA)$ satisfies a non-trivial set of linear identities.
\end{thm}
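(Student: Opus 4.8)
The plan is to derive both assertions from Theorem~\ref{thm:finitelybounded}, from the known CSP classifications of the reducts of the four base structures, and from Theorem~\ref{thm:h1} together with Corollary~\ref{cor-equivconjectures}. The second assertion follows once the first is in hand: for a finite-language reduct $\relA$ of one of the four structures the CSP dichotomy (tractable versus NP-complete) is known from~\cite{ecsps,tcsps-journal,BodPin-Schaefer-both,posetCSP16}, where the tractable cases are exactly those in which $\relS$ is not pp-interpretable with parameters in the model-complete core of $\relA$; by Theorem~\ref{thm:siggers} and Corollary~\ref{cor-equivconjectures} this is equivalent to $\Pol(\relA)$ having no uniformly continuous h1 clone homomorphism to $\mathbf 1$, hence, by the first assertion, to $\Pol(\relA)$ having no h1 clone homomorphism to $\mathbf 1$, that is, to $\Pol(\relA)$ satisfying a non-trivial set of linear identities.

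So it remains to prove the first assertion. The forward implication is immediate, and I prove the converse contrapositively: assuming that $\Pol(\relA)$ has no uniformly continuous h1 clone homomorphism to $\mathbf 1$, I show that it has no h1 clone homomorphism to $\mathbf 1$ whatsoever. By Theorem~\ref{thm:h1} the hypothesis says that $\relS$ is not homomorphically equivalent to a structure pp-interpretable in $\relA$, and since pp-interpretability is preserved under adding relations, the same holds for every finite-language reduct of $\relA$. For each such finite-language reduct the corresponding classification then produces, in its polymorphism clone --- possibly after passing to the model-complete core and transferring the eventual conclusion back, using that h1 clone homomorphisms to $\mathbf 1$ are preserved in both directions under homomorphic equivalence --- a polymorphism of one of finitely many explicitly described canonical types. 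By pigeonhole one type occurs for a cofinal family of finite-language reducts of $\relA$, and a compactness argument based on $\omega$-categoricity (the polymorphisms of a fixed type form a nonempty closed set determined by finitely many orbits of tuples, and these sets are nested) yields a polymorphism $f$ of that type already in $\Pol(\relA)$.

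Suppose first that $f$ has arity $k\geq 3$ and is totally symmetric modulo automorphisms of the base structure $\relD$. This covers the tractable cases of $(\mathbb N;=)$ --- a constant operation or a binary injection, each of which gives rise to a ternary polymorphism totally symmetric modulo permutations of the domain --- as well as most tractable cases of $(\mathbb Q;\leq)$ and of the random partial order, where operations such as $\min$, $\max$, $\mi$, $\mx$, or a median-like ternary operation again yield a ternary totally symmetric (modulo automorphisms) polymorphism. Since $(\mathbb N;=)$, $(\mathbb Q;\leq)$, and the random partial order are finitely bounded with forbidden substructures of size at most $k=3$, Theorem~\ref{thm:finitelybounded} applies directly and gives that $\Pol(\relA)$ has no h1 clone homomorphism to $\mathbf 1$, as required.

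The remaining cases are where I expect the bulk of the work to lie. First, some tractable cases of $(\mathbb Q;\leq)$ and of the random partial order admit no symmetric witnessing polymorphism at all --- typically the lexicographic operation $\lex$ and its dual --- so that no totally symmetric polymorphism is available; for these I would extract directly from the canonical operation a non-trivial linear identity that it satisfies modulo automorphisms of $\relD$, and then rerun the argument behind Theorem~\ref{thm:finitelybounded}, using finite boundedness to absorb the outer automorphisms, to conclude that already this identity precludes an h1 clone homomorphism to $\mathbf 1$. Second, the random graph is finitely bounded only with forbidden substructures of size $k=2$, so Theorem~\ref{thm:finitelybounded} does not apply on the nose; here one must adapt its proof to the bound $k=2$ and to the specific canonical binary and ternary polymorphisms occurring in the tractable cases of~\cite{BodPin-Schaefer-both}. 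Thus the main obstacle is the case analysis over the four classifications: checking in every tractable case that a suitable polymorphism exists and that some variant of the linearization argument of Theorem~\ref{thm:finitelybounded} applies to it.
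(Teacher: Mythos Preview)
Your overall strategy matches the paper's: invoke the CSP classification for each base structure to obtain a canonical polymorphism in $\Pol(\relA)$ whenever there is no uniformly continuous h1 homomorphism to $\mathbf 1$, and then argue that this polymorphism already rules out \emph{any} h1 homomorphism to $\mathbf 1$. Two points deserve correction.

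First, the detour through finite-language reducts, pigeonhole, and compactness is unnecessary. The classification theorems you cite are stated for \emph{arbitrary} reducts of the four base structures (only the complexity statements require finite language), so the canonical polymorphisms are produced directly in $\Pol(\relA)$. The paper simply reads them off.

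Second, your case split is off in one place: the operation $\mi$ on $(\mathbb Q;\leq)$ is \emph{not} totally symmetric modulo embeddings of $(\mathbb Q;\leq)$, so Theorem~\ref{thm:finitelybounded} does not cover it. In the paper, only the constant, $\min$, and $\mx$ cases (and their duals) are dispatched by a visible symmetric identity; $\mi$ and $\mathrm{ll}$ are handled by hand via the pigeonhole principle behind Theorem~\ref{thm:finitelybounded} (Lemma~\ref{lem:lineq}): one builds, from $\mi$ (respectively $\mathrm{ll}$) and suitable embeddings, binary functions $g_1,\dots,g_{2k-1}$ and for each injection $\psi\colon\{1,\dots,k\}\to\{1,\dots,2k-1\}$ a $k$-ary $f_\psi$ whose nu-minors are $g_{\psi(1)},\dots,g_{\psi(k)}$. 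For the random graph the canonical polymorphisms are injective weak near-unanimity operations modulo outer embeddings, which again are not totally symmetric; here the paper does not adapt Theorem~\ref{thm:finitelybounded} to $k=2$ but proves a different pigeonhole variant (Lemma~\ref{lem:lineq2}) with a $k\times k$ array of binary functions $g_{i,j}$, and realises it by embedding an auxiliary graph into $G$. So the ``remaining cases'' are $\mi$, $\mathrm{ll}$ (and duals), and all non-constant random graph cases, and the work there is not a rerun of Theorem~\ref{thm:finitelybounded} but rather a return to the underlying pigeonhole lemmas with bespoke constructions.
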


\subsection{Cores}\label{subsect:cores}

Theorem~\ref{thm:cores} above stating the existence and uniqueness of the model-complete core of an $\omega$-categorical structure is of central importance for Conjecture~\ref{conj:old}, and calculating the model-complete core of structures has been an integral part of the major successful CSP classifications so far. While the alternative more recent Conjecture~\ref{conj:new} threatened to make the notion obsolete for its context, the equivalence of the conjectures established in the present article provides further evidence of the decisive role of model-complete cores for CSPs.

Observe that the notion of a model-complete core is defined via the endomorphism monoid of a structure (density of the invertibles in the monoid), so in particular structures with isomorphic (as topological monoids, cf.~\cite{Reconstruction, BodirskyEvansKompatscherPinsker}) endomorphism monoids are either both model-complete cores, or none of them is. Moreover, by the theorem of Ryll-Nardzewski, Engeler, and Svenonius~\cite{Hodges}, the condition of $\omega$-categoricity of a countable structure is equivalent to oligomorphicity of its automorphism group, again captured by its endomorphism monoid. It thus seems natural to have a proof of Theorem~\ref{thm:cores} in the language of transformation monoids, without reference to the particular language of a structure. The original and quite lengthy proof due to Bodirsky, however, does work with structures, and it is not obvious how to translate it into a proof via monoids.

We shall provide a new, short proof of Theorem~\ref{thm:cores} using topological monoids, which perhaps reflects better the combinatorial content of the theorem, and in particular connects it to the recent notion of reflections (which in turn leads to the other conjecture, Conjecture~\ref{conj:new}). Set naturally in the language of monoids, our proof yields simultaneously the generalization of the theorem to weakly oligomorphic structures given in~\cite{PechCores}.

\subsection{Organization of this article}\label{subsect:organization}
We provide definitions and notation in Section~\ref{sect:prelims}. The main results about the two CSP conjectures, Theorems~\ref{thm:counterexample} and~\ref{thm:equivalence},  Corollary~\ref{cor-equivconjectures}, and Theorem~\ref{thm:Ramsey}, are shown in Section~\ref{sect:equiv}. In Section~\ref{sect:linearization}, we  investigate the relationship between linear identities modulo outer  embeddings and those without outer embeddings, proving Theorems~\ref{thm:finitelybounded}, and~\ref{thm:randomgraph}. The new proof of Theorem~\ref{thm:cores}, and new insights connecting it directly to reflections, are provided in Section~\ref{sect:cores}. 

 \section{Preliminaries} \label{sect:prelims}

We explain the notions which appeared in the introduction, and fix some notation for the rest of the article. 
For undefined universal algebraic concepts and more detailed presentations of the notions presented here we  refer to~\cite{BS81,Berg11}. For notions from model theory we refer to~\cite{Hodges}.

\subsection{Polymorphism clones, automorphisms, and invertibles} We denote relational structures by $\relA, \relB$, etc. When $\relA$ is a relational structure, we reserve the symbol $A$ for its domain. We write $\Pol(\relA)$ for its \emph{polymorphism clone}, i.e., the set of all finitary operations on $A$ which preserve all relations of $\relA$. The polymorphism clone $\Pol(\relA)$ is always a \emph{function clone}, i.e., it is closed under composition and contains all projections. The unary functions in  $\Pol(\relA)$ are precisely the \emph{endomorphisms} of $\relA$, denoted by $\End(\relA)$. The endomorphisms which are bijections and whose inverse function is also an endomorphism are precisely the \emph{automorphisms} of $\relA$. We denote the set of automorphisms of $\relA$ by $\Aut(\relA)$.

When $\cloA$ is any function clone (not necessarily the polymorphism clone of a structure), then still the unary functions in $\cloA$ form a transformation monoid, and the unary invertible functions in $\cloA$ (i.e., those having an inverse in $\cloA$) form a permutation group, the \emph{group of invertibles of $\cloA$}. We write $A$ for the domain of the function clone $\cloA$.

\subsection{Clone homomorphisms} A \emph{clone homomorphism} from a function clone $\cloA$ to a function clone $\cloB$ is a mapping $\xi\colon \cloA\To\cloB$ which
\begin{itemize}
\item preserves arities, i.e., it sends every function in $\cloA$ to a function of the same arity in~$\cloB$;
\item preserves each projection, i.e., it sends the $k$-ary projection onto the $i$-th coordinate in $\cloA$ to the same projection in $\cloB$, for all $1\leq i\leq k$;
\item preserves composition, i.e., $\xi(f(g_1,\ldots,g_n))=\xi(f)(\xi(g_1),\ldots,\xi(g_n))$ for all $n$-ary functions $f$ and all $m$-ary functions $g_1,\ldots,g_n$ in $\cloA$.
\end{itemize}
For all $1\leq i\leq k$ we denote the $k$-ary projection onto the $i$-th coordinate by $\pi^k_i$, in any function clone and irrespectively of the domain of that clone. This slight abuse of notation allows us, for example, to express the second item above by writing $\xi(\pi^k_i)=\pi^k_i$.

A mapping $\xi\colon \cloA\To\cloB$ is called an \emph{h1 clone homomorphism} if it preserves arities and composition with projections, i.e., $\xi(f(g_1, \ldots, g_n)) = \xi(f)(g_1, \ldots, g_n)$ for all $n$-ary functions $f$ in $\cloA$ and all $m$-ary projections $g_1, \dots, g_n$.
If, in addition, $\xi$ preserves each projection, then it is called a \emph{strong h1 clone homomorphism}.
Note that an h1 clone homomorphism to $\mathbf 1$ is automatically strong.

\subsection{Identities / Equations}
The clone homomorphisms are those mappings between clones preserving \emph{identities}, i.e., universally quantified equations between terms built from the functions in clones (with an appropriate language providing a symbol for every element of the clones). The h1 clone homomorphisms are those mappings between clones preserving identities of height one, and strong h1 clone homomorphisms preserve all identities of height at most one, also known as \emph{linear identities}, i.e., identities where no nesting of functions is allowed; cf.~\cite{wonderland}. A \emph{linear identity modulo outer unary functions} is a universally quantified equation of the form $e_1\circ s=e_2\circ t$, where $e_1, e_2$ are unary and $s,t$ are terms of height at most one.

When $\relA$ is a relational structure, then a \emph{linear identity of $\Pol(\relA)$ modulo outer endomorphisms (automorphisms, embeddings)} is an identity of the form $e_1\circ s=e_2\circ t$ which holds in $\Pol(\relA)$, where $e_1, e_2 \in\Pol(\relA)$ are endomorphisms (automorphisms, embeddings) of $\relA$, and $s,t$ terms over $\Pol(\relA)$ of height at most one. Similarly, we speak of linear identities modulo outer embeddings (automorphisms, embeddings) of $\relD$, where $\relD$ is some other structure, with the obvious meaning.

A set of identities is \emph{non-trivial} if it is unsatisfiable in the clone $\mathbf 1$. Therefore, a function clone satisfies a non-trivial set of identities if and only if it does not have a clone homomorphism to $\mathbf 1$; it satisfies a non-trivial set of linear identities if and only if it does not have an h1 clone homomorphism to $\mathbf 1$. It follows from the compactness theorem of first-order logic that these non-trivial sets of identities can be chosen to be finite.

\subsection{Stabilizers}

When $\cloA$ is a function clone, and $F\subseteq A$ is a finite subset of its domain, then the (pointwise) \emph{stabilizer} of $F$ in $\cloA$, denoted by $\cloA_{F}$, is the function clone of all $f\in\cloA$ satisfying $f(a,\ldots,a)=a$ for all $a\in F$. We emphasize that we always understand stabilizers to be pointwise, and  always of a finite set.

We remark that when $\relA$ is a relational structure and $F \subseteq A$ is finite, then the stabilizer of $F$ in $\Pol(\relA)$ is the polymorphism clone of the structure obtained by enriching $\relA$ by a unary singleton relation $\{a\}$ for every $a\in F$.

\subsection{Topology} Every function clone is naturally equipped with the topology of pointwise convergence: in this topology, a sequence $(f_i)_{i\in\omega}$ of $n$-ary functions converges to an $n$-ary function $f$ on the same domain if and only if for all $n$-tuples $\bar a$ of the domain the functions $f_i$ agree with $f$ on $\bar a$ for all but finitely many $i\in\omega$. Therefore, every function clone gives rise to an abstract \emph{topological clone} which reflects this topology as well as the composition structure of the clone~\cite{Reconstruction}. 

We always imagine function clones to carry the pointwise convergence topology, which is, in the case of a countable domain, in fact induced by a metric, and in general by a uniformity~\cite{Reconstruction, uniformbirkhoff, schneider}. Then a mapping $\xi\colon \cloA\To\cloB$, where $\cloA$ and $\cloB$ are function clones, is continuous if and only if for all $f\in\cloA$ and all finite sets $B'\subseteq B$ there exists a finite set $A'\subseteq A$ such that for all $g\in\cloA$ of the same arity as $f$, if $g$ agrees with $f$ on $A'$, then $\xi(g)$ agrees with $\xi(f)$ on $B'$. It is uniformly continuous if and only if for all $n\geq 1$ and all finite  $B'\subseteq B$ there exists a finite $A'\subseteq A$ such that whenever two $n$-ary functions $f, g\in\cloA$ agree on $A'$, then their images $\xi(f), \xi(g)$ agree on $B'$. Note that in the case of mappings $\xi\colon \cloA\To\mathbf 1$, uniform continuity means that for every $n\geq 1$ 
there exists a finite $A'\subseteq A$ such that $\xi(f)$ only depends on the restriction of $f$ to ${A'}$, for all $n$-ary $f\in\cloA$. When $\xi$ is an h1 clone homomorphism, then $A'$ can be chosen independently of $n$.

We remark that the polymorphism clones of relational structures are precisely the  function clones which are complete with respect to this uniformity (or, put differently, closed in the function clone of all functions of the domain). Function clones on a finite domain are discrete.

\subsection{Oligomorphicity, $\omega$-categoricity and orbit growth}\label{sect:notions_infinite} 
Recall that by the theorem of Ryll-Nardzewski, Engeler, and Svenonius, a countable relational structure $\relstr{A}$ is \emph{$\omega$-categorical} if and only if its automorphism group $\Aut(\relA)$ is \emph{oligomorphic}, i.e., for every $n \geq 1$, the natural componentwise action of $\Aut(\relstr{A})$ on $A^n$ has only finitely many orbits. In particular finite structures are always $\omega$-categorical. Every countable $\omega$-categorical structure $\relA$ thus induces a monotone function on the positive natural numbers which assigns to every $n\geq 1$ the number of orbits of $n$-tuples with respect to $\Aut(\relA)$; we call this function the \emph{orbit growth} of $\relA$ (or of $\Aut(\relA)$). There exist $\omega$-categorical structures of arbitrarily fast orbit growth.

Similarly, we say that a function clone is \emph{oligomorphic} if its group of unary invertibles is, and we can hence naturally speak of the orbit growth of an oligomorphic function clone.

\subsection{Homogeneity, finite boundedness, and the Ramsey property}
The $\omega$-categorical structures concerned by the conjectures above are reducts of finitely bounded homogeneous structures. Here, following~\cite{RandomReducts} and numerous subsequent authors, we define a \emph{reduct} of a relational structure $\relA$ to be a relational structure on the same domain all of whose relations have a first-order definition in $\relA$ without parameters.

 A relational structure is \emph{homogeneous} if every partial isomorphism between finite  substructures extends to an automorphism of the entire structure.  A countable relational structure $\relA$ is \emph{finitely bounded} if it has a finite signature, and there exists a finite set $F$ of finite structures in its signature such that $\relA$ contains precisely those structures as induced substructures which embed no member of $F$. We are going to call every such $F$ a set of \emph{forbidden substructures (with respect to $\relA$)}.

A relational structure $\relD$ is \emph{Ramsey} if for all finite induced substructures $\relP, \relQ$ of $\relD$ and all functions $\chi$ from the isomorphic copies of $\relP$ in $\relD$ to $\{0,1\}$ there exists an isomorphic copy of $\relQ$ in $\relD$ on which $\chi$ is constant. It is \emph{ordered} if it first-order defines (without parameters) a linear order on its domain. For more details about Ramsey structures in this context, we refer to the surveys~\cite{BP-reductsRamsey},~\cite{Bodirsky-HDR}.

\subsection{Homomorphic equivalence and model-complete cores}

When relational structures $\relstr{A}$ and $\relstr{B}$ have the same signature, then we say that $\relstr{A}$ and $\relstr{B}$ are \emph{homomorphically equivalent} if there exists a~homomorphism $\relstr{A} \to \relstr{B}$ and a~homomorphism $\relstr{B} \to \relstr{A}$. A relational structure $\relstr{B}$ is called a \emph{model-complete core} if $\Aut(\relB)$ is dense in $\End(\relB)$, i.e., for every endomorphism $e$ of $\relstr{B}$ and every finite subset $B'$ of $B$ there exists an automorphism of~$\relstr{B}$ which agrees with $e$ on $B'$. When $\relstr{B}$ is finite, then this means that every endomorphism is an automorphism, and $\relstr{B}$ is simply called a \emph{core}. 

Similarly, we call a function clone or a transformation monoid a \emph{model-complete core} if the group of its invertible functions is dense in its unary functions.

\subsection{CSPs} For a finite relational signature $\Sigma$ and a $\Sigma$-structure $\relstr{A}$, the \emph{constraint satisfaction problem} of $\relstr{A}$, or $\CSP(\relstr{A})$ for short, is the membership problem for the class
\begin{align*}
\{ \relstr{C} \mid \, &\relstr{C} \mbox{ is a finite $\Sigma$-structure and } \\ &\mbox{there exists a~homomorphism $\relstr{C} \to \relstr{A}$}\} \enspace.
\end{align*}
An alternative definition of $\CSP(\relstr{A})$ is via primitive positive (pp-) sentences. Recall that a \emph{pp-formula} over $\relstr{A}$ is a first order formula which only uses predicates from $\relstr{A}$, conjunction, equality, and existential quantification. $\CSP(\relstr{A})$ can equivalently be phrased as the membership problem of the set of  pp-sentences which are true in $\relstr{A}$. 


\section{Equivalence of the Conjectures,  and the Ramsey  Property}\label{sect:equiv}

This section is divided into three parts: we first prove Theorem~\ref{thm:equivalence} and Corollary~\ref{cor-equivconjectures} in Section~\ref{subsect:orbit}, and then provide the counterexample of Theorem~\ref{thm:counterexample} in Section~\ref{subsect:counterexample}. Finally, we turn to applications of the Ramsey property in Section~\ref{subsect:Ramsey}, proving Theorem~\ref{thm:Ramsey}.

\subsection{Orbit Growth and Equivalence of the conjectures}\label{subsect:orbit}

\begin{defn}
Let $\C$ be a function clone, and let $S\subseteq C$ be a subset of its domain with $|S|\geq 2$. Then a function $g\colon C\To S$ is a \emph{retractional witness for $\C$ with respect to $S$} if the restriction of $g\circ t$ to $S$ is a projection on $S$ for all $t\in\C$.

For an $n$-ary $t\in\C$, we call an index $1\leq i \leq n$ \emph{fundamental for $t$ with respect to $S$} if there exists a retractional witness $g$ for $\C$ with respect to $S$ such that $g\circ t\rest_{S^n}$ is the $i$-th $n$-ary projection $\pi^n_i$ on $S$.

The \emph{ambiguity degree of $t\in\C$ with respect to $S$} is the number of its fundamental indices with respect to $S$. The ambiguity degree of $\C$ is the supremum of the ambiguity degrees of its members with respect to sets $S\subseteq C$ of at least two elements:
\begin{align*}
\sup\;\{ d\in\omega \mid \, &\exists t\in\C,\, S\subseteq C\; (|S|\geq 2\; \wedge\; t \text{ has ambiguity } \\ &\text{degree }d\text{ with respect to } S)\}
\end{align*}

\end{defn}

\begin{lemma}\label{lem:doubleexp}
Let $\C$ be a function clone of infinite ambiguity degree. Then the componentwise action of the group of unary invertible functions in $\C$ on $C^n$ has at least $2^{2^n}-1$ orbits, for all $n\geq 1$.
\end{lemma}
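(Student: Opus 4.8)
The plan is to fix $n\ge 1$ and produce $2^{2^n}-1$ tuples in $C^n$ lying in pairwise distinct orbits of the group $\G$ of unary invertibles of $\C$. The whole argument is organised around a single $\G$-invariant attached to a set $S$ and its retractional witnesses, plus a counting argument exploiting that infinite ambiguity degree gives arbitrarily many fundamental indices.

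First I would set up the invariant. For $S\subseteq C$ with $|S|\ge 2$ let $W_S$ be the set of retractional witnesses $g\colon C\to S$ for $\C$ with respect to $S$ (for the $S$ used below, $W_S$ will be non-empty), and define $\tau_S\colon C^n\to\mathcal P(S^n)$ by $\tau_S(\bar a)=\{(g(a_1),\dots,g(a_n)) : g\in W_S\}$. The key closure property is that $W_S$ is stable under pre-composition with unary members of $\C$: if $g\in W_S$ and $\alpha\in\C$ is unary, then for every $t\in\C$ one has $(g\circ\alpha)\circ t=g\circ(\alpha\circ t)$ with $\alpha\circ t\in\C$, so $g\circ\alpha\in W_S$. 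Applying this with $\alpha\in\G$ and separately with $\alpha^{-1}\in\G$ shows $\tau_S(\alpha(\bar a))=\tau_S(\bar a)$, i.e.\ $\tau_S$ is constant on $\G$-orbits. So it suffices to realise $2^{2^n}-1$ distinct values of $\tau_S$.

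Next I would compute $\tau_S$ on a convenient family of tuples. By infinite ambiguity degree there are $t\in\C$, of some arity $m$, and $S\subseteq C$ with $|S|\ge 2$, such that $t$ has at least $2^n$ fundamental indices with respect to $S$; write $K\subseteq\{1,\dots,m\}$ for the set of all such fundamental indices, so $|K|\ge 2^n$ and $K=\{k : \exists g\in W_S,\ g\circ t\rest_{S^m}=\pi^m_k\}$ (in particular $W_S\ne\emptyset$). Given any $n\times m$ matrix $M$ with entries in $S$, let $\bar a(M)\in C^n$ be the tuple whose $j$-th entry is $t$ applied to the $j$-th row of $M$. Since the rows lie in $S^m$, for $g\in W_S$ with $g\circ t\rest_{S^m}=\pi^m_k$ the $j$-th entry of $\bar a(M)$ is sent by $g$ to the entry $M_{j,k}$; hence $g$ maps $\bar a(M)$ componentwise to the $k$-th column of $M$. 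Letting $g$ range over $W_S$ yields exactly $\tau_S(\bar a(M))=\{\,k\text{-th column of }M : k\in K\,\}\subseteq S^n$. Finally, since $|S|\ge 2$ we may fix $S_0\subseteq S^n$ with $|S_0|=2^n$; for every non-empty $V\subseteq S_0$ we have $1\le|V|\le 2^n\le|K|$, so we can choose the columns of $M$ indexed by $K$ to enumerate precisely $V$ (repeating a fixed element of $V$ on leftover columns in $K$, and putting anything from $S$ on the columns outside $K$), obtaining $\bar a(M)$ with $\tau_S(\bar a(M))=V$. This gives $2^{2^n}-1$ tuples with pairwise distinct $\tau_S$-values, hence in pairwise distinct $\G$-orbits.

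The step I expect to be the real obstacle is getting from ``many fundamental indices'' to ``many distinct $\tau_S$-values'' efficiently. The naive attempt — indexing the $2^n$ fundamental indices by $\{0,1\}^n$ and encoding a target $T\subseteq\{0,1\}^n$ as a bit-pattern in the rows of $M$ — only produces about $2^{2^n-1}$ values, because a fixed ``garbage'' column (coming from coordinates of $t$ that are not fundamental indices, and from the all-zero pattern) always lies in $\tau_S$. The resolution is the third-step observation that $\tau_S$ of such a tuple is literally the set of columns sitting over fundamental indices, and that this set may be prescribed to be an arbitrary non-empty subset of a $2^n$-element part of $S^n$. The only technical care needed is to keep all rows of $M$ inside $S^m$, since that is exactly what forces each retractional witness to act on $\bar a(M)$ as a coordinate projection; and the ``$-1$'' in the statement is accounted for by the fact that $\tau_S$ can never take the value $\emptyset$.
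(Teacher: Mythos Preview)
Your proof is correct and follows essentially the same approach as the paper: both rely on the observation that retractional witnesses are closed under pre-composition with unary (in particular invertible) members of $\C$, so that the set of $S^n$-tuples obtained by applying all retractional witnesses to a given $n$-tuple is an orbit invariant. The paper simplifies at the outset by taking $|S|=2$ and identifying variables so that all indices of $t$ become fundamental, and then argues the orbit separation directly rather than naming the invariant $\tau_S$; your explicit invariant is a clean repackaging of the same idea and avoids those preliminary reductions, but the substance is the same.
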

\begin{proof}

Given $n\geq 1$, pick $t(x_1,\ldots,x_k)\in\C$ of ambiguity degree at least $2^n$ with respect to $S\subseteq C$ of at least two elements; by taking a subset, we may assume $|S|=2$. By identifying some variables of $t$ with variables corresponding to a fundamental index of $t$, we may assume that all indices of $t$ are fundamental, and that $k= 2^n$. For any non-empty subset $R$ of $S^n$, pick an $n$-tuple $q^R\in C^n$ of the form $t(q_1^R,\ldots,q_k^R)$ (applied componentwise), where every $q_i^R\in R$, and all tuples in $R$ appear as some $q_i^R$.

We claim that when $R\neq R'$, then $q^R$ and $q^{R'}$ lie in distinct orbits. To see this, suppose without loss of generality that $R\setminus R'\neq \emptyset$. Thus there exists some $q_i^R\notin R'$. Let $g\colon C\To S$ be the retractional witness such that $g\circ t\rest_{S^k}$ projects to the $i$-th coordinate. If $q^R=\alpha(q^{R'})$ for an invertible $\alpha\in\C$, then we would have $g(q^R)=g\circ \alpha(q^{R'})=g\circ \alpha\circ t(q^{R'}_1,\ldots,q^{R'}_k)$. Observe that $g\circ (\alpha\circ t)\rest_{S^k}$ is a projection since $\alpha\circ t\in\C$ and since $g$ is a retractional witness. Hence, $q_i^R=g(q^R)\in\{q_1^{R'},\ldots, q_k^{R'}\}$, a contradiction.
\end{proof}

\begin{lem}\label{lem:infiniteambiguity}
Let $\C$ be a function clone which is a model-complete core and which satisfies some non-trivial linear identity modulo outer unary functions. If $\C$ has a  retractional witness, then it has infinite ambiguity degree.
\end{lem}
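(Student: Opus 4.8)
The goal is to show that for every $d\in\omega$ one can find $t_d\in\C$ and a set $S_d\subseteq C$ with $|S_d|\ge 2$ such that $t_d$ has at least $d$ fundamental indices with respect to $S_d$; this is exactly what having infinite ambiguity degree means. I would begin with a normalization: by compactness the given non-trivial set of linear identities modulo outer unary functions may be taken finite, and, using that $\C$ is a model-complete core — so in particular that every unary function in $\C$ is injective — I would reduce it (replacing the involved operations by derived ones) to a single non-trivial identity
\[
e_1\circ f(x_{\sigma(1)},\dots,x_{\sigma(k)})=e_2\circ f(x_{\tau(1)},\dots,x_{\tau(k)}),
\]
with $f\in\C$ of some arity $k$, $e_1,e_2\in\C$ unary, and $\sigma(p)\ne\tau(p)$ for all $p\in\{1,\dots,k\}$ (a reduction of this kind for model-complete cores is routine, and the inequality $\sigma(p)\ne\tau(p)$ is precisely what non-triviality amounts to for an identity of this shape). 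Fix a retractional witness $g$ for $\C$ with respect to some $S$; passing to a two-element subset we may assume $S=\{0,1\}$. Two elementary remarks will be used throughout: $g\rest_S=\id_S$ (apply the definition of retractional witness to $\pi^1_1$), and for every unary $e\in\C$ both $g\circ e$ (a retractional witness with respect to $S$) and $e\circ g$ (a retractional witness with respect to the two-element set $e(S)$) are again retractional witnesses, the latter using injectivity of $e$.

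\emph{Base case.} For a retractional witness $h$ and $t\in\C$ of arity $n$ let $i_h(t)$ be the index with $h\circ t\rest_{S^n}=\pi^n_{i_h(t)}$, and put $j_1:=i_{g\circ e_1}(f)$, $j_2:=i_{g\circ e_2}(f)$. Applying $g$ to both sides of the identity and restricting all variables to $S$ gives $x_{\sigma(j_1)}=x_{\tau(j_2)}$ for all $x_i\in S$ (using that $g\circ(e_1\circ f)\rest_{S^k}=\pi^k_{j_1}$ and similarly for $e_2$); since $|S|=2$ this forces $\sigma(j_1)=\tau(j_2)$, and then $\sigma(p)\ne\tau(p)$ for every $p$ forces $j_1\ne j_2$. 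Hence $j_1$ and $j_2$ are two distinct fundamental indices of $f$ with respect to $S$, so the ambiguity degree of $\C$ is at least $2$.

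\emph{Bootstrap --- the main obstacle.} The remaining, and decidedly hardest, step is to convert a function of ambiguity degree $\ge d$ into one of ambiguity degree $\ge d+1$, and thus to obtain operations of arbitrarily large ambiguity degree. The natural idea is to substitute $f$ (or the current high-ambiguity operation) into itself in the positions prescribed by the identity, so as to create longer terms with additional fundamental indices. The obstruction is structural: a retractional witness only reads off a coordinate on inputs drawn from its two-element set, whereas a term $f(\dots,f(\dots),\dots)$ feeds the outer copy of $f$ intermediate values lying outside that set, so the projection behaviour of $g$ does not survive the composition. This is exactly where the model-complete core hypothesis is indispensable: replacing $e_1,e_2$ by invertible $\alpha_1,\alpha_2\in\C$ that agree with them on the (finite) relevant set of intermediate values turns the identity into an identity $f(x_\sigma)=\beta\circ f(x_\tau)$ that is valid on $S$-inputs, with $\beta=\alpha_1^{-1}\circ\alpha_2\in\C$ invertible; one then iterates this invertibly-corrected identity, at each stage passing to a suitable new retractional witness of the form $\gamma\circ g$ (with respect to the two-element set $\gamma(S)$, for an appropriate unary $\gamma\in\C$) that still detects the newly produced coordinate. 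Arranging the bookkeeping so that the number of distinct fundamental indices strictly increases at each stage, rather than merely being permuted, is where the technical weight of the proof lies.

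Once operations of arbitrarily large ambiguity degree with respect to two-element sets have been produced, the supremum in the definition of the ambiguity degree of $\C$ is infinite, which is the assertion of the lemma (and, through Lemma~\ref{lem:doubleexp}, yields the double-exponential orbit growth used in the sequel).
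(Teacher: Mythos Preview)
Your base case is fine, but the bootstrap step is not a proof: you yourself write that ``arranging the bookkeeping so that the number of distinct fundamental indices strictly increases at each stage \dots\ is where the technical weight of the proof lies,'' and then stop. That technical weight is precisely the content of the lemma, and your outline does not discharge it. Concretely, your plan is to replace $e_1,e_2$ by invertibles $\alpha_1,\alpha_2$ agreeing with them on a finite set, so that $f(x_\sigma)=\beta\circ f(x_\tau)$ holds \emph{only on $S$-inputs}. But any iteration feeds the outer occurrence of $f$ values outside $S$, so this locally valid identity cannot be composed with itself; passing to a new two-element set $\gamma(S)$ does not help, because you would then need the identity on $\gamma(S)$-inputs, which requires new invertibles, and there is no reason the successive choices cohere so as to produce more fundamental indices rather than just reshuffling the old ones.

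The idea you are missing, and which the paper uses, is to exploit the model-complete core hypothesis not by approximating the outer unaries by invertibles, but by passing to a \emph{stabilizer}. Given $t$ of ambiguity degree $n$ with respect to a two-element $S$, set $R:=t[S^n]$; since $\C$ is a model-complete core, the stabilizer $\C_R$ again satisfies a non-trivial linear identity modulo outer unaries, say $u\circ s(y_1,\dots,y_m)=v\circ s(z_1,\dots,z_m)$ with $s\in\C_R$. Because $s$ literally fixes every element of $R=t[S^n]$, the $nm$-ary term $s\ast t:=s(t(\vec{x}^1),\dots,t(\vec{x}^m))$ satisfies $g_i\circ(s\ast t)\rest_{S^{nm}}=g_i\circ t\rest_{S^n}$ after identifying variables, so each $g_i\circ(s\ast t)\rest_{S^{nm}}$ is again a projection, onto some $x_i^j$. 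Now $g_i\circ u$ and $g_i\circ v$ are retractional witnesses as well, and the non-triviality $y_j\neq z_j$ forces them to pick out two \emph{different} coordinates for each $i$; hence $s\ast t$ has ambiguity degree at least $2n$. The stabilizer trick is exactly what makes the composition transparent and replaces the bookkeeping you could not carry out.
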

\begin{proof}
For any $t\in\C$ of ambiguity degree $n\geq 1$, we find $t'\in\C$ of ambiguity degree $2n$. So let $t\in\C$ be given, and let $S\subseteq C$ be a 2-element set such that $t$ has $n$ fundamental indices with respect to $S$, witnessed by functions $g_1,\ldots,g_n\colon C\To S$. By identifying  variables we may assume that $t$ is $n$-ary. Renaming the variables, we may further assume that $g_i$ witnesses the index $i$, for $1\leq i\leq n$. Set $R:=t[S^n]$. Because $\C$ is a model-complete core, in  the stabilizer $\C_R$ a nontrivial identity which is linear modulo outer unary functions is satisfied, since linear identities modulo outer functions which hold in a model-complete core also hold in all of its stabilizers (this is easy to see and well-known, but we refer to~\cite{BartoPinskerDichotomy}). Let $s\in \C_R$ witness this, i.e., $s$ satisfies the nontrivial identity $u\circ s(y_1,\ldots,y_{m})=v\circ s(z_1,\ldots,z_{m})$, for variables $y_1,\ldots,y_m,z_1,\ldots,z_m$ which are not necessarily distinct. We claim that the $nm$-ary term 
$$s\ast t:=s(t(x_1^1,\ldots,x_n^1),\ldots,t(x_1^m,\ldots,x_n^m))$$ 
has the desired property. 

To see this, we first observe that $g_i\circ (s\ast t)\rest_{S^{nm}}$ is a projection, and in fact a projection onto a variable of the form $x_i^j$: inserting variables $x_1,\ldots,x_n$ into $s\ast t$, we obtain
\begin{align*}
g_i\circ &(s\ast t)(x_1,\ldots,x_n,\ldots,x_1,\ldots,x_n)\rest_{S^n} \\
&=g_i\circ s(t(x_1,\ldots,x_n),\ldots,t(x_1,\ldots,x_n))\rest_{S^n}\\
& =g_i\circ t(x_1,\ldots,x_n)\rest_{S^n}\\&=\pi^n_i(x_1,\ldots,x_n)\rest_{S^n}, 
\end{align*}
with the second equation holding since $R$ is stabilized by $s$. In particular, $s\ast t$ has ambiguity degree at least $n$, witnessed by $g_1,\ldots,g_n$. Note furthermore that for the same reason, the functions $u\circ (s\ast t)$ and $v\circ (s\ast t)$ have ambiguity degree at least $n$, projecting to a variable of the form $x_i^j$ when composed with $g_i$ from the left. This can be restated by saying that $s\ast t$ has ambiguity degree at least $n$, with fundamental indices corresponding to variables of the form $x_i^j$ witnessed by two witnesses $g_i\circ u$ and $g_i\circ v$; we now argue that the fundamental indices witnessed by $g_i\circ u$ and $g_i\circ v$ are distinct.

To this end, fix any $1\leq i\leq n$, and say that $g_i\circ u\circ (s\ast t)\rest_{S^{nm}}$ projects onto $x_i^j$, where $1\leq j\leq {m}$. Since the equation $u\circ s(y_1,\ldots,y_{m})=v\circ s(z_1,\ldots,z_{m})$ is non-trivial, we must have $y_j\neq z_j$. On the other hand, we must have that $y_j\in\{z_1,\ldots, z_m\}$, since $s$ obviously depends on its $j$-th variable. Thus $y_j=z_{\ell}$ for some $\ell\neq j$. This means that $g_i\circ v$ is a retractional witness such that $(g_i\circ v)\circ (s\ast t)\rest_{S^{nm}}$ projects onto the variable with index $x_i^\ell$, proving our claim. 

Summarizing, each fundamental index of $t$, witnessed by some $g_i$, has a corresponding  fundamental index of $s\ast t$, also witnessed by $g_i$, and this assignment is injective; and moreover, each fundamental index of $s\ast t$, witnessed by $g_i$, yields two fundamental indices of $s\ast t$, witnessed by $g_i \circ u$ and $g_i\circ v$, respectively.
\end{proof}

We thus obtain the following theorem, which shows, in particular, how equational properties of the polymorphism clone of a structure can have implications about its automorphism group. We first formulate it in terms of function clones, and then restate it in terms of structures.

\begin{theorem}\label{thm:doubleexp}
Let $\C$ be an oligomorphic function clone which is a model-complete core. Suppose that 
\begin{itemize}
\item[(i)] $\C$ satisfies a non-trivial linear identity modulo outer unary functions, and 
\item[(ii)] $\C$ has a uniformly continuous h1 clone homomorphism onto $\mathbf 1$.
\end{itemize} 
Then $\C$ has at least double exponential orbit growth.
\end{theorem}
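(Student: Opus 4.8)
The plan is to deduce the statement from Lemmas~\ref{lem:doubleexp} and~\ref{lem:infiniteambiguity}. The hypotheses that $\C$ is a model-complete core and satisfies~(i) are precisely the standing assumptions of Lemma~\ref{lem:infiniteambiguity}, so it is enough to prove that $\C$ has a retractional witness: Lemma~\ref{lem:infiniteambiguity} then yields infinite ambiguity degree, and Lemma~\ref{lem:doubleexp} converts this into at least $2^{2^n}-1$ orbits of the group $\G$ of unary invertibles of $\C$ on $C^n$, for every $n\geq 1$, which is a double exponential orbit growth. The case of finite $C$ is vacuous and can be dispatched at the outset: a finite model-complete core satisfying~(i) admits, after eliminating the outer automorphisms, a genuine non-trivial set of linear identities, hence has no h1 clone homomorphism onto $\mathbf 1$ and so cannot satisfy~(ii). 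Thus the whole content of the proof is the production of a retractional witness, and this is the only point at which oligomorphicity and~(ii) are used.

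To construct the retractional witness I would fix a uniformly continuous h1 clone homomorphism $\xi\colon\C\to\mathbf 1$ coming from~(ii) together with a finite set $F\subseteq C$ witnessing its uniform continuity, chosen independently of the arity, so that $\xi(t)$ depends only on $t\rest_{F^n}$; as noted after Theorem~\ref{thm:h1}, one has $|F|\geq 2$. The aim is a two-element set $S=\{s_0,s_1\}\subseteq C$ and a retraction $g\colon C\to S$ such that $g\circ t\rest_{S^n}$ is the $\xi(t)$-th projection on $S$ for every $t\in\C$. Using that $\xi$ is an h1 homomorphism I would reduce this to a condition on binary operations: each value $t(\sigma)$ with $\sigma\in S^n$ equals $b(s_0,s_1)$ for the binary identification minor $b$ of $t$ that collapses the coordinates on which $\sigma$ takes the value $s_0$ to the first variable and the rest to the second, and h1 computes $\xi(b)$ from $\sigma$ and $\xi(t)$; conversely, every binary member of $\C$ arises in this way. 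It therefore suffices to find $s_0\neq s_1$ in $C$ for which $b(s_0,s_1)=b'(s_0,s_1)$ forces $\xi(b)=\xi(b')$ for all binary $b,b'\in\C$. Given such a pair, the map sending $b(s_0,s_1)$ to the $\xi(b)$-th coordinate of $(s_0,s_1)$ and every remaining element of $C$ to $s_0$ is well defined and is the desired retractional witness; the requirement that $g$ be constantly $s_0$, resp.\ $s_1$, on the $\G$-orbit of $s_0$, resp.\ $s_1$, is then automatic, since a value $b(s_0,s_1)$ lying in the orbit of $s_0$ can be rewritten as $(u\circ\pi^2_1)(s_0,s_1)$ for a unary $u$, and $\xi(u\circ\pi^2_1)=\pi^2_1$ by h1.

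The step I expect to be the principal obstacle is the last one, the existence of the pair $(s_0,s_1)$. Here I would use oligomorphicity, which provides only finitely many $\G$-orbits of pairs of elements of $C$, and argue by contradiction, supposing that every pair $s_0\neq s_1$ is ``$\xi$-ambiguous'', i.e.\ that some binary $b$ with $\xi(b)=\pi^2_1$ and some binary $b'$ with $\xi(b')=\pi^2_2$ agree at $(s_0,s_1)$. The idea is to propagate such an ambiguity through the finitely many orbits of pairs until it is carried to a pair lying inside $F$; since $\xi$ sees only restrictions to $F$, this would produce two binary operations with distinct $\xi$-value that agree on all of $F^2$, contradicting uniform continuity. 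The genuine difficulty is that $\xi$ is only an h1 homomorphism, not a full clone homomorphism, and hence need not be invariant under the natural action of $\G$ on the $n$-ary part of $\C$; this is exactly where the model-complete core hypothesis is used a second time, through density of $\G$ in the unary part of $\C$, to move the ambiguous configurations along unary functions in a way that $\xi$ can still track. I anticipate that making this propagation precise --- in particular, controlling how the relevant finite amount of data on $F$ evolves --- will be the most technical part of the argument.
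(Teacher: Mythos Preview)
Your overall plan---reduce to Lemmas~\ref{lem:doubleexp} and~\ref{lem:infiniteambiguity} and spend the proof producing a retractional witness---is exactly the paper's strategy. The divergence, and the gap, is in \emph{where} you look for the witness.

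The paper does not attempt to find a retractional witness for $\C$ itself. Instead it invokes the characterisation from~\cite{wonderland}: a uniformly continuous h1 clone homomorphism to $\mathbf 1$, together with oligomorphicity, means that $\mathbf 1$ is an expansion of a reflection of a \emph{finite power} of $\C$. Unpacking this gives an $\ell\geq 1$ and a retractional witness for the componentwise action $\C^\ell$ of $\C$ on $C^\ell$. Lemmas~\ref{lem:infiniteambiguity} and~\ref{lem:doubleexp} are then applied to $\C^\ell$ (which is again an oligomorphic model-complete core satisfying~(i)), and double exponential orbit growth transfers back to $\C$ because orbits of $n$-tuples under $\C^\ell$ are orbits of $\ell n$-tuples under $\C$.

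Your attempt to produce a retractional witness for $\C$ with $\ell=1$ is where the argument breaks down. The propagation step you sketch does not go through: even if every single pair $(s_0,s_1)$ is $\xi$-ambiguous, the witnessing binary operations $b,b'$ vary from pair to pair, and there is no mechanism offered for assembling them into a \emph{single} pair $b,b'$ that agree on all of $F^2$ yet have different $\xi$-values. Moving configurations along unary functions does not help, since an h1 homomorphism need not satisfy $\xi(u\circ b)=\xi(b)$ for unary $u$; the model-complete core hypothesis gives density of invertibles but does not by itself make $\xi$ invariant under outer composition. In short, you are trying to prove that $\mathbf 1$ lies in $\mathsf{ER}(\C)$ rather than merely in $\mathsf{ERP}_{fin}(\C)$, and this stronger statement is neither needed nor established. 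Passing to $\C^\ell$ sidesteps the issue entirely and makes the existence of the retractional witness immediate from the cited result.
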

\begin{proof}
By the results from~\cite{wonderland}, (ii) together with oligomorphicity implies that there exists $\ell\geq 1$ such that the componentwise action of $\C$ on $C^\ell$, which we denote by $\C^\ell$, has a retractional witness (in the terminology of~\cite{wonderland}, which we avoid to fully define here, the clone $\mathbf 1$ is an expansion of a reflection of a finite power of $\C$, which implies our formulation -- see also Section~\ref{sect:cores}). 
By Lemma~\ref{lem:infiniteambiguity}, $\C^\ell$ has infinite ambiguity degree, and so it has at least double exponential orbit growth by Lemma~\ref{lem:doubleexp}. Hence, also $\C$ has at least double exponential orbit growth.
\end{proof}

\begin{cor}\label{cor:structuresdoubleexp}
Let $\mathbb A$ be an $\omega$-categorical model-complete core, and suppose that $\Pol(\relA)$ satisfies (i) and (ii) of Theorem~\ref{thm:doubleexp}. Then $\relA$ has at least double exponential orbit growth.
\end{cor}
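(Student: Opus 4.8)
The plan is to deduce this directly from Theorem~\ref{thm:doubleexp} by taking $\C := \Pol(\relA)$ and checking that the hypotheses transfer. First I would recall that the group of unary invertible functions of $\Pol(\relA)$ is exactly $\Aut(\relA)$: the unary functions of $\Pol(\relA)$ are the endomorphisms of $\relA$, and the invertible ones (with inverse again in the clone) are precisely the automorphisms of $\relA$. Consequently, the componentwise action of the group of unary invertibles of $\Pol(\relA)$ on $A^n$ is just the usual action of $\Aut(\relA)$ on $n$-tuples, so the orbit growth of the function clone $\Pol(\relA)$ coincides, by definition, with the orbit growth of $\relA$.

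Next I would verify the two standing assumptions of Theorem~\ref{thm:doubleexp} for $\C = \Pol(\relA)$. Oligomorphicity of $\Pol(\relA)$ means oligomorphicity of its group of unary invertibles, i.e., of $\Aut(\relA)$; by the theorem of Ryll--Nardzewski, Engeler, and Svenonius this is equivalent to $\omega$-categoricity of $\relA$, which is assumed. That $\Pol(\relA)$ is a model-complete core as a function clone --- i.e., that its invertibles are dense in its unary part --- is precisely a restatement of $\relA$ being a model-complete core, as recorded in the preliminaries. Finally, conditions (i) and (ii) of Theorem~\ref{thm:doubleexp} for $\Pol(\relA)$ are literally the hypotheses of the corollary.

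With all hypotheses checked, Theorem~\ref{thm:doubleexp} yields that $\Pol(\relA)$ has at least double exponential orbit growth, and by the identification of orbit counts above, so does $\relA$. I do not expect any genuine obstacle here: the only point requiring a line of care is the (standard) identification of the unary invertibles of $\Pol(\relA)$ with $\Aut(\relA)$, and hence of the two notions of orbit growth; everything else is a direct appeal to $\omega$-categoricity via Ryll--Nardzewski and to Theorem~\ref{thm:doubleexp}.
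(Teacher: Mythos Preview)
Your proposal is correct and is exactly the intended argument: the paper does not even spell out a proof of this corollary, treating it as an immediate specialization of Theorem~\ref{thm:doubleexp} to $\C=\Pol(\relA)$. Your explicit verification that the unary invertibles of $\Pol(\relA)$ coincide with $\Aut(\relA)$, and hence that the two notions of orbit growth agree, is precisely the (trivial) identification needed to pass from the clone formulation to the structure formulation.
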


Note that this implies, in particular, Theorem~\ref{thm:equivalence}.
We obtain the following result in the language of clone homomorphisms, for reducts of homogeneous structures in a finite language.

\begin{cor}\label{cor:equiv-cores}
Let $\mathbb A$ be a reduct of a structure which is homogeneous in a finite relational language, and suppose $\relA$ is a model-complete core. Then the following are equivalent.
\begin{itemize}
\item[(i)] Some stabilizer of $\Pol(\mathbb A)$ has a continuous clone homomorphism to $\mathbf 1$.
\item[(ii)] $\Pol(\mathbb A)$ has a uniformly continuous h1 clone homomorphism to $\mathbf 1$.
\end{itemize}
\end{cor}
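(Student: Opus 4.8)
The plan is to derive this as a combination of the abstract Theorem~\ref{thm:doubleexp} with the fact that reducts of finitely bounded (indeed, just finite-language homogeneous) structures have \emph{bounded} orbit growth in a suitable sense — more precisely, at most exponential orbit growth — which rules out the double-exponential scenario. The direction from (i) to (ii) is the easy and known one: if some stabilizer $\Pol(\relA)_F$ admits a continuous clone homomorphism to $\mathbf 1$, then, since $\relA$ is a model-complete core, one reaches $\mathbf 1$ already by a continuous clone homomorphism of $\Pol(\relA)$ composed appropriately (via Theorem~\ref{thm:siggers}, condition (i) here is equivalent to $\relA$ \emph{not} having a Siggers term modulo outer embeddings), and in particular $\Pol(\relA)$ has a uniformly continuous h1 clone homomorphism to $\mathbf 1$; this part I would simply cite from~\cite{BartoPinskerDichotomy, wonderland}. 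So the content is the implication from (ii) to (i).

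For (ii) $\Rightarrow$ (i), I would argue by contraposition. Suppose (i) fails, i.e., no stabilizer of $\Pol(\relA)$ has a continuous clone homomorphism to $\mathbf 1$; by Theorem~\ref{thm:siggers} this is equivalent to $\Pol(\relA)$ having a Siggers term modulo outer embeddings, hence in particular $\Pol(\relA)$ satisfies a non-trivial linear identity modulo outer unary functions — this is hypothesis (i) of Theorem~\ref{thm:doubleexp}. Now suppose toward a contradiction that (ii) also holds, i.e., $\Pol(\relA)$ has a uniformly continuous h1 clone homomorphism to $\mathbf 1$; note this map is automatically onto $\mathbf 1$ (it is strong and nontrivial). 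Then $\Pol(\relA)$, being oligomorphic (as $\relA$ is $\omega$-categorical) and a model-complete core, satisfies both hypotheses of Theorem~\ref{thm:doubleexp}, so $\Aut(\relA)$ has at least double exponential orbit growth.

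It remains to contradict this by showing that a reduct $\relA$ of a homogeneous structure $\relD$ in a \emph{finite} relational language cannot have double exponential orbit growth. Here the key point is that $\Aut(\relD)\subseteq\Aut(\relA)$, so the orbit growth of $\relA$ is bounded above by that of $\relD$; and for $\relD$ homogeneous in a finite relational language of maximal arity $r$, the number of orbits of $\Aut(\relD)$ on $D^n$ is at most the number of $\{$atomic $r$-ary type$\}$-labellings of an $n$-element set up to nothing, i.e., at most $c^{n^r}$ for a constant $c$ depending only on the signature, which is \emph{singly} exponential (in a polynomial of $n$) and hence strictly below $2^{2^n}$ for large $n$. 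This contradicts the double exponential lower bound, so (ii) must fail as well, completing the contrapositive.

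The main obstacle, and the only genuinely nontrivial input beyond bookkeeping, is the orbit-growth bound for finite-language homogeneous structures: one must be careful that "finite relational language" (not "finitely bounded") already suffices, since a homogeneous structure in a finite relational language of maximal arity $r$ has the property that the orbit of an $n$-tuple is determined by the quantifier-free type, which in turn is determined by the induced substructure on the (at most $n$) coordinates, and the number of such substructures on an $n$-element set is bounded by the number of ways to choose the relations, namely at most $2^{|\sigma|\cdot n^r}$. I would state this as a short lemma (or fold it into the proof) and note that it is exactly the gap between the double-exponential regime forced by Theorem~\ref{thm:doubleexp} and the at-most-exponential regime of finite-language homogeneous structures that makes the two tractability conditions collapse. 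Finally, Corollary~\ref{cor-equivconjectures} then follows by combining Corollary~\ref{cor:equiv-cores} (applied to the model-complete core $\relB$, which is itself a reduct of a finite-language homogeneous structure when $\relA$ is, since passing to the core preserves this — citing Theorem~\ref{thm:cores} and the finite-boundedness transfer) with Theorem~\ref{thm:h1} and the fact that $\relA$ and $\relB$ have the same pp-interpretability behaviour relevant to condition (ii).
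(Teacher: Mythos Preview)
Your proof is correct and follows essentially the same route as the paper: (i)$\Rightarrow$(ii) is cited (the paper cites~\cite{wonderland}), and for (ii)$\Rightarrow$(i) one argues by contradiction via Theorem~\ref{thm:siggers} and Corollary~\ref{cor:structuresdoubleexp} (equivalently your Theorem~\ref{thm:doubleexp}), invoking the at-most-exponential orbit-growth bound for finite-language homogeneous structures, which the paper outsources to~\cite{MacphersonSurvey} rather than spelling out. One caveat on your closing tangential remark about Corollary~\ref{cor-equivconjectures}: the paper does \emph{not} assert that the model-complete core $\relB$ is itself a reduct of a finite-language homogeneous structure (this is not obvious); instead it uses only that $\Aut(\relB)$ has orbit growth bounded above by that of $\Aut(\relA)$, which already suffices to run the argument for $\relB$.
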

\begin{proof}
The implication from (i) to (ii) is a direct consequence of the results in~\cite{wonderland}. For the other direction, assume that (ii) holds.
By Theorem~\ref{thm:siggers}, (i) holds if and only if $\Pol(\mathbb A)$ has no Siggers term modulo outer unary functions. If that was not the case, then Corollary~\ref{cor:structuresdoubleexp} would imply that $\Aut(\mathbb A)$ has at least double exponential orbit growth, contradicting that $\mathbb A$ is a reduct of a structure which is homogeneous in a finite relational language (see~\cite{MacphersonSurvey}).
\end{proof}

Finally, we obtain the equivalence of the two CSP conjectures.

\begin{proof}[Proof of Corollary~\ref{cor-equivconjectures}]
By~\cite{wonderland}, $\Pol(\mathbb A)$ has a uniformly continuous h1 clone homomorphism onto $\mathbf 1$ if and only if $\Pol(\mathbb B)$ does, so~(ii) and~(iii) are equivalent.  Applying Corollary~\ref{cor:equiv-cores} to the model-complete core $\mathbb B$, and taking into account that $\Aut(\mathbb B)$ does not have faster orbit growth than $\Aut(\mathbb A)$ (for the latter, refer to the proof of the existence of the model-complete core in Section~\ref{sect:cores}), the equivalence with~(i) follows.
\end{proof}

\subsection{The counterexample}\label{subsect:counterexample} 

We now prove Theorem~\ref{thm:counterexample}. That is, we show that in Corollary~\ref{cor:equiv-cores}, it would not be sufficient to only require the structure $\relA$ to be an $\omega$-categorical model-complete core: the assumption of being a reduct of a homogeneous structure in finite language (or more precisely, as we can see from the proof, the assumption of less than double exponential orbit growth) is indeed needed.


Our counterexample is based on the 
\textit{countable atomless Boolean algebra}, i.e., the (up to isomorphism) unique countable Boolean algebra without atoms (see e.g.~\cite{Hodges}). This Boolean algebra can be described, more explicitly, as the Boolean algebra that is freely generated by a countable set of generators. Among other interesting model-theoretical properties, it is $\omega$-categorical and has double exponential orbit growth. In the following we occasionally view this structure as a relational structure $\mathbb B:= (B; \land, \lor, \neg, 0,1)$, where the relations are the graphs of the fundamental operations of the Boolean algebra (although we will sometimes use the same symbols for the operations of the Boolean algebra, without danger of confusion). The following two statements about $\mathbb B$ are essential for the construction of our counterexample.

\begin{lemma} \label{lem:counterexample} \
Let $\mathbb B= (B; \land, \lor, \neg, 0,1)$ be the countable atomless Boolean algebra. Then:
\begin{itemize}
\item[(i)] For every finite set $C \subseteq B$ there is a binary injective $f\in\Pol(\mathbb B)$ which stabilizes all elements of $C$ and which is symmetric modulo outer embeddings of $\relB$, i.e., the identity $e_1\circ f(x,y)= e_2\circ f(y,x)$ holds for some self-embeddings $e_1,e_2$ of $\mathbb B$.
\item[(ii)] $\Pol(\mathbb B)$ has a uniformly continuous h1 clone homomorphism onto $\mathbf 1$.
\end{itemize}
\end{lemma}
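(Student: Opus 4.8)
\textbf{Plan of proof for Lemma~\ref{lem:counterexample}.}

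For part (i), the plan is to build the binary symmetric-modulo-embeddings polymorphism $f$ directly from the algebraic structure of $\mathbb B$. The natural candidate is a ``discriminator-like'' or ``majority-flavoured'' combination, but since we only want a binary operation the cleanest route is this: identify $B$ with the clopen subsets of the Cantor space $2^\omega$ (equivalently, $\mathbb B$ freely generated by $x_0,x_1,\dots$), and let $f(a,b)$ be defined by ``shuffling'' $a$ and $b$ along a fixed partition of the space into two homeomorphic clopen halves $U_0, U_1$: on $U_0$ copy (a homeomorphic image of) $a$, on $U_1$ copy (a homeomorphic image of) $b$. Concretely, fix self-embeddings $\iota_0,\iota_1$ of $\mathbb B$ whose images are the principal ideals below two complementary nonzero elements $p$ and $\neg p$, and set $f(a,b):=\iota_0(a)\vee\iota_1(b)$. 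This $f$ is a polymorphism (it is a term operation of the Boolean algebra composed with endomorphisms, hence preserves the graphs of $\land,\lor,\neg$), it is injective because $\iota_0(a)\vee\iota_1(b)$ determines $\iota_0(a)$ and $\iota_1(b)$ as its meets with $p$ and $\neg p$ and the $\iota_i$ are injective, and it is symmetric modulo outer embeddings since swapping the two halves $U_0\leftrightarrow U_1$ is realized by composing with a self-embedding on each side: one checks $e_1\circ f(x,y)=e_2\circ f(y,x)$ where $e_1,e_2$ are the obvious self-embeddings sending $\iota_0(u)\vee\iota_1(v)$ to $\iota_1(u)\vee\iota_0(v)$ and to itself respectively. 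To additionally stabilize a given finite $C\subseteq B$: the subalgebra generated by $C$ is finite, so by homogeneity of the atomless Boolean algebra we may choose $p$, $\iota_0$, $\iota_1$ so that each $c\in C$ is ``split evenly'' and $f(c,c)=c$; the cleanest way is to first pass to the stabilizer $\Pol(\mathbb B)_C$, note it is the polymorphism clone of the expansion of $\mathbb B$ by the constants in $C$, which is again an atomless (or finitely-many-atoms, but still homogeneous $\omega$-categorical) Boolean algebra with named constants, and rerun the construction there, using that $f(c,c)=\iota_0(c)\vee\iota_1(c)=c$ whenever $\iota_0,\iota_1$ are chosen to restrict to the identity on the named constants.

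For part (ii), the strategy is to exhibit a uniformly continuous h1 clone homomorphism $\xi\colon\Pol(\mathbb B)\to\mathbf 1$ by using the single generator $x_0$ (the fixed clopen half $p$) as a ``probe''. Define, for an $n$-ary $g\in\Pol(\mathbb B)$, the value $\xi(g)\in\{\pi^n_1,\dots,\pi^n_n\}$ by looking at $g(a_1,\dots,a_n)$ where the $a_i$ range over the finite Boolean subalgebra generated by $x_0$ (i.e.\ $a_i\in\{0,x_0,\neg x_0,1\}$), and more precisely at which coordinate the output ``follows''. The key structural fact to extract is that every polymorphism of $\mathbb B$, restricted to any finite subalgebra, behaves like a projection in a suitable coordinatized sense — this is exactly where the atomlessness and the clopen-set picture pay off: a polymorphism preserves the lattice operations and complementation, hence on atoms of a finite subalgebra it acts ``coordinatewise'', and the assignment of which input coordinate controls a given atom is governed by a single retraction $B\to\{0,1\}$. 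The plan is therefore to produce a retractional witness for a finite power $\mathbb B^\ell$ (in the sense of the definition preceding Lemma~\ref{lem:doubleexp}), from which the existence of the uniformly continuous h1 homomorphism onto $\mathbf 1$ follows by the results of~\cite{wonderland} cited in the proof of Theorem~\ref{thm:doubleexp}; alternatively one writes $\xi$ down by hand and checks the h1 identities and uniform continuity (uniformity being witnessed by the single finite set $\{0,x_0,\neg x_0,1\}$, independent of the arity $n$, which is exactly the ``$A'$ independent of $n$'' condition recorded in the Topology subsection).

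The main obstacle I anticipate is part (ii): verifying that the candidate $\xi$ is well-defined and is a genuine h1 clone homomorphism. Well-definedness requires that the ``controlling coordinate'' read off from the probe values is consistent — that is, that a polymorphism of $\mathbb B$ cannot, on the $4$-element subalgebra generated by $x_0$, mix coordinates in a way that has no projection-like description; this is a small but genuine computation about how Boolean-algebra polymorphisms act on a two-atom quotient, and it is the crux because it is precisely the place where $\omega$-categoricity fails to suffice in general (cf.\ Theorem~\ref{thm:counterexample}) yet the rigid lattice-with-complement structure rescues us. Preserving composition with projections is then a routine check once well-definedness is in place, and uniform continuity is immediate from the construction. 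A secondary, more technical point in part (i) is keeping the self-embeddings $e_1,e_2$ inside $\overline{\Aut(\mathbb B)}$ rather than merely inside $\End(\mathbb B)$: one must check the images $\iota_0(B)\vee\iota_1(B)$-construction yields self-\emph{embeddings} (injective and relation-reflecting), which follows because each $\iota_i$ is an embedding onto a principal ideal and the two ideals are complementary, so no collapsing of the Boolean relations occurs.
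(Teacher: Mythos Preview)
Your plan coincides with the paper's for both parts. For (i) the paper also realises $f$ as an isomorphism $\mathbb B\times\mathbb B\to\mathbb B$: viewing $\mathbb B$ as free on $\{c_1,c_2,\dots\}$, it observes that $\mathbb B^2$ is free on $(1,0)$ together with the diagonal pairs $(c_i,c_i)$, and sends these to $c_{n+1}$ and to (a shift of) the $c_i$. This is exactly your $f(a,b)=\iota_0(a)\vee\iota_1(b)$ with $p=c_{n+1}$, and the symmetry witness is the automorphism $c_{n+1}\mapsto\neg c_{n+1}$. Two small corrections: your $\iota_i$ are not self-embeddings of $\mathbb B$ (they send $1$ to $p$, resp.\ $\neg p$), and for the same reason they cannot ``restrict to the identity on the named constants''; what you actually want is $\iota_0(c)=c\wedge p$ and $\iota_1(c)=c\wedge\neg p$, which the free-generator description makes automatic once one reduces (as you and the paper both do) to $C\subseteq\{c_1,\dots,c_n\}$.

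For (ii) the paper does precisely what you call the ``alternative'': it writes $\xi$ down by hand using a retraction $B\to\{0,1\}$, namely an ultrafilter $F$. The four-element probe $\{0,x_0,\neg x_0,1\}$ is unnecessary; the two-element set $\{0,1\}$ already suffices. The well-definedness computation you anticipate as the crux is short: for $n$-ary $f\in\Pol(\mathbb B)$ the elements $a_i:=f(0,\dots,0,1,0,\dots,0)$ satisfy $\bigvee_i a_i=f(1,\dots,1)=1$ and $a_i\wedge a_j=f(0,\dots,0)=0$ for $i\neq j$, so exactly one $a_i$ lies in $F$; set $\xi(f):=\pi^n_i$ for that $i$. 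Uniform continuity is witnessed by $\{0,1\}$, and preservation of identities of height one is immediate since the $a_i$ are themselves obtained from $f$ by composing with projections.
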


\begin{proof}
Let $\{c_1,c_2,\ldots \}$ be a countable set that freely generates $\mathbb B$. Since every element of $\mathbb B$ can be expressed as a term over $\mathbb B$ using finitely many generators, we can restrict ourselves in (i) to the stabilizers of sets of the form $C = \{c_1,\ldots,c_n\}$. The product algebra $\mathbb B\times \mathbb B$ is also a countable atomless Boolean algebra and thus isomorphic to $\mathbb B$. Moreover, it is freely generated by the pairs $(1,0)$ and $(c_i,c_i)$ for all $i \geq 1$: let $(a,b) \in B \times  B$, let $\phi,\psi$ be terms over $\mathbb B$ such that $a = \phi(c_1,\ldots,c_n)$ and $b = \psi(c_1,\ldots,c_n)$ in $\mathbb B$. Then we can represent the pair $(a,b)$ by 
\begin{align*}
(a,b) = &(\phi((c_1,c_1),\ldots,(c_n,c_n)) \land (1,0)) \lor \\ &(\psi((c_1,c_1),\ldots,(c_n,c_n)) \land \neg(1,0)).
\end{align*}
We now define $f\colon \mathbb B\times \mathbb B \to \mathbb B$ to be the unique homomorphism that extends the following map between the generating sets:
\begin{align*}
(1,0) &\mapsto c_{n+1}, \\
(c_i,c_i) &\mapsto c_i \text{ for all } 1\leq i \leq n,\\
(c_i,c_i) &\mapsto c_{i+2} \text{ for all } i > n.
\end{align*}

By definition $f$ is a polymorphism of $\mathbb B$ that stabilizes all the elements $c_i$ for $1\leq i \leq n$. Furthermore, since $f$ is induced by a bijection between the generating sets of free Boolean algebras, $f$ is an isomorphism between $\mathbb B \times \mathbb B$ and $\mathbb B$. It satisfies the equation $f(x,y) = e \circ f(y,x)$, where $e$ denotes the unique automorphism of $\mathbb B$ that maps $c_{n+1}$ to $\neg c_{n+1}$ and fixes all other generating elements, which concludes the proof of~(i).

In order to show (ii), let $F$ be an ultrafilter of $\mathbb B$. Then for every $f \in \Pol(\mathbb B)$ exactly one of the elements $a_1 := f(1,0,0,\ldots,0)$, $a_2 := f(0,1,0,\ldots,0)$, \ldots, $a_n := f(0,0,0,\ldots,1)$ is an element of $F$: this follows from the fact that, since $f$ stabilizes $1$, the disjunction $a_1 \lor \cdots \lor a_n$ is equal to $1$; but on the other hand $a_i \land a_j$ is equal to $0$ whenever $i \neq j$, since $f$ stabilizes $0$. Let $i_f$ be the unique index such that $a_{i_f} \in F$. Then $\xi(f) := \pi_{i_f}^n\in\mathbf 1$  defines an  h1 clone homomorphism from $\Pol(\relB)$ to $\mathbf 1$. Furthermore $\xi$ is uniformly continuous, since for every $n\geq 1$ the image $\xi(f)$ of an $n$-ary polymorphism $f$ only depends on the restriction of $f$ to the finite set $\{0,1\}^n$.
\end{proof}

Note that the countable atomless Boolean algebra $\mathbb B$ is not a model-complete core, since it can be homomorphically mapped to the two-element Boolean algebra. However, a slight change of language yields a model-complete core which satisfies all conditions of Theorem~\ref{thm:counterexample}:

\begin{proof}[Proof of Theorem~\ref{thm:counterexample}]
Let $\mathbb A$ be the expansion of the countable Boolean algebra $\mathbb B$ by the inequality relation. Then clearly $\mathbb A$ and $\mathbb B$ have the same automorphism group. Using the fact that $\mathbb A$ contains the inequality relation, it can be easily verified that $\Aut(\mathbb A)$ is dense in the endomorphisms of $\mathbb A$, and so $\mathbb A$ is an $\omega$-categorical model-complete core.

By Lemma~\ref{lem:counterexample} (i), every stabilizer of $\Pol(\mathbb B)$ contains an injective binary function which is symmetric modulo outer embeddings. Since those functions are injective, they preserve in particular the inequality relation and are thus elements of $\Pol(\mathbb A)$. Therefore no stabilizer of $\Pol(\mathbb A)$ has a clone homomorphism to $\mathbf 1$. But, by Lemma~\ref{lem:counterexample} (ii) there is a uniformly continuous h1 clone homomorphism of $\Pol(\mathbb B)$ to $\mathbf 1$, and its restriction to $\Pol(\mathbb A)$ shows that also $\Pol(\mathbb A)$ has such a clone homomorphism. \end{proof}

\subsection{The Ramsey property}\label{subsect:Ramsey}
We now prove Theorem~\ref{thm:Ramsey}, which states that also under different, Ramsey-theoretic conditions, the satisfaction of a non-trivial set of linear identities modulo outer embeddings in a polymorphism clone implies that this clone has no uniformly continuous h1 clone homomorphism to $\mathbf 1$. Although the cases covered by this result are not congruent with the range of Conjecture~\ref{conj:old}, they appear in many known classifications of CSPs over homogeneous structures; in fact such CSP classifications are often based on the fact that the underlying structures can be expanded to Ramsey structures (cf.~\cite{BP-reductsRamsey, canonical} for numerous examples and further references).

Let $\relA$ be a reduct of an ordered homogeneous Ramsey structure $\relD$ and let $\Pol(\relA)$ satisfy a non-trivial set of linear identities modulo outer embeddings of $\relD$; by homogeneity, those embeddings are elements of $\overline{\Aut(\relD)}$. Then Theorem~\ref{thm:Ramsey} claims that there is no uniformly continuous h1 clone homomorphism from $\Pol(\relA)$ to $\mathbf 1$.  We provide two proofs, a combinatorial one applying the Ramsey property directly, and a more algebraic one using dynamical systems.

\begin{proof}[First proof of Theorem~\ref{thm:Ramsey}]
Let $\Pol(\mathbb A)$ satisfy the non-trivial set of identities 
$$ u^i \circ s^i(y_1^i,\ldots,y_{m}^i)=v^i \circ t^i(z_1^i,\ldots,z_{m}^i), $$
where $u^i,v^i \in \overline{\Aut(\mathbb D)}$, $s^i, t^i\in\Pol(\relA)$, and $y_j^i,z_j^i$ are not necessarily distinct variables, for $1 \leq i \leq k$ and $1 \leq j \leq m$. The existence of a finite non-trivial set of identities follows from the compactness theorem of first-order logic, and we can assume $s^1,\ldots,s^k,t^1,\ldots,t^k$ to have equal arity $m$ by adding dummy variables. Moreover assume, for technical reasons, that every right side of an identity also appears as a left side, simply by repeating identities. 
For contradiction, let us assume that there is a uniformly continuous h1 clone homomorphism $\xi\colon \Pol(\mathbb A) \to \mathbf 1$.

By the uniform continuity of $\xi$ there is a finite $F\subseteq A$ such that whenever two functions $f,g \in \Pol(\mathbb A)$ of arity $m$ agree on $F$, then $\xi(f)=\xi(g)$. Let $\relC^1,\ldots,\relC^k$ be the structures induced  in $\mathbb D$ by the sets $s^1[F^m],\ldots, s^k[F^m]$, respectively. We are going to color the copies of $\relC^1,\ldots,\relC^k$ in $\mathbb D$.

By the homogeneity of $\mathbb D$ all such copies have domains of the form $\alpha[s^i[F^m]]$, where $\alpha \in \Aut(\mathbb D)$. Since $\mathbb D$ is totally ordered, every other $\beta\in\Aut(\relD)$ that maps $s^i[F^m]$ to $\alpha[s^i[F^m]]$ has to coincide with $\alpha$ on $s^i[F^m]$. Hence, since $\xi(\alpha \circ s^i)$ only depends on the restriction of $\alpha \circ s^i$ to $F^m$, the coloring $\chi^i$ on the copies of $\relC^i$ which sends every copy induced by $\alpha[s^i[ F]]$ to $\xi(\alpha\circ s^i)$ is well defined.
 
Now set $\mathbb S$ to be the structure induced by the union of all the sets $s^i[F^m]$ and $u^i[s^i[F^m]]$, where $1 \leq i \leq k$. By the Ramsey property, there is an isomorphic copy $\mathbb S'$ of $\mathbb S$ in $\mathbb D$ on which the colorings $\chi^i$ are monochromatic. This implies that for any $\beta\in\Aut(\relD)$ that maps $\mathbb S$ to $\mathbb S'$ we have $\xi(\beta \circ u^i \circ s^i) = \xi(\beta \circ s^i)$ and $\xi(\beta \circ v^i \circ t^i) = \xi(\beta \circ t^i)$ for all $1 \leq i \leq k$. Hence, because $\xi$ preserves linear identities, 
\begin{align*}
\xi(\beta\circ s^i(y_1^i,\ldots,y_{m}^i))&= \xi(\beta\circ u^i\circ s^i(y_1^i,\ldots,y_{m}^i))\\
&=\xi(\beta\circ v^i\circ t^i(z_1^i,\ldots,z_{m}^i))\\
&=\xi(\beta\circ  t^i(z_1^i,\ldots,z_{m}^i))\; ,
\end{align*}
 contradicting the fact that the system of the identities $ \beta\circ s^i(y_1^i,\ldots,y_{m}^i)= \beta\circ t^i(z_1^i,\ldots,z_{m}^i)$ is unsatisfiable in $\mathbf 1$.
\end{proof}

\begin{proof}[Second proof of Theorem~\ref{thm:Ramsey}]
We will use the fact due to~\cite{Topo-Dynamics} that $\Aut(\mathbb D)$ is, as the automorphism group of an ordered Ramsey structure, \emph{extremely amenable}: whenever it acts continuously on a compact Hausdorff space, then this action has a fixed point.


Fix $m\geq 1$, and let $S_m$ be the set of all mappings from the $m$-ary functions in $\Pol(\relA)$ to the $m$-ary functions in $\mathbf 1$. 
Bearing the product topology, $S_m$ is a compact Hausdorff space.
We define an action of $\Aut(\relD)$ on $S_m$ 
by setting, for $\alpha\in\Aut(\relD)$ and $\xi\in S_m$, the mapping $(\alpha \cdot \xi) \in S_m$ to be given by
$$ (\alpha \cdot \xi) (f) := \xi (\alpha^{-1}\circ f) \text{ for all } m\text{-ary } f \in \Pol(\mathbb A).$$ 

For contradiction, suppose that there is a uniformly continuous h1 clone homomorphism from $\Pol(\mathbb A)$ to $\mathbf 1$, and let $\xi\in S_m$ be its restriction to $m$-ary functions, where $m\geq 1$ is fixed. 
Consider the restriction of the above action of $\Aut(\relD)$ to the closure of the orbit of $\xi$ in $S_m$, i.e., 
let $\Aut(\relD)$ act on 
$$
C:=\overline{\{\alpha \cdot \xi \mid \alpha\in \Aut(\relD)\}}\; .
$$
Clearly, $C$ is compact. Moreover, while the action of $\Aut(\relD)$ need not be continuous on $S_m$, its restriction to $C$ is. To illustrate this, let us first observe that there exists a finite set $F \subseteq A^m$ such that for all $\psi\in C$ and all  $m$-ary $f,f'\in\Pol(\relA)$ we have that $f\rest_F = f'\rest_F$ implies $\psi(f) = \psi(f')$. Now consider a basic open neighborhood 
$$
O_f(\psi)= \{ \psi' \in C \mid \psi'(f) = \psi(f) \}
$$ of some $\psi \in C$, where the $m$-ary $f \in \Pol(\mathbb A)$ is fixed. Then by our remark above, the set 
$$
\{(\alpha,\psi') \in \Aut(\relD) \times C\mid \alpha\text{ stabilizes } f[F], \text{ and } \psi'(f) = \psi(f) \}
$$ 
is a basic open neighborhood of $(\id,\psi)$ that is mapped into $O_f(\psi)$ under the action.

Since $\Aut(\relD)$ is extremely amenable, there is a fixed point $\xi'$ of its action on $C$, i.e., $(\alpha \cdot \xi') = \xi'$ for all $\alpha \in \Aut(\relD)$. 
This means that $\xi'$ preserves composition with elements of $\Aut(\relD)$ from the outside, and by continuity even with elements of $\overline{\Aut(\relD)}$, i.e., with self-embeddings of $\relD$. Moreover, $\xi'$ preserves linear identities, since any mapping $\alpha\cdot \xi$ does, and so does any mapping in the closure of the functions of the latter form. 

It follows that $\Pol(\relA)$ cannot satisfy any finite non-trivial set of identities which are linear modulo embeddings of $\relD$ from the outside, as otherwise they would be satisfied in $\mathbf 1$ by virtue of $\xi'$, if we choose $m$ larger than all arities of the functions in that set.
\end{proof}

We would like to remark that the atomless Boolean algebra $\mathbb B$ that was used to provide the counterexample of Theorem \ref{thm:counterexample} is the reduct of a homogeneous Ramsey structure, namely of its expansion by a linear order which extends the natural partial order on $\mathbb B$ (see for instance~\cite{Topo-Dynamics}). We proved in Lemma \ref{lem:counterexample} that there are polymorphisms of $\mathbb B$ satisfying the non-trivial equation $f(x,y) = e \circ f(y,x)$. However this non-trivial equation does not satisfy the condition of Theorem \ref{thm:Ramsey}, since the embedding $e$ does not preserve any linear order on the domain of $\mathbb B$.

  \section{Linearization of Non-trivial Identities}\label{sect:linearization}


We are going to show that, under stronger conditions than the existence of a Siggers term modulo outer embeddings, we can derive the satisfaction of non-trivial linear identities in polymorphism clones. 
In Section~\ref{subsect:totally} we prove a strengthening of Theorem \ref{thm:finitelybounded}. We then show  in Section~\ref{subsect:examples} how to apply this result and similar methods to the polymorphism clones of all reducts of equality, the rational order, the random graph and the random partial order, for which complete complexity classifications of the corresponding CSPs have been obtained~\cite{ecsps, tcsps-journal, BodPin-Schaefer-both, posetCSP16}.

\subsection{Totally symmetric polymorphisms modulo embeddings}\label{subsect:totally} 
The mentioned strengthening of Theorem \ref{thm:finitelybounded},  Proposition~\ref{prop:finitelybounded}, uses a weaker notion of total symmetry.

\begin{definition}\label{def:numinors}
Let $f(x_1,\ldots,x_k)$ be a $k$-ary operation on a set $D$. We define the \textit{nu-minors of $f$} as the binary functions $h_i^f(x,y) := f(x,\ldots,x,y,x,\ldots,x)$, where $1\leq i\leq k$ and the only $y$ is located on the $i$-th coordinate. When $\relD$ is a relational structure on $D$, we say that the nu-minors of $f$ are \textit{totally symmetric modulo outer embeddings of $\relD$} if for all permutations $\rho$ of $\{1,\ldots,k\}$ there are embeddings $e_\rho, e'_\rho$ of $\relD$ such that
$$
e_\rho\circ h_i^f(x,y) =e'_\rho\circ h_{\rho(i)}^f(x,y) \text{ for all } 1\leq i\leq k\;.
$$
\end{definition}
Clearly, whenever $f$ is totally symmetric modulo outer embeddings of $\relD$, then also its nu-minors are totally symmetric modulo outer embeddings of $\relD$. On the other hand, we remark that if $f$ is a \emph{weak near unanimity} function modulo outer embeddings of $\relD$, i.e., satisfies the identities
$$
e_1\circ h_1^f(x,y)=\cdots=e_k\circ h_k^f(x,y)
$$
for embeddings $e_1,\ldots,e_k$ of $\relD$, then this does not imply in an obvious way that its nu-minors are symmetric modulo outer embeddings. We will show the following.

\begin{proposition} \label{prop:finitelybounded}
Let $\relA$ be a reduct of a finitely bounded homogeneous structure $\relD$ whose age is given by a finite set of forbidden substructures all of which have size at most $k \geq 3$. 

If $\Pol(\relA)$ contains a $k$-ary function whose nu-minors are totally symmetric modulo outer embeddings of $\relD$,  then $\Pol(\relA)$ does not have an h1 clone homomorphism to $\mathbf 1$.
\end{proposition}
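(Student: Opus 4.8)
The plan is to argue by contradiction: suppose $\xi\colon\Pol(\relA)\to\mathbf 1$ is an h1 clone homomorphism, and let $f$ be the given $k$-ary polymorphism whose nu-minors $h_1^f,\ldots,h_k^f$ are totally symmetric modulo outer embeddings of $\relD$. Since an h1 clone homomorphism to $\mathbf 1$ is determined by where it sends minors, the key object is the index $\iota := \xi(f) \in \{1,\ldots,k\}$, i.e.\ the coordinate such that $\xi(f)=\pi^k_\iota$; equivalently $\xi(h_\iota^f)=\pi^2_2$ and $\xi(h_j^f)=\pi^2_1$ for $j\ne\iota$. The first step is to extract from the total symmetry of the nu-minors, together with the fact that $\xi$ preserves height-one identities, a strong constraint: if $e_\rho\circ h_i^f = e'_\rho\circ h_{\rho(i)}^f$ and, crucially, $\xi$ is insensitive to composition with the outer embeddings $e_\rho,e'_\rho$ — which it is \emph{not} automatically, since $\xi$ need not be continuous — then $\xi(h_i^f)=\xi(h_{\rho(i)}^f)$ for all $\rho$, forcing $k=1$, a contradiction. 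So the real content, and the main obstacle, is to handle the outer embeddings $e_\rho, e'_\rho$ when no continuity is available.

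To overcome this, I would exploit finite boundedness with bound $k$ in the following way. The strategy is to build, from $f$ and the embeddings witnessing total symmetry, a \emph{new} polymorphism $g$ of $\relA$ of some larger arity whose minors $\xi(g_\cdot)$ encode a system of linear identities that is unsatisfiable in $\mathbf 1$ — in other words, to ``absorb'' the outer embeddings into a genuine height-one identity over $\Pol(\relA)$ alone. The point of the size-$k$ bound is that a map defined on $D^k$ is a polymorphism of $\relD$ precisely if it preserves the finitely many forbidden substructures, each of which has at most $k$ points; so any $k$-ary operation that is ``locally'' an embedding on every $k$-element subset is already an endomorphism, and one can feed tuples of length $\le k$ through $f$ and its symmetries while controlling membership in the age. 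Concretely I expect to compose $f$ with its coordinate permutations and with itself (a ``$\ast$''-type construction, as in the proof of Lemma~\ref{lem:infiniteambiguity}) so that the outer embeddings cancel against each other over subsets of size $\le k$, yielding honest linear identities among the minors of the resulting polymorphism.

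With such a polymorphism $g\in\Pol(\relA)$ in hand, the final step is a counting/combinatorial argument on $\iota=\xi(g)$: the identities forced on the minors of $g$ (each of the form $\xi(g(\text{some substitution of }x,y))=\xi(g(\text{another substitution}))$, with no outer functions) cannot all hold in $\mathbf 1$, because they would, roughly, simultaneously demand that the distinguished coordinate $\iota$ of $g$ move under every permutation in a transitive group of permutations while staying fixed — the same $k\ge 2$ obstruction as above, now legitimately applied. This contradicts the existence of $\xi$ and completes the proof. The step I expect to be genuinely delicate is the construction of $g$ and the verification that it is a polymorphism of $\relA$ (not merely of $\relD$): one must check that the substitutions used only ever produce tuples whose entries, restricted to any relation of $\relA$, already occurred among values of $f$, so that preservation is inherited; finite boundedness at level $k$ and the homogeneity of $\relD$ (so that the relevant embeddings lie in $\overline{\Aut(\relD)}$ and can be glued) are exactly what make this bookkeeping go through.
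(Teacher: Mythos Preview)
Your diagnosis is correct: the only obstacle is that $\xi$ need not respect composition with the outer embeddings $e_\rho,e'_\rho$, and the task is to manufacture genuine height-one identities in $\Pol(\relA)$ that are unsatisfiable in $\mathbf 1$. But the concrete plan you sketch --- composing $f$ with itself via an $s\ast t$-style construction as in Lemma~\ref{lem:infiniteambiguity}, hoping the embeddings ``cancel over subsets of size $\le k$'' --- does not obviously do this. If you compute the nu-minors of $f\ast f$, you get $h_{(i,j)}^{f\ast f}(x,y)=h_i^f\bigl(f(x,\dots,x),\,h_j^f(x,y)\bigr)$, and applying the total-symmetry identities still leaves outer embeddings on both sides; nothing has been absorbed. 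Your use of finite boundedness is also off target: the bound $k$ is not used to certify that a $k$-ary operation is a polymorphism, and you make no use of the hypothesis $k\ge 3$ at all.

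What the paper actually does is different in two essential respects. First, the combinatorial endgame is not about a single $g$ with a transitive symmetry group, but a pigeonhole argument (Lemma~\ref{lem:lineq}): one constructs $2k-1$ binary polymorphisms $g_1,\dots,g_{2k-1}$ such that for \emph{every} injection $\psi\colon\{1,\dots,k\}\to\{1,\dots,2k-1\}$ there is some $f_\psi\in\Pol(\relA)$ whose nu-minors are exactly $g_{\psi(1)},\dots,g_{\psi(k)}$; then any h1 homomorphism to $\mathbf 1$ would send $k$ of the $g_i$ to the same projection, contradicting that those $k$ are the nu-minors of some $f_\psi$. Second, the $g_i$ and $f_\psi$ are built not by composing $f$ with itself but by an \emph{amalgamation} (Lemma~\ref{lem:funcsconstruction}): over a finite $F$, one forms for each $\psi$ a copy $\mathbb X_\psi$ of the structure on $h_1^f[F^2]\cup\cdots\cup h_k^f[F^2]$, indexed so that the copies overlap coherently --- total symmetry of the nu-minors is exactly what makes the overlaps isomorphic --- and glues all $\mathbb X_\psi$ into one structure $\mathbb X$ on (classes of) $F^2\times\{1,\dots,2k-1\}$. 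Here $k\ge 3$ is needed so that the gluing relation is transitive (any three indices lie in the range of a single $\psi$), and the size-$k$ bound is used to embed $\mathbb X$ into $\relD$: every $k$-element substructure of $\mathbb X$ already sits inside some $\mathbb X_\psi$, hence in the age of $\relD$. The coordinate functions of this embedding give the $g_i^F$; a compactness step (Lemma~\ref{lem:endoexistence}) removes the dependence on $F$, and the $f_\psi$ are then of the form $e_\psi\circ f$ for suitable $e_\psi\in\overline{\Aut(\relD)}$. This is where the outer embeddings are genuinely absorbed --- into the choice of the $g_i$, not by algebraic cancellation.
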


The proof of Proposition \ref{prop:finitelybounded} is based on the following easy observation, which relies on the pigeonhole principle. 
\begin{lemma} \label{lem:lineq}
Let $\cloC$ be a function clone, let $k \geq 2$, and assume that there are binary functions $g_i \in \cloC$ for all $i \in \{1,\ldots, 2k-1\}$ such that
for every injective $\psi\colon \{1,\ldots,k\}\To\{1,\ldots, 2k-1\}$ there exists $f_{\psi}(x_1,\ldots,x_k) \in \cloC$ whose nu-minors equal the functions $g_{\psi(1)},\ldots,g_{\psi(k)}$.
Then there is no h1 clone homomorphism from $\cloC$ to $\mathbf 1$.
\end{lemma}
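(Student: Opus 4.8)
The plan is to argue by contradiction: suppose $\xi\colon\cloC\To\mathbf 1$ is an h1 clone homomorphism. Since an h1 clone homomorphism to $\mathbf 1$ is automatically a strong h1 clone homomorphism (as noted in the preliminaries), $\xi$ preserves all linear identities; in particular, for any $f\in\cloC$, the values of $\xi$ on $f$ and on the nu-minors of $f$ are linked by the linear identities $h_i^f(x,y)=f(x,\ldots,x,y,x,\ldots,x)$. Concretely, if $\xi(f)=\pi^k_j$, then applying $\xi$ to this identity gives $\xi(h_i^f)=\pi^2_1$ when $i\neq j$ and $\xi(h_j^f)=\pi^2_2$. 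In other words, among the $k$ nu-minors of $f$, exactly one (the one in the ``active'' coordinate $j$) is sent to $\pi^2_2$ and all others are sent to $\pi^2_1$.

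Now I would apply this to the functions $g_1,\dots,g_{2k-1}$. Each $g_i$ is a binary function in $\cloC$, so $\xi(g_i)\in\{\pi^2_1,\pi^2_2\}$. Consider the set $J:=\{\, i\in\{1,\ldots,2k-1\} : \xi(g_i)=\pi^2_2 \,\}$. By the pigeonhole principle, either $|J|\geq k$ or $|\{1,\ldots,2k-1\}\setminus J|\geq k$. In the first case, choose an injective $\psi\colon\{1,\ldots,k\}\To\{1,\ldots,2k-1\}$ whose image lies entirely inside $J$; then the function $f_\psi$ has all $k$ of its nu-minors $g_{\psi(1)},\ldots,g_{\psi(k)}$ sent by $\xi$ to $\pi^2_2$, contradicting that exactly one nu-minor of $f_\psi$ can be sent to $\pi^2_2$. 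In the second case, choose $\psi$ with image inside the complement of $J$; then all $k$ nu-minors of $f_\psi$ are sent to $\pi^2_1$, so \emph{none} is sent to $\pi^2_2$, again contradicting that exactly one must be. (Here we use $k\geq 2$, so that ``exactly one of the $k$'' is genuinely restrictive in both directions.) Either way we reach a contradiction, so no h1 clone homomorphism $\cloC\To\mathbf 1$ exists.

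The only step requiring a little care is the very first one — verifying that a strong h1 clone homomorphism really does force ``exactly one nu-minor active''. This is where one must be slightly attentive: $\xi$ preserves the height-one identity expressing $h_i^f$ as an instance of $f$ with repeated variables, and $\xi(f)$ being a projection $\pi^k_j$ pins down each $\xi(h_i^f)$ uniquely. I do not expect any genuine obstacle here; it is the routine unwinding of the definition of a strong h1 clone homomorphism together with the fact that all projections are preserved. The pigeonhole count with $2k-1$ in place of, say, $2k$ is exactly what makes the two cases exhaustive, and this is the combinatorial heart of the argument.
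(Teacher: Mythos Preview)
Your proof is correct and follows the same approach as the paper's: both argue by contradiction, use pigeonhole on the $2k-1$ binary functions to find $k$ of them sent to the same projection, and observe that the nu-minors of any $f_\psi$ cannot all have the same image. You have simply spelled out in detail the step the paper compresses into one clause (``this contradicts the fact that $g_{\psi(1)}, \ldots, g_{\psi(k)}$ are the nu-minors of $f_{\psi}$'').
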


\begin{proof}
If there was an h1 clone homomorphism from $\cloC$ onto $\mathbf 1$, then by the pigeonhole-principle there would be $\psi\colon \{1,\ldots,k\}\To\{1,\ldots, 2k-1\}$ such that $g_{\psi(1)}, \ldots, g_{\psi(k)}$ all are sent to the same projection. But this contradicts the fact that $g_{\psi(1)}, \ldots, g_{\psi(k)}$ are the nu-minors of $f_{\psi}$.
\end{proof}

It is further enough to find polymorphisms that satisfy the equations in Lemma \ref{lem:lineq} locally, by the following lemma which can be proven by a simple compactness argument.

\begin{lemma}[Lemma 3 in \cite{canonical}] \label{lem:endoexistence}
Let $\mD$ be an $\omega$-categorical structure. Let $J$ be a set, and for every $j \in J$, let $f_j$ and $g_j$ be functions on $D$ of the same arity $m\geq 1$ such that for every finite $F\subseteq A^m$ there is $\alpha_{j} \in \Aut(\mD)$ with $\alpha_{j} \circ f_j\rest_F = g_j\rest_F$. Then there are $(e_{j})_{j \in J}, e \in \overline{\Aut(\mD)}$ such that $e_j \circ f_j = e \circ g_j$ for all $j \in J$. Moreover, if for $j_1,j_2 \in J$ we have $\alpha_{j_1} = \alpha_{j_2}$ for every finite set $F$, then $e_{j_1} = e_{j_2}$.
\end{lemma}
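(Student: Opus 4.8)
The plan is to build the maps $e$ and $(e_j)_{j\in J}$ together as limits of the locally supplied automorphisms, the engine being $\omega$-categoricity --- in the form of Ryll--Nardzewski, i.e.\ finiteness of the number of orbits of tuples of each fixed length --- which is what makes this compactness argument go through even though $D^D$ is not compact. I would begin with the bookkeeping: fix an enumeration of $D$ and an increasing chain $F_1\subseteq F_2\subseteq\cdots$ of finite subsets of $D^m$ with $\bigcup_n F_n=D^m$ (possible as $D$ is countable), and for each $j$ choose $\alpha_{j,F_n}\in\Aut(\mathbb D)$ with $\alpha_{j,F_n}\circ f_j=g_j$ on $F_n$.

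The structural observation guiding the construction is that the data already determine what each $e_j$ must be on $\ran(f_j)$ relative to $e$. From $\alpha_{j,F}\circ f_j=g_j$ on $F$ one reads off that the prescription $p_j(f_j(\bar a)):=g_j(\bar a)$ is a well-defined partial map on $\ran(f_j)$ --- apply the hypothesis with $F=\{\bar a,\bar b\}$ to see that $f_j(\bar a)=f_j(\bar b)$ forces $g_j(\bar a)=g_j(\bar b)$ --- that it is injective, and that each of its finite restrictions is a restriction of an automorphism, since any finite $S\subseteq\ran(f_j)$ lies in some $f_j[F]$, where $p_j$ agrees with $\alpha_{j,F}$. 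Because $g_j=p_j\circ f_j$, the identity $e_j\circ f_j=e\circ g_j$ will hold as soon as $e_j$ agrees with $e\circ p_j$ on $\ran(f_j)$. So the task reduces to this: choose a single $e\in\overline{\Aut(\mathbb D)}$ and, for every $j$ at once, extend the partial elementary map $e\circ p_j$ (defined on $\ran(f_j)$) to a \emph{total} self-map $e_j\in\overline{\Aut(\mathbb D)}$.

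This is where I expect the main obstacle to lie, and it also explains why the conclusion is phrased with a common $e$ rather than simply ``$e_j\circ f_j=g_j$''. Taking $e=\id$ fails in general: $p_j$ can be a bijection of $\ran(f_j)$ onto all of $D$ even though $\ran(f_j)\neq D$ (equivalently, $p_j^{-1}$ is then a proper self-embedding), and then, being injective, $p_j$ has no total injective extension. The remedy is to use the freedom in $e$ by choosing it to be a proper self-embedding of $\mathbb D$ whose range omits enough --- such $e$ exist precisely in the case where the obstruction can occur --- so that $e\circ p_j$ lands in a ``small'' part of $D$ and has room to be completed to a total self-embedding; finiteness of the number of orbits in each arity is what lets this completion, together with the simultaneous choice of $e$, be carried out by a back-and-forth along the chain $(F_n)$ and the enumeration of $D$. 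The ``moreover'' clause then follows by indexing the construction by the distinct families $F\mapsto\alpha_{j,F}$ instead of by $j$: the partial datum $p_j$ depends only on $(\alpha_{j,F})_F$, so if $\alpha_{j_1,F}=\alpha_{j_2,F}$ for every finite $F$, one extends the common composite once and assigns it to both indices, obtaining $e_{j_1}=e_{j_2}$.
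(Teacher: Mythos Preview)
The paper does not supply its own proof; it cites Lemma~3 of \cite{canonical} and says only that the result ``can be proven by a simple compactness argument.'' So there is nothing in the paper to compare against beyond that remark.

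Your sketch is more detailed than this and is correct in its essentials. The reduction via the well-defined partial maps $p_j\colon\ran(f_j)\to D$ is clean, your identification of the obstruction (a $p_j$ that is onto $D$ but not total) is exactly right, and your explanation of why the shared outer $e$ is needed --- together with the pleasant observation that whenever the obstruction arises, $p_j^{-1}$ is already a proper self-embedding of $\mathbb D$, so such $e$ are available --- is on point. The final K\"onig/back-and-forth step is left as a plan rather than executed, which matches the paper's own level of detail. One point worth making explicit in a full write-up: once $e$ is fixed, the extensions $e_j$ can be produced independently for each $j$, and at each finite stage of building $e$ only finitely many orbit-types of the relevant partial data occur by $\omega$-categoricity; this is what makes arbitrary (even uncountable) $J$ unproblematic. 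Your handling of the ``moreover'' clause, by indexing the construction by the distinct families $(\alpha_{j,F})_F$ rather than by $j$, is correct.
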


We are going to construct the functions $g_i$ needed for Lemma \ref{lem:lineq} as suitable compositions of the nu-minors of $f$ with embeddings of $\mD$.

\begin{lemma}\label{lem:funcsconstruction}
Let $\mA$, $\mD$, and $f(x_1,\ldots,x_k)$ be as in Proposition \ref{prop:finitelybounded}, and let $F\subseteq A$ be finite. Then there are binary $g^F_1, \ldots, g^F_{2k-1}\in\Pol(\relA)$ such that for every injective $\psi\colon\{1,\ldots,k\} \To \{1,\ldots, 2k-1\}$ there exists $\alpha_\psi \in \Aut(\mD)$ such that $  g^F_{\psi(i)}\rest_{F^2} = \alpha_\psi\circ h_{i}^f\rest_{F^2}$ for all $1\leq i\leq k$.
\end{lemma}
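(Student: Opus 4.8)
The goal is to build, for a fixed finite $F \subseteq A$, binary polymorphisms $g^F_1,\ldots,g^F_{2k-1} \in \Pol(\relA)$ so that every injective $\psi\colon\{1,\ldots,k\}\to\{1,\ldots,2k-1\}$ is realized, on $F^2$ and up to an outer automorphism $\alpha_\psi$ of $\relD$, by the nu-minors of a single $k$-ary polymorphism of $\relA$. The natural candidate for this $k$-ary polymorphism is $f$ itself (or a composition of $f$ with outer embeddings): its nu-minors $h^f_1,\ldots,h^f_k$ are totally symmetric modulo outer embeddings of $\relD$ by hypothesis. So the heart of the construction is to exhibit enough copies of $F^2$ inside $D$ — pairwise ``far apart'' in a suitable sense — so that we can stitch the nu-minors of $f$ together into $2k-1$ binary functions, and so that picking out $k$ of them (via $\psi$) looks, locally on $F^2$, exactly like choosing $k$ nu-minors of a suitably permuted version of $f$.

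First I would use finite boundedness: since $\relD$ is finitely bounded with forbidden substructures of size at most $k$, and $k \geq 3$, a disjoint union of $2k-1$ isomorphic copies of any fixed finite induced substructure of $\relD$ again omits all forbidden substructures of size $\le k$ — here one must be slightly careful, because a forbidden substructure could in principle straddle two or more copies; the point is that with $2k-1$ copies and forbidden size $\le k$, any $\le k$-element subset meets at most $k$ of the copies, but since each pair of distinct copies of $F$ can be chosen so that any cross-pair configuration of $\le k$ points already embeds into a single enlarged copy (using homogeneity of $\relD$ to realize the ``generic'' amalgam), one arranges that the union still embeds into $\relD$. Thus inside $\relD$ we can find $2k-1$ copies $F^{(1)},\ldots,F^{(2k-1)}$ of $F$ whose union, as an induced substructure of $\relD$, is isomorphic to the free amalgam (disjoint union, relationally) of $2k-1$ copies of $\relD\rest_F$. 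Fix automorphisms $\beta_j \in \Aut(\relD)$ (using homogeneity) with $\beta_j[F] = F^{(j)}$.

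Next I would define $g^F_j$ to mimic $h^f_j$ but ``routed through the $j$-th copy''. Concretely, for $j \le k$ one can try $g^F_j(x,y) := \beta_j^{-1} \circ h^f_{\sigma(j)}(\ldots)$ for an appropriate bookkeeping; more robustly, since the $2k-1$ copies are relationally independent, a function that agrees with (an embedded translate of) $h^f_i$ on the $i$-th copy and is defined arbitrarily-but-consistently (e.g.\ as a projection) elsewhere will still preserve all relations of $\relD$, hence of $\relA$. One then checks: for each injective $\psi$, the function $f_\psi := f$, composed on the outside with the embedding of $\relD$ that carries the standard copy to the union $\bigcup_{i} F^{(\psi(i))}$ and reshuffles coordinates according to $\psi$, has nu-minors which restricted to $F^2$ coincide with $g^F_{\psi(i)}$ up to a common outer automorphism $\alpha_\psi$ — the commonality of $\alpha_\psi$ across $i$ comes from total symmetry modulo outer \emph{embeddings}: the permutation identity $e_\rho \circ h^f_i = e'_\rho \circ h^f_{\rho(i)}$ supplies, for the permutation $\rho$ induced by $\psi$ on the chosen $k$ copies, a single pair $e_\rho, e'_\rho$ working for all $i$ simultaneously, and composing with the fixed $\beta_j$'s keeps this uniform.

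The main obstacle I anticipate is precisely the bookkeeping of the outer maps: ensuring that the embeddings of $\relD$ produced by total symmetry modulo outer embeddings can be combined with the copy-selecting automorphisms $\beta_j$ into a \emph{single} automorphism $\alpha_\psi \in \Aut(\relD)$ that simultaneously corrects all $k$ nu-minors on $F^2$, rather than $k$ separate ones. This is where homogeneity of $\relD$ and the finiteness of $F$ are essential: a finite partial isomorphism (the one identifying the $k$ relevant copies and applying the symmetry-correcting embeddings on the finite set $F^2$ and its images) extends to a global automorphism. A secondary subtlety is verifying that the ``arbitrary elsewhere'' extension of $g^F_j$ to all of $D$ can be taken inside $\Pol(\relA)$ — handled by using relationally-independent copies so that preservation on each copy, plus preservation by a background projection, yields global preservation; alternatively one invokes $\omega$-categoricity and Lemma~\ref{lem:endoexistence}-style compactness at a later stage and here only needs $g^F_j$ defined on $F^2$, extending it to an honest polymorphism by an independent interpolation argument.
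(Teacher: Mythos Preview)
Your approach has a genuine gap. The core problem is the ``free amalgam / relationally independent copies'' idea: the conclusion requires that for \emph{every} injective $\psi$, the tuple $\bigl(g^F_{\psi(i)}(a,b)\bigr)_{1\le i\le k,\,(a,b)\in F^2}$ has the same $\relD$-type as $\bigl(h^f_i(a,b)\bigr)_{1\le i\le k,\,(a,b)\in F^2}$. Hence the relations in $\relD$ holding between $g^F_{j_1}(a,b)$ and $g^F_{j_2}(c,d)$ for $j_1\neq j_2$ must reproduce exactly the (in general non-trivial) relations between $h^f_{i_1}(a,b)$ and $h^f_{i_2}(c,d)$; they cannot be ``generic'' or ``empty across copies''. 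If the values are placed in relationally independent copies, no automorphism $\alpha_\psi$ can carry the nu-minor picture, with its specific cross-relations, onto yours. Your description of $g^F_j$ (``agrees with $h^f_i$ on the $i$-th copy, projection elsewhere'') also does not type-check: $g^F_j$ is a binary operation on $A$, not a map defined on a copy of $F$, and there is no $h^f_j$ to mimic for $j>k$. You have also misidentified the role of $k\ge 3$: it is not needed for a disjoint-union argument, and indeed a free disjoint union may well contain a forbidden substructure (nothing prevents a forbidden substructure from having few or no relations).

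The paper's argument is different and does not use disjoint copies. It builds a structure $\mathbb X$ on (a quotient of) $F^2\times\{1,\ldots,2k-1\}$ by declaring, for each $\psi$, the relations on the fibres over $\psi[\{1,\ldots,k\}]$ to be the pullback along $(a,b,i)\mapsto h^f_{\psi^{-1}(i)}(a,b)$ of the relations in $\relD$. Total symmetry of the nu-minors modulo outer embeddings is precisely what makes these declarations consistent on overlaps (all the local structures $\mathbb X_\psi$ are pairwise isomorphic via reindexing). The hypothesis $k\ge 3$ is used to show that the union $\sim$ of the kernels $\sim_\psi$ is transitive: any three indices fit in the image of a single $\psi$. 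Finite boundedness with bound $k$ then enters as follows: any $\le k$ elements of $\mathbb X$ involve at most $k$ indices and hence already lie in some $\mathbb X_\psi$, which embeds into $\relD$; so $\mathbb X$ omits all forbidden substructures and embeds into $\relD$. Setting $g^F_j(a,b)$ to be the image of $[(a,b,j)]_\sim$ under this embedding gives the desired functions on $F^2$, and homogeneity of $\relD$ supplies $\alpha_\psi$.
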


\begin{proof}

Whenever $\psi\colon \{1,\ldots,k\}\To \{1,\ldots,2k-1\}$ is injective, we define a mapping $\varphi_{\psi}$ 
\begin{align*}
 F^2\times \psi[\{1,\ldots,k\}]\;&\To\; D\\
(a,b,i)\;&\mapsto\; h_{\psi^{-1}(i)}^f(a,b)\; .
\end{align*} 
Writing $\sim_{\psi}$ for the kernel of $\varphi_{\psi}$, we then naturally obtain a structure $\mathbb X_\psi$  in the language of $\relD$ on the set $(F^2\times \psi[\{1,\ldots,k\}])/\sim_{\psi}$ of kernel classes of $\psi$, in which we choose the relations to be so that the mapping from $\mathbb X_\psi$ to $\relD$ induced by $\varphi_{\psi}$ is an embedding.

The main point of our construction is the observation that because the nu-minors of $f$ are totally symmetric modulo outer embeddings of $\relD$, any two structures $\mathbb X_{\psi_1}, \mathbb X_{\psi_2}$ are isomorphic via the mapping that sends any kernel class $[(a,b,i)]_{\sim_{\psi_1}}$ to $[(a,b,\psi_2\circ \psi_1^{-1}(i))]_{\sim_{\psi_2}}$. For example, to see that this mapping is well-defined, note that by definition $(a,b,i)\sim_{\psi_1} (c,d,j)$ if and only if $h^f_{\psi_1^{-1}(i)}(a,b)=h^f_{\psi_1^{-1}(j)}(c,d)$; but this is the case, by definition, if and only if $(a,b,\psi_2\circ \psi_1^{-1}(i))\sim_{\psi_2} (c,d,\psi_2\circ \psi_1^{-1}(j))$. Similarly, one checks that the mapping is an isomorphism.

We define a binary relation $\sim$ on $F^2\times \{1,\ldots,2k-1\}$ by setting $(a,b,i) \sim (c,d,j)$ if and only if there is a $\psi$ such that $(a,b,i) \sim_\psi (c,d,j)$, and claim that it is transitive, and thus an equivalence relation. To see transitivity, note that by the total symmetry of nu-minors, $(a,b,i) \sim (c,d,j)$ is equivalent to the statement that for every $\psi$ containing $i$ and $j$ in its image $(a,b,i) \sim_\psi (c,d,j)$ holds. Now let $(a,b,i), (c,d,j), (u,v,m) \in F^2\times\{1,\ldots,2k-1\}$ with $(a,b,i) \sim (c,d,j)$ and $(c,d,j) \sim (u,v,m)$. Since $k \geq 3$, there is an injection $\psi$ such that $(a,b,i) \sim_\psi (c,d,j) \sim_\psi (u,v,m)$, and hence $(a,b,i) \sim _\psi (u,v,m)$.

Since the relations of the structures $\mathbb X_\psi$ agree on their intersections, we obtain a structure $\mathbb X$ on the equivalence classes of $\sim$, defined as the ``union'' of the structures $\mathbb X_\psi$. This structure $\mathbb X$ does not contain any forbidden substructures of $\relD$, since any $k$-element substructure of $\mathbb X$ is already contained in some structure $\mathbb X_\psi$, which in turn embeds into $\relD$. Hence, $\mathbb X$ embeds into $\relD$ via an embedding $\varphi$. For $1\leq i\leq 2k-1$ and $(a,b)\in F^2$, we now set $g_i^F(a,b):=\varphi([(a,b,i)]_{\sim})$.

Given $\psi\colon\{1,\ldots,k\} \To \{1,\ldots, 2k-1\}$ as in the lemma, it is clear from the definition of $\mathbb X_\psi$ that the tuples $(g_{\psi(i)}^F(a,b)\mid 1\leq i\leq k,\; (a,b)\in F^2)$ and $(h_i^f(a,b)\mid 1\leq i\leq k,\; (a,b)\in F^2)$ satisfy the same relations in $\relD$. By the homogeneity of $\relD$, the latter can be sent to the first via an automorphism $\alpha_\psi$ of $\relD$, which is what we had to show.
\end{proof}

We have now all the tools ready to prove Proposition \ref{prop:finitelybounded}.

\begin{proof}[Proof of Proposition \ref{prop:finitelybounded}]
Let $f(x_1,\ldots,x_k)\in\Pol(\relA)$ have totally symmetric nu-minors. For every finite $F\subseteq A$, fix functions $g_1^F,\ldots,g_{2k-1}^F$ provided by Lemma~\ref{lem:funcsconstruction}. By a compactness argument similar to the one proving Lemma~\ref{lem:endoexistence}, we can assume that these functions are independent of $F$, i.e., that there exist functions $g_1,\ldots,g_{2k-1}$ which have the property stated in Lemma~\ref{lem:funcsconstruction} for every finite $F\subseteq A$. More precisely, this is achieved by observing that when $g_1^F,\ldots,g_{2k-1}^F$ are replaced by $\beta^F \circ g_1^F,\ldots,\beta^F\circ g_{2k-1}^F$ for some $\beta^F \in\Aut(\mathbb D)$, then they retain the property of Lemma~\ref{lem:funcsconstruction}; and picking finite sets $F_j\subseteq A$ for $j\in\omega$ such that $\bigcup_{j\in\omega} F_j=A$, we can use the $\omega$-categoricity of $\mathbb D$ to choose $\beta^{F_j}\in\Aut(\mathbb D)$ in such a way that for each $1\leq i\leq 2k-1$ the sequence $(\beta^{F_j}\circ g_i^{F_j})_{j\in\omega}$ converges to a function $g_i$. The functions $g_1,\ldots,g_{2k-1}$ then have the property claimed above.

By Lemma \ref{lem:endoexistence} we obtain embeddings $e, e_\psi \in \overline{\Aut(\mD)}$ such that $e\circ g_{\psi(i)}= e_\psi\circ h_i^f$ for all $1\leq i\leq k$. Then the functions $f_\psi := e_\psi \circ f$ and their nu-minors $e\circ g_{\psi(1)},\ldots, e\circ g_{\psi(k)}$ satisfy the conditions of Lemma \ref{lem:lineq}, which concludes the proof.
\end{proof}

Observing that the assumption $k\geq 3$ was only needed to ``amalgamate" the kernels in the proof of Lemma~\ref{lem:funcsconstruction}, we obtain the following variant of Proposition~\ref{prop:finitelybounded} in which we trade the condition $k\geq 3$ for injectivity.

\begin{cor} \label{prop:finitelybounded2}
Let $\relA$ be a reduct of a finitely bounded homogeneous structure $\relD$ whose age is given by a finite set of forbidden substructures all of which have size at most $k \geq 2$. 

If $\Pol(\relA)$ contains a $k$-ary function which is injective and whose nu-minors are totally symmetric modulo outer embeddings of $\relD$,  then $\Pol(\relA)$ does not have an h1 clone homomorphism to $\mathbf 1$.
\end{cor}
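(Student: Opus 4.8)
The plan is to mimic the proof of Proposition~\ref{prop:finitelybounded} essentially verbatim, and to check that the single place where $k\geq 3$ was used in the proof of Lemma~\ref{lem:funcsconstruction} — namely the transitivity argument for the relation $\sim$, which used the existence of an injection $\psi$ whose image simultaneously contains three prescribed indices $i,j,m$ out of the range $\{1,\dots,2k-1\}$ — can instead be supplied by the injectivity of $f$. So first I would restate and re-examine Lemma~\ref{lem:funcsconstruction} under the new hypotheses: $\relA$ a reduct of a finitely bounded homogeneous $\relD$ with forbidden substructures of size at most $k\geq 2$, and $f\in\Pol(\relA)$ a $k$-ary \emph{injective} function whose nu-minors are totally symmetric modulo outer embeddings of $\relD$. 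For each finite $F\subseteq A$ and each injective $\psi\colon\{1,\dots,k\}\to\{1,\dots,2k-1\}$ I would again form the map $\varphi_\psi\colon F^2\times\psi[\{1,\dots,k\}]\to D$, $(a,b,i)\mapsto h^f_{\psi^{-1}(i)}(a,b)$, define $\sim_\psi$ as its kernel, and pull back the structure of $\relD$ to obtain $\mathbb X_\psi$; the total symmetry of the nu-minors again gives canonical isomorphisms $\mathbb X_{\psi_1}\cong\mathbb X_{\psi_2}$ via $[(a,b,i)]_{\sim_{\psi_1}}\mapsto[(a,b,\psi_2\psi_1^{-1}(i))]_{\sim_{\psi_2}}$.

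The key step is transitivity of the relation $\sim$ on $F^2\times\{1,\dots,2k-1\}$ defined by $(a,b,i)\sim(c,d,j)$ iff $(a,b,i)\sim_\psi(c,d,j)$ for some injective $\psi$ whose image contains $i$ and $j$. As in the original proof, total symmetry of the nu-minors upgrades this to: for \emph{every} injective $\psi$ whose image contains both $i$ and $j$, $(a,b,i)\sim_\psi(c,d,j)$ holds, i.e. $h^f_{\psi^{-1}(i)}(a,b)=h^f_{\psi^{-1}(j)}(c,d)$. Suppose $(a,b,i)\sim(c,d,j)$ and $(c,d,j)\sim(u,v,m)$. When $k\geq 3$ one picks a single $\psi$ covering $i,j,m$; when $k=2$ this is impossible, but then the only injections have two-element image, so $\{i,j\}$ and $\{j,m\}$ are each covered by some $\psi$, and I must show $h^f_?(a,b)=h^f_?(u,v)$ with matching coordinates. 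The point is that for $k=2$ the nu-minors are $h^f_1,h^f_2$, and injectivity of $f$ forces: if $h^f_{p}(a,b)=h^f_{q}(c,d)$ with $p,q\in\{1,2\}$, then in fact $p=q$, $a=c$, $b=d$ when $p=q$, while $p\neq q$ would force $a=b$ and a coincidence of the two columns of $f$ — and here I would argue that injectivity of $f$ together with $k=2$ makes the relevant identifications rigid enough to conclude transitivity; concretely, $(a,b,i)\sim_\psi(c,d,j)$ with $i\neq j$ already entails, via injectivity of $f(x_1,x_2)$, that $a=b$ and $c=d$, after which $(a,a,i)$ and $(c,c,j)$ collapse consistently across all $\psi$, and chaining two such relations through $(c,d,j)$ is immediate. (I expect to also need to rule out, or harmlessly absorb, the degenerate possibility that $f$'s two nu-minors on a diagonal input coincide — injectivity handles this.)

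Having established that $\sim$ is an equivalence relation, the rest goes exactly as before: the structures $\mathbb X_\psi$ agree on overlaps, their union is a structure $\mathbb X$ on $(F^2\times\{1,\dots,2k-1\})/\!\sim$, every $k$-element substructure of $\mathbb X$ lies inside some $\mathbb X_\psi$ and hence embeds into $\relD$, so $\mathbb X$ omits all forbidden substructures and embeds into $\relD$ via some $\varphi$; setting $g^F_i(a,b):=\varphi([(a,b,i)]_\sim)$ yields, by homogeneity of $\relD$, automorphisms $\alpha_\psi$ with $g^F_{\psi(i)}\!\restriction_{F^2}=\alpha_\psi\circ h^f_i\!\restriction_{F^2}$ for $1\leq i\leq k$. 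Then the compactness step from the proof of Proposition~\ref{prop:finitelybounded} (making $g_i$ independent of $F$ by precomposing with suitable $\beta^{F_j}\in\Aut(\relD)$ and using $\omega$-categoricity to extract convergent subsequences), followed by Lemma~\ref{lem:endoexistence}, produces $e,e_\psi\in\overline{\Aut(\relD)}$ with $e\circ g_{\psi(i)}=e_\psi\circ h^f_i$; the functions $f_\psi:=e_\psi\circ f$ have nu-minors $e\circ g_{\psi(1)},\dots,e\circ g_{\psi(k)}$, so Lemma~\ref{lem:lineq} applies and $\Pol(\relA)$ has no h1 clone homomorphism to $\mathbf 1$. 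The main obstacle, and the only real content beyond re-running the Proposition's proof, is the $k=2$ transitivity argument; I expect injectivity of $f$ to be exactly what makes the identifications $\sim_\psi$ rigid enough to compose, but one should double-check the diagonal edge cases carefully.
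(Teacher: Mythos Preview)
Your overall plan matches the paper's one-sentence justification: re-run the proof of Proposition~\ref{prop:finitelybounded} and replace the single use of $k\geq 3$ (the transitivity of $\sim$ in Lemma~\ref{lem:funcsconstruction}) by an argument from the injectivity of $f$. For $k\geq 3$ this works and is in any case already covered by Proposition~\ref{prop:finitelybounded}; the only new content of the corollary is the case $k=2$.

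Your $k=2$ argument, however, contains a concrete error. You claim that ``$(a,b,i)\sim_\psi(c,d,j)$ with $i\neq j$ already entails, via injectivity of $f(x_1,x_2)$, that $a=b$ and $c=d$''. This is false. For $k=2$ the nu-minors are $h^f_1(x,y)=f(y,x)$ and $h^f_2(x,y)=f(x,y)$; hence $h^f_1(a,b)=h^f_2(c,d)$ reads $f(b,a)=f(c,d)$, and injectivity of $f$ gives only $(c,d)=(b,a)$, not $a=b$. Consequently, for any $a\neq b$ in $F$ we have $(a,b,1)\sim(b,a,2)$ (via any $\psi$ with image $\{1,2\}$, since $h^f_1(a,b)=f(b,a)=h^f_2(b,a)$) and likewise $(b,a,2)\sim(a,b,3)$, yet $(a,b,1)\sim(a,b,3)$ would require $f(b,a)=f(a,b)$, i.e.\ $a=b$. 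So $\sim$ is not transitive and the quotient $\mathbb X$ is ill-defined; replacing $\sim$ by its transitive closure does not help either, since that closure identifies $(a,b,i)$ with $(a,b,j)$ for all $i,j$, so its restriction to any $\psi$-slice is strictly coarser than $\sim_\psi$ and the natural map $\mathbb X_\psi\to\mathbb X$ is no longer injective. The intuition that injectivity forces $a=b$ is correct only when a third coordinate position is available to compare, i.e.\ precisely when $k\geq 3$. The paper itself offers no more than the remark that one ``trades $k\geq 3$ for injectivity'' and does not spell out the mechanism; whatever is intended, the specific transitivity argument you propose does not go through.
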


\subsection{Examples of linearization}\label{subsect:examples}

We are now going to prove Theorem~\ref{thm:randomgraph}. That is, we are going to show that for any reduct $\mathbb A$ of equality, the order of the rationals, the random partial order, or the random graph, $\Pol(\relA)$ has no uniformly continuous h1 clone homomorphism to $\mathbf 1$ if and only if it satisfies a non-trivial set of linear identities. To this end, we are going to analyse the linear identities modulo embeddings obtained in the corresponding CSP classifications~\cite{ecsps, tcsps-journal, posetCSP16, BodPin-Schaefer-both}. In most of the cases, Theorem~\ref{thm:finitelybounded} provides us directly with the desired linear identities, but we do have to consider some cases separately. We present the proof for $(\mN;=)$ in Proposition~ \ref{lem:equalitylanguage}, for the order of the rationals in Proposition~\ref{lem:Qlinearisation}, for the random partial order in Proposition~\ref{lem:posetlinearisation}, and for the random graph in Proposition~\ref{lem:randomgraphlinearisation}.

%

\subsubsection{Reducts of equality}
For the reducts of $(\mN;=)$, the CSP classification in~\cite{ecsps} shows the following.
\begin{theorem}\label{thm:equlancomclas}
Let $\mA$ be a reduct of $(\mN;=)$. Then either
\begin{enumerate}
\item $\Pol(\mA)$ contains a constant or a binary injective function, or
\item there is a continuous clone homomorphism from $\Pol(\mA)$ to $\mathbf{1}$.
\end{enumerate}
If $\mA$ has a finite relational language, then $\csp(\mA)$ is tractable in the first case, and NP-complete in the second case.
\end{theorem}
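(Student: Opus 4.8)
The plan is to prove the dichotomy by treating the two tractable cases with explicit algorithms and the remaining case by constructing a continuous clone homomorphism $\Pol(\mA)\to\mathbf 1$; the complexity consequences in the finite-language case then follow, since NP-hardness comes from the interpretation formalism recalled in the introduction and NP-membership is automatic for $\omega$-categorical templates. Two structural facts about a reduct $\mA$ of $(\mN;=)$ underlie everything. First, $\Aut(\mA)=\Sym(\mN)$, and by elimination of quantifiers over $(\mN;=)$ every relation of $\mA$ is a finite union of $\Sym(\mN)$-orbits of tuples, hence is completely described by a family of partitions of its coordinate set (an $n$-tuple lies in the relation exactly when the partition recording which of its coordinates coincide belongs to a prescribed family of partitions of $\{1,\dots,n\}$). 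Second, applying componentwise an operation that is injective as a map $\mN^m\to\mN$ sends the partitions of $m$ given tuples to their meet in the partition lattice, so a relation is preserved by a binary injection precisely when its family of partitions is closed under binary meets, and, iterating, under all finite meets.

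For the first tractable case, if $\Pol(\mA)$ contains a constant operation with value $c$ then, since that operation preserves every relation, every nonempty relation of $\mA$ contains the constant tuple $(c,\dots,c)$; hence every satisfiable-in-principle instance of $\CSP(\mA)$ is satisfied by the all-$c$ assignment and $\CSP(\mA)$ is trivially in $\mathrm{P}$. For the second, if $\Pol(\mA)$ contains a binary injection then, by the meet observation, each relation of $\mA$ viewed as a family of partitions is a meet-subsemilattice of the partition lattice, so for any instance the set of its solutions, viewed as partitions of the variable set, is closed under meets and thus has a finest element whenever it is nonempty; one then checks that this finest solution, when it exists, can be computed in polynomial time by a propagation algorithm exploiting the meet-semilattice structure of the relations, which places $\CSP(\mA)$ in $\mathrm{P}$ --- this is the equality analogue of the tractability of Horn satisfiability.

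The main work, which I expect to be the principal obstacle, is the remaining case: assuming $\Pol(\mA)$ has neither a constant nor a binary injection, build a continuous clone homomorphism $\xi\colon\Pol(\mA)\to\mathbf 1$. The key lemma is that every polymorphism of $\mA$ which is not \emph{essentially unary} --- i.e.\ not of the form $e\circ\pi^n_i$ for a coordinate $i$ and an injective self-map $e$ of $\mN$ --- generates, together with $\Sym(\mN)$, either a constant operation or a binary injection; this is the genuinely combinatorial heart of the argument. To prove it, given a polymorphism depending on at least two coordinates I would use the infinite Ramsey theorem to pass to an infinite subset of $\mN$ on which the operation acts canonically on each of the finitely many orbits of tuples, and then analyse the finitely many possible canonical behaviours: each of them either combines two coordinates injectively, so that composition with permutations yields a binary injection, or identifies tuples lying in distinct orbits, which yields a constant. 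Granting the lemma, $\Pol(\mA)$ consists solely of essentially unary operations, and the assignment $e\circ\pi^n_i\mapsto\pi^n_i$ (well defined because $i$ is the unique coordinate the operation depends on) preserves projections and composition and is continuous, giving the desired $\xi$. Since $\mA$ has no constant polymorphism, the disequality relation turns out to be primitive positive definable in $\mA$, so $\mA$ is a model-complete core; then $\xi$ witnesses that $\Pol(\mA)$ --- equivalently, the trivial stabilizer --- maps homomorphically and continuously to $\mathbf 1$, so by Theorem~\ref{thm:siggers} the structure $\mA$ primitive positively interprets $\relS$ with parameters, whence $\CSP(\mA)$ is NP-hard, and NP-complete when $\mA$ has a finite language.

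Finally, the two cases are disjoint. A constant operation $\kappa_c\in\Pol(\mA)$ admits no clone homomorphism to $\mathbf 1$, since the identity $\kappa_c\circ\pi^2_1=\kappa_c\circ\pi^2_2$ holding in $\Pol(\mA)$ would be transported to $\pi^2_1=\pi^2_2$ in $\mathbf 1$. A binary injection $f\in\Pol(\mA)$ also precludes one: iterating $f$ yields a $6$-ary injection $s\in\Pol(\mA)$ which may be taken non-surjective (precompose with $n\mapsto 2n$), and then the ternary operations $s(x,y,x,z,y,z)$ and $s(y,x,z,x,z,y)$ satisfy a Siggers identity $e_1\circ s(x,y,x,z,y,z)=e_2\circ s(y,x,z,x,z,y)$ for suitable injective self-maps $e_1,e_2$ of $\mN$ (take $e_2$ the identity and $e_1$ an injective total extension of the evident partial bijection between the ranges of the two sides), and a routine computation shows that no projection $\pi^6_j$ can satisfy the image of this identity in $\mathbf 1$. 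Assembling the three cases yields the dichotomy together with its complexity-theoretic refinement.
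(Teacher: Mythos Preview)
The paper does not prove this theorem at all; it is quoted from~\cite{ecsps} as a known classification result and used as a black box. So there is no ``paper's own proof'' to compare your attempt against.

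That said, your outline is the right one and matches the approach of~\cite{ecsps}: describe relations of $\mA$ by families of partitions, handle the constant and binary-injection cases algorithmically, and in the remaining case show that every polymorphism is essentially unary, whence the map $e\circ\pi^n_i\mapsto\pi^n_i$ is a continuous clone homomorphism to $\mathbf 1$. Your construction of $\xi$ and your disjointness arguments are correct (with one slip: to make the $6$-ary injection $s$ non-surjective you want to \emph{post}compose with $n\mapsto 2n$, not precompose; precomposition does not shrink the range).

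The place where your sketch genuinely hides work is the ``key lemma'' that any polymorphism which is not essentially unary generates, together with $\Sym(\mN)$, a constant or a binary injection. Your proposed proof --- Ramsey-canonize, then observe that each canonical behaviour ``either combines two coordinates injectively \dots\ or identifies tuples lying in distinct orbits, which yields a constant'' --- is not an argument but a hope: the dichotomy you assert is not obviously exhaustive, and turning a canonical behaviour into an actual binary injection (or constant) in $\Pol(\mA)$ requires a careful case analysis that constitutes the real combinatorial content of~\cite{ecsps}. You also skip the case of an operation depending on a single variable via a non-injective unary map, which must be shown to yield a constant (this part is easy, but your lemma statement covers it while your proof sketch does not). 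Similarly, the polynomial-time algorithm in the binary-injection case is asserted (``propagation algorithm exploiting the meet-semilattice structure'') rather than described; this is again where~\cite{ecsps} does actual work.
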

Theorem~\ref{thm:finitelybounded} then yields Theorem~\ref{thm:linearisationtheorm} for such reducts.

\begin{prop}\label{lem:equalitylanguage}
Theorem \ref{thm:linearisationtheorm} holds for reducts of $(\mN;=)$.
\end{prop}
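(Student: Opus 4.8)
The plan is to split along the dichotomy of Theorem~\ref{thm:equlancomclas}. First note that $(\mN;=)$ is homogeneous and is finitely bounded in a trivial way: its age is the class of \emph{all} finite structures in its language, so one may take the empty set of forbidden substructures and, in particular, a size bound $k=2$. Hence the linearization results of Section~\ref{subsect:totally} apply with $\relD:=(\mN;=)$ and $k=2$. I will also use freely that ``$\Pol(\relA)$ does not satisfy a non-trivial set of linear identities'' means exactly ``$\Pol(\relA)$ has an h1 clone homomorphism to $\mathbf 1$'', and that a uniformly continuous h1 clone homomorphism is in particular an h1 clone homomorphism.

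Suppose first that $\Pol(\relA)$ falls into case~(1), i.e.\ it contains a constant or a binary injection. I claim that in both subcases $\Pol(\relA)$ has \emph{no} h1 clone homomorphism to $\mathbf 1$. This is enough: both sides of the first equivalence in Theorem~\ref{thm:linearisationtheorm} are then false, and if $\relA$ has a finite language then $\CSP(\relA)$ is tractable by Theorem~\ref{thm:equlancomclas} while $\Pol(\relA)$ satisfies a non-trivial set of linear identities, so the second equivalence holds as well. If $\Pol(\relA)$ contains a constant, then it contains the binary constant $c$; applying a hypothetical h1 clone homomorphism $\xi$ to the identity $c=c(\pi^2_2,\pi^2_1)$ would give $\xi(c)=\xi(c)(\pi^2_2,\pi^2_1)$, which is impossible since $\xi(c)$ is a projection $\pi^2_i$ and $\pi^2_i(\pi^2_2,\pi^2_1)=\pi^2_{3-i}\neq\pi^2_i$. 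If $\Pol(\relA)$ contains a binary injection $f$, then the partial involution sending $f(a,b)$ to $f(b,a)$ is well defined on $\ran(f)$ (by injectivity of $f$) and extends to some $\sigma\in\Sym(\mN)=\Aut((\mN;=))$; this $\sigma$ interchanges the two nu-minors $h^f_1(x,y)=f(y,x)$ and $h^f_2(x,y)=f(x,y)$ of $f$, so the nu-minors of $f$ are totally symmetric modulo outer automorphisms, hence modulo outer embeddings, of $(\mN;=)$. Since $f$ is a binary injection, Corollary~\ref{prop:finitelybounded2} (with $k=2$) then yields that $\Pol(\relA)$ has no h1 clone homomorphism to $\mathbf 1$, as claimed.

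Now suppose $\Pol(\relA)$ falls into case~(2), i.e.\ there is a continuous clone homomorphism $\Pol(\relA)\to\mathbf 1$. Then $\Pol(\relA)$ has an h1 clone homomorphism to $\mathbf 1$ and hence does not satisfy a non-trivial set of linear identities; and if $\relA$ has a finite language then $\CSP(\relA)$ is NP-complete by Theorem~\ref{thm:equlancomclas}, so the second equivalence of Theorem~\ref{thm:linearisationtheorm} holds (both sides false). The one remaining point — which I expect to be the main obstacle — is the first equivalence: I must upgrade the given continuous clone homomorphism to a \emph{uniformly} continuous h1 one. For this I would return to the hardness proof of Theorem~\ref{thm:equlancomclas} in~\cite{ecsps}: in case~(2) it exhibits a pp-construction of $\relS$ in $\relA$ (and, inspecting the proof, no parameters are needed --- note that $\relA$ has no constant endomorphism here, since otherwise $\Pol(\relA)$ would contain a constant and we would be in case~(1)), and by the results of~\cite{wonderland} a pp-construction of $\relS$ in $\relA$ is equivalent to $\Pol(\relA)$ having a uniformly continuous h1 clone homomorphism to $\Pol(\relS)=\mathbf 1$. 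Everything else is a direct application of Corollary~\ref{prop:finitelybounded2} together with the elementary observation about constants.
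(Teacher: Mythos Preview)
Your proof is correct and follows the same overall plan as the paper's, namely splitting along the dichotomy of Theorem~\ref{thm:equlancomclas} and handling the constant case via the commutativity identity. There are two minor differences worth noting.

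For the binary injection case, the paper instead composes $f$ with itself to obtain the ternary injection $(x,y,z)\mapsto f(x,f(y,z))$, which (being injective on a pure set) is totally symmetric modulo outer automorphisms of $(\mN;=)$, and then applies Theorem~\ref{thm:finitelybounded} with $k=3$. Your route via Corollary~\ref{prop:finitelybounded2} with $k=2$ is a legitimate and slightly more direct alternative: the injectivity of $f$ lets you stay at arity~$2$ rather than boosting to arity~$3$ to meet the $k\geq 3$ hypothesis of Theorem~\ref{thm:finitelybounded}.

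For case~(2), the paper leaves the upgrade from a continuous clone homomorphism to a uniformly continuous h1 clone homomorphism implicit. You are right that something needs to be said, but there is no need to reopen the hardness argument of~\cite{ecsps}: a continuous clone homomorphism from $\Pol(\relA)$ to $\mathbf 1$ for $\omega$-categorical $\relA$ already gives, by~\cite{Topo-Birk}, a pp-interpretation of $\relS$ in $\relA$, and then Theorem~\ref{thm:h1} (i.e.,~\cite{wonderland}) immediately yields the uniformly continuous h1 clone homomorphism.
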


\begin{proof}
If a reduct $\mA$ has a constant polymorphism, then it has a binary such polymorphism $c$, which clearly satisfies the non-trivial linear identity $c(x,y) = c(y,x)$. If $\Pol(\mA)$ contains a binary injection $f(x,y)$, then $f(x,f(y,z))$ is an injective ternary polymorphism of $\mA$ which satisfies the conditions of Theorem \ref{thm:finitelybounded}.
\end{proof}

\subsubsection{Reducts of the order of the rational numbers}

For the reducts of $(\mQ;\leq)$ the CSP classification in~\cite{tcsps-journal} shows the following, using the notation of~\cite{tcsps-journal,Bodirsky-HDR}.
\begin{theorem}\label{thm:Qclassification}
Let $\mA$ be a reduct of $(\mQ;\leq)$. Then either 
\begin{enumerate}
\item $\Pol(\mA)$ contains one of the operations $\min,\mi,\mx, {\rm ll}$, their duals, or a constant,
\item there is a continuous clone homomorphism from $\Pol(\mA)$ to $\mathbf{1}$.
\end{enumerate}
If $\mA$ has a finite relational language, then $\csp(\mA)$ is tractable in the first case, and NP-complete in the second case.
\end{theorem}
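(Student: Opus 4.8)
The plan is to run the canonical-functions argument behind the classification of~\cite{tcsps-journal}, exploiting that $(\mQ;\leq)$ is homogeneous and Ramsey. First I would reduce to an analysis of the \emph{canonical} polymorphisms of $\mA$, i.e.\ those whose action on a tuple depends only on the orbit of that tuple under $\Aut(\mQ;\leq)$: by the Ramsey property together with a standard canonisation argument, any polymorphism $f$ of $\mA$ can, on any prescribed finite set and after pre- and post-composition with automorphisms of $(\mQ;\leq)$, be taken to be canonical, and a compactness argument of the type of Lemma~\ref{lem:endoexistence} then lets one pass from such local canonical behaviour back to honest operations in $\Pol(\mA)$. As a warm-up one treats the unary level: the only canonical unary behaviours over $(\mQ;\leq)$ modulo $\overline{\Aut(\mQ;\leq)}$ are the identity and the order-reversal, which constrains the possible model-complete cores, and hence endomorphism monoids, of the reducts under consideration.

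The core of the argument is a finite case analysis of the canonical \emph{binary} operations on $(\mQ;\leq)$. I would show that, up to permuting the two arguments, composition with outer automorphisms, and the order-reversing duality, such an operation is a projection, a constant, or one of $\min$, $\mx$, $\mi$, ${\rm ll}$. From this one extracts the dichotomy: either, in every arity, every polymorphism of $\mA$ is canonically a projection on the relevant reduct of a model-complete core, in which case sending a polymorphism to the projection it induces is a continuous clone homomorphism onto $\mathbf 1$, which is case~(2); or $\Pol(\mA)$ locally produces a constant or one of $\min,\mx,\mi,{\rm ll}$ or a dual, and the compactness step upgrades this to genuine membership of such an operation in $\Pol(\mA)$, which is case~(1).

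It then remains to match the two sides of the structural dichotomy with the claimed complexities. In case~(2), for finite-language $\mA$ the continuous clone homomorphism to $\mathbf 1=\Pol(\relS)$ yields, via the standard translation (cf.\ Theorems~\ref{thm:siggers} and~\ref{thm:h1}), a pp-interpretation of $\relS$ with parameters in the model-complete core of $\mA$, hence NP-hardness of $\CSP(\mA)$; together with membership in NP this gives NP-completeness. In case~(1) one exhibits a polynomial-time algorithm tailored to the operation present: a constant makes $\CSP(\mA)$ trivial; $\min$ (dually $\max$) makes it solvable by establishing local consistency, equivalently by a Datalog program; $\mx$ and $\mi$ make it solvable by combining such a consistency phase with an algebraic subroutine; and ${\rm ll}$ makes it solvable by an algorithm that recursively splits according to the lexicographic structure encoded by that operation.

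The hard part is the binary case analysis: one must enumerate all canonical binary behaviours on $(\mQ;\leq)$, show that the non-projection, non-constant ones collapse to exactly the four named operations up to duality, and --- the genuinely delicate point --- verify that the dichotomy is exhaustive, i.e.\ that a polymorphism clone producing only projections \emph{locally} in fact admits a \emph{global} continuous clone homomorphism to $\mathbf 1$. This last step requires careful bookkeeping of which orbits of pairs can be separated by polymorphisms together with a limit argument to globalise the local picture; the accompanying verification that each of $\min,\mx,\mi,{\rm ll}$ really does yield a polynomial-time algorithm is the other substantial component.
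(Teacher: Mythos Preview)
The paper does not prove this theorem at all: it is quoted verbatim from~\cite{tcsps-journal} (with notation from~\cite{Bodirsky-HDR}) as background for Proposition~\ref{lem:Qlinearisation}, and no proof or proof sketch is given here. So there is nothing to compare your proposal against in this paper.

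That said, your outline is broadly in the spirit of the canonical-functions reformulation of the temporal CSP classification as presented in~\cite{Bodirsky-HDR,BP-reductsRamsey}, but it oversimplifies the key step. The assertion that the canonical binary behaviours over $(\mQ;\leq)$, up to duality and outer automorphisms, reduce to exactly a projection, a constant, or one of $\min,\mx,\mi,{\rm ll}$ is not correct as stated: there are further canonical binary behaviours (for instance the operation called $\textup{pp}$ in~\cite{tcsps-journal,Bodirsky-HDR}), and the real work in the classification is showing that each such behaviour either \emph{generates} one of the listed operations together with $\overline{\Aut(\mQ;\leq)}$, or preserves one of a short list of hard relations. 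The original proof in~\cite{tcsps-journal} in fact proceeds largely via the relational side, pp-defining specific relations such as $\mathrm{Betw}$, $\mathrm{Cycl}$, $\mathrm{Sep}$ and analysing their polymorphisms directly, rather than via a clean enumeration of canonical behaviours; the passage from ``only projections locally'' to a global continuous clone homomorphism also requires more than the compactness of Lemma~\ref{lem:endoexistence}. If you intend to supply a proof, you should either follow~\cite{tcsps-journal} faithfully or carry out the full canonical-behaviour analysis, including the generation arguments that collapse the extra behaviours onto the named list.
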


\begin{prop}\label{lem:Qlinearisation}
Theorem \ref{thm:linearisationtheorm} holds for reducts of the order of the rationals $(\mQ;\leq)$.
\end{prop}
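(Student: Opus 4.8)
The plan is to run through the list of operations in case~(1) of Theorem~\ref{thm:Qclassification} and, for each of them, exhibit a polymorphism of $\mathbb A$ satisfying a non-trivial set of linear identities, thereby showing that $\Pol(\mathbb A)$ has no h1 clone homomorphism to $\mathbf 1$; combined with the (easy) converse direction and with the classification, this yields Theorem~\ref{thm:linearisationtheorm} for reducts of $(\mathbb Q;\leq)$. The constant case is trivial, exactly as in Proposition~\ref{lem:equalitylanguage}: a binary constant $c$ satisfies $c(x,y)=c(y,x)$. So the work is concentrated in the cases $\min$, $\mi$, $\mx$, $\mathrm{ll}$ and their duals; by the symmetry between a structure and its dual (composing everything with the order-reversing permutation of $\mathbb Q$), it suffices to treat $\min$, $\mi$, $\mx$, $\mathrm{ll}$.

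For $\min$ the situation is the cleanest: $\min$ is an associative, commutative, idempotent binary operation, so it is a totally symmetric binary polymorphism and $\min(x,\min(y,z))$ is an injective-free but genuinely totally symmetric ternary operation — more directly, $\min$ itself already gives the non-trivial linear identity $\min(x,y)=\min(y,x)$, so $\Pol(\mathbb A)$ cannot map to $\mathbf 1$ via an h1 homomorphism. For the remaining operations $\mi$, $\mx$, $\mathrm{ll}$ the idea is to invoke Theorem~\ref{thm:finitelybounded} (or its refinement Proposition~\ref{prop:finitelybounded}/Corollary~\ref{prop:finitelybounded2}): $(\mathbb Q;\leq)$ is finitely bounded and homogeneous, with forbidden substructures of size at most $k=3$ (the classification structures for a linear order are ruled out by the three axioms, all ternary at worst; in fact the relevant bound is $k\geq 3$, matching the hypothesis of Theorem~\ref{thm:finitelybounded}). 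So it is enough to produce, from each of $\mi,\mx,\mathrm{ll}$, a ternary polymorphism of $\mathbb A$ whose nu-minors are totally symmetric modulo outer embeddings of $(\mathbb Q;\leq)$. Here one uses the known behaviour of these operations: $\mi$ and $\mx$ behave, modulo automorphisms of $(\mathbb Q;\leq)$, like $\min$ resp.\ $\max$ on the "order type" of the inputs, and $\mathrm{ll}$ like a lexicographic combination; in each case an appropriate binary-symmetric or ternary-symmetric term built from the operation, possibly post-composed with automorphisms of $\mathbb Q$, has the required symmetry. Concretely, for $\mi$ one checks that the term $\mi(\mi(x,y),z)$ is, modulo outer automorphisms, a totally symmetric ternary operation; similarly for $\mx$ and for $\mathrm{ll}$ after replacing it by a suitable iterate.

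I expect the main obstacle to be the bookkeeping for $\mathrm{ll}$: unlike $\min,\mi,\mx$, which are "min-like", the operation $\mathrm{ll}$ has a lexicographic character and is only symmetric in a weaker, order-sensitive sense, so verifying that some explicit iterate of $\mathrm{ll}$ has totally symmetric nu-minors modulo outer embeddings — and not merely modulo outer endomorphisms — requires carefully tracking which automorphisms of $(\mathbb Q;\leq)$ are needed to repair the asymmetry, and confirming that the resulting identities are still of the form required by Theorem~\ref{thm:finitelybounded}. A secondary, more routine concern is to make sure the size bound $k$ for the forbidden substructures of $(\mathbb Q;\leq)$ is correctly $3$ (so that Theorem~\ref{thm:finitelybounded} applies with a ternary polymorphism), and to confirm that each term we write down is genuinely a polymorphism of the reduct $\mathbb A$ in question and not just of $(\mathbb Q;\leq)$ — but since $\Pol(\mathbb A)$ already contains the base operation by hypothesis, any term over it lies in $\Pol(\mathbb A)$, so this causes no trouble. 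Finally, the last sentence of Theorem~\ref{thm:linearisationtheorm} (the CSP dichotomy) then follows by combining the above with the complexity half of Theorem~\ref{thm:Qclassification} and the standard fact that satisfaction of a non-trivial set of linear identities is equivalent to the non-existence of an h1 clone homomorphism to $\mathbf 1$.
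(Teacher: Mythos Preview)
Your overall case split and the handling of the constant and $\min$ cases match the paper. You should know, however, that $\mx$ is already genuinely commutative: the paper disposes of it in one line via the linear identity $\mx(x,y)=\mx(y,x)$, so there is no need to route it through Theorem~\ref{thm:finitelybounded}.

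The real gap is in your treatment of $\mi$ and $\mathrm{ll}$. You assert that $\mi(\mi(x,y),z)$, and ``a suitable iterate'' of $\mathrm{ll}$, are totally symmetric modulo outer embeddings of $(\mathbb Q;\leq)$, and then invoke Theorem~\ref{thm:finitelybounded}. This claim is not justified, and for $\mathrm{ll}$ it is essentially false: $\mathrm{ll}$ encodes the lexicographic order, which is intrinsically asymmetric in its arguments, and no iterate built from $\mathrm{ll}$ alone will have nu-minors that are interchangeable by a single outer embedding. (Contrast this with $e_<$ and $e_\leq$ in the random-poset case, where the paper \emph{does} use the iterate-plus-Theorem~\ref{thm:finitelybounded} route; the difference is that those operations really are symmetric modulo embeddings, whereas $\mathrm{ll}$ is not.) For $\mi$ the situation is subtler---$\mi$ is min-like and symmetric modulo outer embeddings as a binary operation---but the tie-breaking that makes it injective does not obviously survive iteration in the form Theorem~\ref{thm:finitelybounded} needs, and you give no argument.

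The paper therefore abandons Theorem~\ref{thm:finitelybounded} for these two cases and works directly with Lemma~\ref{lem:lineq}. For $\mi$ it builds five variants $\mi_1,\ldots,\mi_5$ (by interleaving different embeddings in the three branches of the definition of $\mi$) and, for each injection $\psi\colon\{1,2,3\}\to\{1,\ldots,5\}$, produces a ternary $f_\psi\in\Pol(\mathbb A)$ whose nu-minors are $e\circ\mi_{\psi(1)},e\circ\mi_{\psi(2)},e\circ\mi_{\psi(3)}$, citing a construction from Bodirsky's habilitation. For $\mathrm{ll}$ it manufactures an explicit ternary $f$ and a $5$-ary $g$ of the form $\lex(\min,\text{median},x_1,\ldots)$, shows on each finite set that the nu-minors of $g$ realise every injection $\psi\colon\{1,2,3\}\to\{1,\ldots,5\}$ against the nu-minors of $f$ up to an automorphism, and then applies Lemma~\ref{lem:endoexistence} and Lemma~\ref{lem:lineq}. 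In both cases the key idea you are missing is that one does \emph{not} try to make a single term totally symmetric; instead one builds a family of $2k-1$ binary functions so rich that any $k$ of them arise as the nu-minors of some $k$-ary polymorphism, and then the pigeonhole argument of Lemma~\ref{lem:lineq} finishes.
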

\begin{proof}
It suffices to show for a reduct $\relA$ that if $\Pol(\mA)$ contains one of the operations in Theorem \ref{thm:Qclassification} (1), then it satisfies non-trivial linear identities. This is clear if $\pol(\mA)$ contains a constant operation. The operations $\mx$ and $\min$ satisfy the non-trivial linear identities $\mx(x,y) = \mx(y,x)$ and $\min(x,y) = \min(y,x)$, respectively.

For the case when $\mi \in \Pol(\mA)$ we are going to sketch a proof using Lemma \ref{lem:lineq}. 
Let $\beta,\alpha_i,\gamma_i$ be self-embeddings of $(\Q;\leq)$ for $i,j\in\{1,\ldots,5\}$ such that 
\begin{align*}
\beta(x)&<\gamma_1(x)<\gamma_2(x)<\dots<\gamma_5(x) \\ &<\alpha_1(x)<\alpha_2(x)<\dots<\alpha_5(x)<\beta(x+\epsilon)
\end{align*}
 for every $x\in \mQ$ and every $0 < \epsilon\in \mQ$. Then for $1\leq i\leq 5$, the functions
$$
\mi_i(x,y):=\begin{cases}
\alpha_i(x)&\text{if }x<y\\
\beta(x)&\text{if }x=y\\
\gamma_i(y)&\text{if }x>y
\end{cases}
$$
can be written as a composition of $\mi$ with embeddings of $(\mathbb Q;\leq)$, thus they are polymorphisms of $\mA$.
Following the proof of Proposition 10.5.17 in \cite{Bodirsky-HDR}, for each injection $\psi \colon \{1,2,3\} \to \{1,\ldots,5\}$ we can construct $f_{\psi}\in \pol(\mA)$ such that there is an embedding $e \in \overline{\Aut(\Q;\leq)}$ with 
\begin{align*}
f_{\psi}(y,x,x)&=e \circ \mi_{\psi(1)}(y,x)\\
f_{\psi}(x,y,x)&=e \circ \mi_{\psi(2)}(y,x)\\
f_{\psi}(x,x,y)&=e \circ \mi_{\psi(3)}(y,x)\; ;
\end{align*}
hence we found functions satisfying the conditions of Lemma~\ref{lem:lineq}.

We are left with the case where $\pol(\mA)$ contains the binary function ${\rm ll}$. Then by Proposition 10.4.10 in \cite{Bodirsky-HDR},  $\pol(\mA)$ also contains
\begin{align*}
f(x,y,z):=\lex'(&\min(x,y,z), \\ &\max(\min(x,y),\min(x,z),\min(y,z)), \\ &x,y,z),
\end{align*}
where $\lex'$ is a $5$-ary operation that embeds the lexicographical order on $(\mQ;\leq)^5$ into the order $(\mQ;\leq)$. Analogously to the existence of $f$, one can show that $\pol(\mA)$ contains the operation 
\begin{align*}
g(x,y,&z,t,u):=\lex(\min(x,y,z,t,u), \\ &\max(\min(x,y),\min(x,z),\min(x,t),\min(x,u),\\
&\ \min(y,z),\min(y,t),\min(y,u),\min(z,t),\min(z,u)), \\ &x,y,z,t,u),
\end{align*} 
where $\lex$ embeds the lexicographic order on $(\Q;\leq)^7$ into $(\Q;\leq)$. Let $h_1^g,\ldots,h_5^g$ be the nu-minors of $g$. For every finite $F\subseteq \mQ$ and every injective $\psi \colon \{1,2,3\} \to \{1,\ldots,5\}$, it can be easily verified that there is $\alpha_\psi \in \Aut(\mQ;\leq)$ such that on $F$ the identities $\alpha_\psi \circ f(y,x,x)=h_{\psi(1)}^g(x,y)$, $\alpha_\psi  \circ f(x,y,x)=h^g_{\psi(2)}(x,y)$, and $\alpha\circ f(x,x,y)=h^g_{\psi(3)}(x,y)$ hold. Lemma~\ref{lem:endoexistence} then yields a set of functions that satisfies the non-trivial linear identities of Lemma~\ref{lem:lineq}. 
\end{proof}

\subsubsection{Reducts of the random partial order}
In the complexity classification of CSPs for reducts of the random partial order, which we denote by $\mP$, the following dichotomy has been shown~\cite{posetCSP16}; we use the definitions from that article.
\begin{theorem}\label{thm:posetclass}
Let $\mA$ be a reduct of $\mP$. Then one of the following applies.
\begin{itemize}
\item[(1)] $\mA$ is homomorphically equivalent to a reduct of $(\mQ;\leq)$.
\item[(2)] $\pol(\mA)$ contains the binary operation $e_{<}$ or $e_{\leq}$.
\item[(3)] There is a continuous clone homomorphism from $\pol(\mA)$ to ${\mathbf 1}$.
\end{itemize}
If $\mA$ has a finite relational language, then the third case implies that $\csp(\mA)$ is NP-complete, and the second case implies tractability of the CSP.
\end{theorem}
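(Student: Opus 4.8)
This is the main classification theorem of \cite{posetCSP16}, so in the present article it is merely quoted; for orientation, below I indicate the strategy by which such a result is proved, following the template of the analogous classifications for $(\mQ;\leq)$ in \cite{tcsps-journal} and for the random graph in \cite{BodPin-Schaefer-both}.

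The plan is to study $\Pol(\mA)$ via the canonical-functions method. One first fixes the homogeneous Ramsey expansion $\mP^<$ of $\mP$ by a generic linear extension of the partial order, so that $\Aut(\mP^<)$ is extremely amenable. Every polymorphism of a reduct $\mA$ then becomes, on any prescribed finite subset of $P$ and after composition with a suitable automorphism of $\mP$, canonical with respect to $\Aut(\mP^<)$; consequently the classification is controlled by the canonical operations generated by $\Pol(\mA)$ together with $\Aut(\mP)$. Since there are only finitely many orbits of pairs (equal; comparable upward; comparable downward; incomparable; and their refinements by the linear order), the behaviour of a canonical binary --- and, when the binary level does not decide matters, ternary --- operation is finite combinatorial data, and the whole problem becomes a finite case analysis of these behaviours.

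The dichotomy then falls out of that analysis. If only ``degenerate'' behaviours are realizable, one shows that $\mA$ is homomorphically equivalent to a reduct of $(\mQ;\leq)$, landing in case (1), where the temporal dichotomy of \cite{tcsps-journal} takes over. Otherwise one either produces a canonical binary operation behaving like $e_<$ or $e_\le$ --- case (2) --- or every reachable non-degenerate behaviour still leaves a stabilizer of $\Pol(\mA)$ with a continuous clone homomorphism to $\mathbf 1$ --- case (3). In the last case, by Theorem~\ref{thm:siggers} applied to the model-complete core of $\mA$, this is equivalent to the core pp-interpreting $\relS$ with parameters, hence to NP-completeness of $\CSP(\mA)$ when the language is finite. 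For the tractability claim in case (2) one designs a direct polynomial-time algorithm: exploiting the minor identities and canonicity of $e_\le$ (resp.\ $e_<$), an instance is solved by propagating forced order constraints, maintaining for each variable the still-admissible ``directions'', and finally verifying an acyclicity/linear condition; soundness is immediate and completeness holds because a minimal failing instance would violate preservation by $e_\le$.

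The principal obstacle, as in every classification of this kind, is the canonical-functions case analysis underpinning the hardness half: enumerating the possible behaviours of binary and ternary canonical operations on the ordered random poset, and certifying that each ``bad'' one genuinely yields the clone-homomorphism obstruction (equivalently, the primitive-positive interpretation of $\relS$), is long and technical. Designing and verifying the dedicated polynomial-time algorithm for the $e_<$/$e_\le$ cases is a second, smaller hurdle.
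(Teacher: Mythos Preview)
Your assessment is correct: the paper does not prove this theorem but merely quotes it from \cite{posetCSP16}, exactly as you note in your first sentence. The orientation sketch you add is a reasonable high-level summary of the canonical-functions strategy used in that reference, though of course it is not (and need not be) a proof here.
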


\begin{prop}\label{lem:posetlinearisation}
Theorem \ref{thm:linearisationtheorm} holds for reducts of the random partial order $\mP$.
\end{prop}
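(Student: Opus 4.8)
The plan is to go through the three cases of Theorem~\ref{thm:posetclass} and verify the two equivalences of Theorem~\ref{thm:linearisationtheorm} in each; recall that $\Pol(\relA)$ satisfies a non-trivial set of linear identities exactly when it has no h1 clone homomorphism to $\mathbf 1$, so the first equivalence asserts that such a homomorphism can always be upgraded to a uniformly continuous one. In the NP-complete case~(3) the classification gives a continuous clone homomorphism $\Pol(\relA)\to\mathbf 1$, which is in particular an h1 clone homomorphism, so $\Pol(\relA)$ has no non-trivial linear identities; and since NP-hardness in this case is proved in~\cite{posetCSP16} by pp-constructing $\relS$ in $\relA$, Theorem~\ref{thm:h1} also yields a uniformly continuous h1 clone homomorphism $\Pol(\relA)\to\mathbf 1$. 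Hence both equivalences hold (trivially) in case~(3). Case~(2) is tractable, so there it suffices to prove that $\Pol(\relA)$ has no h1 clone homomorphism to $\mathbf 1$; and in case~(1), where the CSP may be tractable or NP-complete, we argue instead by transferring the statement from a reduct of $(\mQ;\leq)$ covered by Proposition~\ref{lem:Qlinearisation}.

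So assume first that case~(1) holds, i.e., $\relA$ is homomorphically equivalent to a reduct $\relA'$ of $(\mQ;\leq)$. Since homomorphic equivalence is an instance of pp-constructibility, Theorem~\ref{thm:h1} shows that $\Pol(\relA)$ and $\Pol(\relA')$ agree on having a uniformly continuous h1 clone homomorphism to $\mathbf 1$; moreover, a pair of homomorphisms $\relA\to\relA'$ and $\relA'\to\relA$ yields, via the reflection construction of~\cite{wonderland}, h1 clone homomorphisms $\Pol(\relA)\to\Pol(\relA')$ and $\Pol(\relA')\to\Pol(\relA)$, so (h1 clone homomorphisms being closed under composition) the two clones also agree on having an h1 clone homomorphism to $\mathbf 1$ at all. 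As $\CSP(\relA)=\CSP(\relA')$, all three properties occurring in Theorem~\ref{thm:linearisationtheorm} transfer between $\relA$ and $\relA'$, and case~(1) follows from Proposition~\ref{lem:Qlinearisation}.

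The main work is case~(2), where $\Pol(\relA)$ contains $e_<$ or $e_\leq$ and we must exhibit a non-trivial set of linear identities. Here I would follow the scheme used for $\mi$ and $\mathrm{ll}$ in the proof of Proposition~\ref{lem:Qlinearisation}, based on Lemma~\ref{lem:lineq} with $k=3$ (note that $\mP$ is homogeneous and finitely bounded with forbidden substructures of sizes $1$, $2$, $3$ --- the failures of reflexivity, antisymmetry, and transitivity --- so Proposition~\ref{prop:finitelybounded} is available as well): precompose the binary operation $e_\leq$ (respectively $e_<$) with self-embeddings of $\mP$ to obtain five binary polymorphisms $g_1,\dots,g_5\in\Pol(\relA)$ realizing ``parallel copies'' of one canonical binary behaviour, and then, for each injection $\psi\colon\{1,2,3\}\to\{1,\dots,5\}$ and each finite subset $F$ of the domain of $\mP$, construct a ternary $f_\psi\in\Pol(\relA)$ whose nu-minors coincide on $F$, up to an automorphism of $\mP$, with $g_{\psi(1)},g_{\psi(2)},g_{\psi(3)}$. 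Lemma~\ref{lem:endoexistence} then upgrades these local solutions to operations $e_\psi\circ f_\psi\in\Pol(\relA)$ with nu-minors $e\circ g_{\psi(1)},\,e\circ g_{\psi(2)},\,e\circ g_{\psi(3)}$ for a fixed $e\in\overline{\Aut(\mP)}$, and Lemma~\ref{lem:lineq} concludes. The principal obstacle is precisely the construction of the $f_\psi$: it requires unwinding the canonicity properties of $e_<$ and $e_\leq$ from~\cite{posetCSP16} (in the spirit of Proposition~10.5.17 in~\cite{Bodirsky-HDR}) in order to verify that the prescribed system of binary nu-minors is simultaneously realizable by a single ternary polymorphism on each finite subdomain --- the one place where combinatorics genuinely specific to the random poset, rather than to the rational order, comes into play.
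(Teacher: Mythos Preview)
Your handling of cases~(1) and~(3) matches the paper's argument. The divergence is in case~(2), and here the paper's proof is considerably shorter than what you propose.

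You plan to construct five binary polymorphisms $g_1,\dots,g_5$ by precomposing $e_<$ (or $e_\leq$) with embeddings, and then for each injection $\psi\colon\{1,2,3\}\to\{1,\dots,5\}$ build a ternary $f_\psi$ whose nu-minors realize $g_{\psi(1)},g_{\psi(2)},g_{\psi(3)}$ on finite sets, finally invoking Lemma~\ref{lem:lineq}. You correctly flag this construction of the $f_\psi$ as the main combinatorial obstacle, requiring a nontrivial analysis of the canonical behaviour of $e_<$ and $e_\leq$.

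The paper avoids this obstacle entirely. The key observation is that the ternary operations
\[
(x,y,z)\mapsto e_<(e_<(x,y),z)\qquad\text{and}\qquad (x,y,z)\mapsto e_\leq(e_\leq(x,y),z)
\]
are already \emph{totally symmetric modulo outer embeddings of $\mP$}. Since $\mP$ is finitely bounded with forbidden substructures of size at most~$3$, Theorem~\ref{thm:finitelybounded} applies directly with $k=3$, and the conclusion follows in one line. You mention that Proposition~\ref{prop:finitelybounded} is available, but then do not use it; the paper uses precisely (the stronger-hypothesis version) Theorem~\ref{thm:finitelybounded} as the main tool, so that no ad~hoc construction of the $f_\psi$ is needed at all --- the machinery of Section~\ref{subsect:totally} handles it uniformly. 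Your route would presumably work, but it duplicates effort already packaged in Theorem~\ref{thm:finitelybounded}, and the total symmetry of the iterated compositions is the one-line insight that makes the case trivial.
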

\begin{proof}
If a reduct $\mA$ is homomorphically equivalent to a reduct of $(\mQ;\leq)$, then the statement follows from Proposition~\ref{lem:Qlinearisation} and the fact that homomorphic equivalence preserves linear identities (Theorem \ref{thm:h1} in~\cite{wonderland}).
If item~(2) applies, then note that the mappings $(x,y,z)\mapsto e_<(e_<(x,y),z)$ and $(x,y,z)\mapsto e_{\leq}(e_{\leq}(x,y),z)$ are totally symmetric modulo outer embeddings. Since $\mP$ can be described by forbidden substructures of size $3$, the satisfaction of non-trivial linear identities in $\Pol(\relA)$ follows from Theorem \ref{thm:finitelybounded}.
\end{proof}

\subsubsection{Reducts of random graph}
For the random graph $G=(V;E)$, the following dichotomy has been shown~\cite{BodPin-Schaefer-both, BP-reductsRamsey}.
\begin{theorem} \label{thm:CSPclassification}
Let $\mA$ be a reduct of $G$. Then one of the following holds:
\begin{enumerate}
\item $\Pol(\mA)$ contains a constant operation.
\item $\Pol(\mA)$ contains an (at most ternary) injective weak near unanimity function $f(x_1,\ldots,x_k)$ modulo outer embeddings of $G$, i.e., $f$ satisfies identities of the form
\begin{align*}
e_1\circ f(y,x,\ldots,x) &= e_2\circ f(x,y,x,\ldots, x) = \ldots \\ &=e_k\circ f(x,\ldots,x, y), 
\end{align*}
with $e_1,\ldots,e_k \in \overline{\Aut(G)}$.
\item $\Pol(\mA)$ has a continuous homomorphism to ${\mathbf 1}$.
\end{enumerate}
If $\mA$ has a finite relational language, then~(3) implies that $\csp(\mA)$ is NP-complete, and (1) and (2) imply tractability of the CSP.
\end{theorem}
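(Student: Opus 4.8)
The plan is to follow the proof of the classification established in \cite{BodPin-Schaefer-both, BP-reductsRamsey}, which I sketch here in the language of the present paper.

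\textbf{Easy direction and reduction to the core.} First I would pass to the model-complete core $\mathbb B$ of $\mathbb A$: since $\CSP(\mathbb A)=\CSP(\mathbb B)$ and, by Theorem~\ref{thm:h1}, homomorphic equivalence preserves the existence of an h1 clone homomorphism to $\mathbf 1$, it suffices to handle $\mathbb B$, the structural alternatives for $\Pol(\mathbb A)$ then being recovered by a short argument (e.g.\ a constant endomorphism of $\mathbb B$ lifts to one of $\mathbb A$). Next I would inspect $\End(\mathbb A)$: if it contains an endomorphism whose image induces a clique or an independent set, then all relations of $\mathbb A$ become equality-definable on that image, so $\mathbb B$ is a reduct of $(\mathbb N;=)$ and Theorem~\ref{thm:equlancomclas} applies --- a binary injection being symmetrisable, by composing it with itself, to an injective binary weak near unanimity operation modulo automorphisms. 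Otherwise $\mathbb B$ is again a reduct of $G$, hence $\omega$-categorical, with automorphism group one of the five closed supergroups of $\Aut(G)$ by Thomas' classification. At this point the NP-complete alternative~(3) is immediate: if $\Pol(\mathbb B)$ has a continuous clone homomorphism to $\mathbf 1$ then, $\Pol(\mathbb B)$ being itself a (trivial) stabilizer, condition~(ii) of Theorem~\ref{thm:siggers} fails, so $\mathbb B$ pp-interprets $\relS$ with parameters, and as $G$ is finitely bounded we get that $\CSP(\mathbb A)=\CSP(\mathbb B)$ is NP-complete.

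\textbf{Producing a good polymorphism.} Assuming no such homomorphism exists, I would use that $G$ is a reduct of the ordered homogeneous Ramsey structure $(G;<)$ (the generic ordered graph): the canonisation method (cf.\ \cite{BP-reductsRamsey}) replaces an arbitrary polymorphism by one whose restriction to each finite set is, up to an automorphism, canonical with respect to $(G;<)$, and Lemma~\ref{lem:endoexistence} upgrades this to canonicity modulo one outer self-embedding of $G$. I would then go through the finitely many canonical behaviours of a binary --- and, where those are exhausted, a ternary --- injective polymorphism relative to $E$ and $<$, and show that one of the following holds: some behaviour collapses all its inputs, giving a constant polymorphism; some behaviour is symmetric, giving a binary injective WNU modulo embeddings; a ternary majority- or minority-type behaviour appears, giving an injective ternary WNU modulo embeddings; or every binary canonical polymorphism is a projection modulo embeddings, in which case I would assemble a continuous clone homomorphism to $\mathbf 1$, contradicting the hypothesis. (Injectivity is available since the non-injective situation was absorbed into the reduction to $(\mathbb N;=)$.) Finally I would supply a polynomial algorithm in each tractable outcome: trivial for a constant; a direct independent-set/clique argument for a symmetric binary operation; arc consistency in the presence of a majority operation; and a Gaussian-elimination-type procedure (linear equations over a two-element field) in the minority case.

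\textbf{Main obstacle.} The hard part is the case analysis just described: verifying that the list of canonical behaviours is exhaustive, and that every reduct outside alternative~(3) really carries one of the good operations, is the intricate combinatorial core of \cite{BodPin-Schaefer-both}. Secondarily, the minority-case algorithm is the most delicate algorithmic ingredient, and extracting the uniform ``injective WNU of arity $\le 3$ modulo outer embeddings of $G$'' conclusion from the explicit canonical operations requires one more application of Lemma~\ref{lem:endoexistence}.
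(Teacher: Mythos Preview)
The paper does not prove Theorem~\ref{thm:CSPclassification}; it is quoted as a known classification from \cite{BodPin-Schaefer-both, BP-reductsRamsey} and used as a black box for the subsequent linearisation argument (Proposition~\ref{lem:randomgraphlinearisation}). There is therefore no proof in the paper to compare your proposal against.

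That said, your outline is a broadly faithful summary of the strategy in the cited sources, with a few inaccuracies worth flagging. First, the organisation in \cite{BodPin-Schaefer-both} does not proceed via the model-complete core: one works directly with $\Pol(\mathbb A)$ and canonical functions over the ordered random graph, and the equality-reduct case falls out of the behaviour analysis rather than from a preliminary inspection of $\End(\mathbb A)$. Your promise to ``recover the structural alternatives for $\Pol(\mathbb A)$'' from those of the core $\mathbb B$ is not as short as you suggest: a weak near unanimity operation modulo embeddings of $(\mathbb N;=)$ in $\Pol(\mathbb B)$ does not obviously lift to one modulo embeddings of $G$ in $\Pol(\mathbb A)$, since the outer functions live in different monoids. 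Second, the remark that a binary injection becomes a binary WNU ``by composing it with itself'' is off --- on a pure set any binary injection is already symmetric modulo a permutation of the domain, no composition needed. Finally, your identification of the hard part is accurate: the exhaustive case analysis of canonical behaviours and the minority-case algorithm are indeed where the real work lies in \cite{BodPin-Schaefer-both}.
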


In fact, \cite{BodPin-Schaefer-both} provides a list of concrete weak near unanimity functions modulo outer embeddings that can appear, and Theorem~\ref{thm:finitelybounded} directly applies to a subset of those functions. To obtain non-trivial linear identities for all cases, and moreover simultaneously so, we are going to use  the following variant of Lemma \ref{lem:lineq}.

\begin{lemma} \label{lem:lineq2}
Let $\C$ be a function clone, and suppose there are binary $g_{i,j} \in \C$ for $i, j \in \{1,\ldots, k\}$ such that
\begin{enumerate}
\item for every fixed $j\in\{1,\ldots,k\}$ there is a function $f_j(x_1,\ldots,x_k) \in \C$ whose nu-minors equal $g_{1,j}, \ldots, g_{k,j}$, and
\item for every $\psi\colon \{1,\ldots, k\}\To \{1,\ldots,k\}$ there is a function $f_\psi(x_1,\ldots,x_k) \in \C$ whose nu-minors equal $g_{\psi(1),1}, \ldots, g_{\psi(k),k}$.
\end{enumerate}
Then there is no h1 clone homomorphism from $\C$ to $\mathbf 1$.
\end{lemma}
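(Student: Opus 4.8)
The plan is to adapt the pigeonhole argument of Lemma~\ref{lem:lineq} to the more refined combinatorial structure provided here. Suppose for contradiction that there is an h1 clone homomorphism $\xi\colon\C\to\mathbf 1$. Since $\xi$ is h1, it sends the nu-minors of any $f\in\C$ to the corresponding nu-minors of $\xi(f)$; but $\xi(f)$ is a projection $\pi^k_\ell$, so \emph{all} of its nu-minors equal the same binary projection, namely $\pi^2_2$ if $\ell$ is the ``swapped'' coordinate and $\pi^2_1$ otherwise — more precisely, exactly one nu-minor is $\pi^2_2$ and the rest are $\pi^2_1$. The key consequence is: for each $f\in\C$, among its $k$ nu-minors, exactly one is mapped by $\xi$ to $\pi^2_2$ and the others to $\pi^2_1$; equivalently, $\xi$ induces from each such family of nu-minors a single distinguished index.

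First I would use hypothesis~(1): for each fixed $j$, the function $f_j$ has nu-minors $g_{1,j},\dots,g_{k,j}$, so exactly one index $r(j)\in\{1,\dots,k\}$ satisfies $\xi(g_{r(j),j})$ projecting to the second coordinate, while $\xi(g_{i,j})$ projects to the first coordinate for $i\neq r(j)$. This defines a function $r\colon\{1,\dots,k\}\to\{1,\dots,k\}$. Now I would feed this $r$ into hypothesis~(2) in the role of $\psi$: there is $f_r\in\C$ whose nu-minors are $g_{r(1),1},g_{r(2),2},\dots,g_{r(k),k}$. By the choice of $r$, every one of these nu-minors is mapped by $\xi$ to $\pi^2_2$ (projection onto the second, i.e.\ the ``$y$'', coordinate). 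But the nu-minors of a projection $\pi^k_\ell$ are $k-1$ copies of $\pi^2_1$ and a single $\pi^2_2$; for $k\geq 2$ it is impossible that all $k$ nu-minors equal $\pi^2_2$. This contradicts that $g_{r(1),1},\dots,g_{r(k),k}$ are the nu-minors of $f_r\in\C$, whose image $\xi(f_r)$ is a projection. Hence no such $\xi$ exists.

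The step I expect to require the most care is making precise the claim that an h1 clone homomorphism to $\mathbf 1$ sends the nu-minors of $f$ to the nu-minors of the projection $\xi(f)$, and identifying exactly which binary projection each nu-minor goes to. This is where the h1 condition $\xi(f(g_1,\dots,g_n))=\xi(f)(g_1,\dots,g_n)$ for projections $g_1,\dots,g_n$ is used: the nu-minor $h^f_i(x,y)=f(\pi^2_1,\dots,\pi^2_1,\pi^2_2,\pi^2_1,\dots,\pi^2_1)$ is obtained by composing $f$ with binary projections, so $\xi(h^f_i)=\xi(f)(\pi^2_1,\dots,\pi^2_2,\dots,\pi^2_1)$, which is $\pi^2_2$ precisely when the distinguished coordinate of the projection $\xi(f)$ is $i$, and $\pi^2_1$ otherwise. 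Once this bookkeeping is set up, the remainder is exactly the two-stage pigeonhole described above, with no further obstacle; note that $k\geq 2$ is needed only to guarantee that $\pi^2_1\neq\pi^2_2$, so a single family of $k$ binary projections cannot be all equal to $\pi^2_2$ while being the nu-minors of a $k$-ary projection.
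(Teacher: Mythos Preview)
Your proof is correct and follows essentially the same two-stage pigeonhole argument as the paper. The only cosmetic difference is that the paper picks, for each $j$, an index $\psi(j)$ with $\xi(g_{\psi(j),j})=\pi^2_1$ (which exists since not all nu-minors of a projection agree), whereas you pick the unique index $r(j)$ with $\xi(g_{r(j),j})=\pi^2_2$; either choice produces a $\psi$ for which all nu-minors of $f_\psi$ land on the same binary projection, yielding the contradiction.
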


\begin{proof}
Suppose to the contrary that there exists an h1 clone homomorphism from $\C$ to $\mathbf 1$. If for a fixed $j\in\{1,\ldots,k\}$ the functions $g_{1,j}, \ldots, g_{k,j}$ are mapped to the same projection in $\mathbf 1$, then this contradicts the fact that they are the nu-minors of $f_j(x_1,\ldots,x_k)$. Thus, for every $j\in\{1,\ldots,k\}$ there exists $\psi(j)\in\{1,\ldots,k\}$ such that $g_{\psi(j), j}$ is mapped to the first projection. But then $g_{\psi(1),1}, \ldots, g_{\psi(k),k}$, the nu-minors of $f_\psi$, are all sent to the same projection, a contradiction.
\end{proof}

\begin{prop}\label{lem:randomgraphlinearisation}
Theorem \ref{thm:linearisationtheorm} holds for reducts of the random graph $G=(V;E)$.
\end{prop}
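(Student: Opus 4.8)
The plan is to run through the trichotomy of Theorem~\ref{thm:CSPclassification}, the whole content being that in its first two (tractable) cases $\Pol(\relA)$ has no h1 clone homomorphism to $\mathbf 1$. Granting this, the first equivalence of Theorem~\ref{thm:linearisationtheorm} holds with both sides false in cases~(1) and~(2) and, since in case~(3) $\Pol(\relA)$ carries a continuous clone homomorphism to $\mathbf 1$ --- which, being a clone homomorphism, is in particular an h1 clone homomorphism, and, $\Pol(\relA)$ being oligomorphic, is moreover uniformly continuous --- with both sides true in case~(3); and when $\relA$ has a finite language the second equivalence follows because cases~(1) and~(2) are exactly the tractable ones. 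Case~(1) is immediate: a constant polymorphism yields a binary constant $c$ with $c(x,y)=c(y,x)$, a non-trivial linear identity.

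So assume case~(2), and let $f$ be the injective weak near unanimity polymorphism modulo outer embeddings of $G$, of arity $k\in\{2,3\}$, provided by the classification. If $k=2$, the weak near unanimity identity reads $e_1\circ f(y,x)=e_2\circ f(x,y)$ with $e_1,e_2\in\overline{\Aut(G)}$; applying it with $x$ and $y$ interchanged gives also $e_1\circ f(x,y)=e_2\circ f(y,x)$, so the two nu-minors $f(y,x)$ and $f(x,y)$ of $f$ are totally symmetric modulo outer embeddings of $G$ in the sense of Definition~\ref{def:numinors}. Since the age of $G$ is cut out by forbidden substructures of size at most $2$, Corollary~\ref{prop:finitelybounded2} applies directly and shows that $\Pol(\relA)$ has no h1 clone homomorphism to $\mathbf 1$. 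The remaining case $k=3$ is the main one. Here Proposition~\ref{prop:finitelybounded} cannot be invoked, because the weak near unanimity identities only identify pairs of nu-minors of $f$ up to \emph{separate} pairs of outer embeddings, not up to the single pair that total symmetry of the nu-minors demands (cf.\ the remark after Definition~\ref{def:numinors}). Instead, I would use the explicit finite list of ternary canonical weak near unanimity operations that can occur, supplied by~\cite{BodPin-Schaefer-both}, together with Lemma~\ref{lem:lineq2} for $k=3$. For each such $f$ and each index pattern demanded by Lemma~\ref{lem:lineq2}, the corresponding ternary polymorphism of $\relA$ with the prescribed nu-minors would be obtained by mimicking the proof of Lemma~\ref{lem:funcsconstruction}: build the kernel structures associated to the minors $h_i^f$ of $f$ for the relevant index set, observe --- this is where the behaviour of the canonical $f$ with respect to $E$ enters --- that they amalgamate into a single structure $\mathbb{X}$, the amalgamation being unobstructed since every finite graph embeds into $G$, embed $\mathbb{X}$ into $G$, read off binary $g_{i,j}\in\Pol(\relA)$ together with, for each finite $F\subseteq V$, an automorphism of $G$ carrying the relevant minors of $f$ to the $g_{i,j}$ on $F$, and globalise via Lemma~\ref{lem:endoexistence}. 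The $g_{i,j}$ and appropriate embedding-shifts of $f$ then satisfy the hypotheses of Lemma~\ref{lem:lineq2}, so again $\Pol(\relA)$ has no h1 clone homomorphism to $\mathbf 1$.

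The hard part is exactly this $k=3$ case: one must check, operation by operation through the list of~\cite{BodPin-Schaefer-both}, that the relevant kernel structures really amalgamate inside $G$ --- equivalently, that the ternary operations with prescribed nu-minors belong to $\Pol(\relA)$ --- and organise the construction so that one finite non-trivial set of linear identities, namely the one extracted from Lemma~\ref{lem:lineq2} with $k=3$, is obtained uniformly across all reducts in case~(2). The constant case and the binary instance of case~(2) are routine given Corollary~\ref{prop:finitelybounded2}, and case~(3) requires no new work.
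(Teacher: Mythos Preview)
Your overall plan is correct and lands on the same tool as the paper --- Lemma~\ref{lem:lineq2} --- but the paper's execution is both simpler and more uniform than what you sketch. Two differences are worth noting.

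First, the arity split is unnecessary. The paper treats the injective weak near unanimity case for arbitrary $k$ in one stroke, so your separate $k=2$ argument via Corollary~\ref{prop:finitelybounded2} (while correct) is redundant.

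Second, and more substantially, the $k=3$ case does \emph{not} require consulting the explicit list of canonical operations from~\cite{BodPin-Schaefer-both}. You correctly observe that universality of $G$ makes amalgamation unobstructed, but then hedge that ``one must check, operation by operation'' --- this is exactly what the paper avoids. The paper writes down, for any finite $F\subseteq V$, a single concrete graph $H$ on the vertex set $\{1,\ldots,k\}^2\times F^2$, whose edge relation is defined purely in terms of the nu-minors $h_i^f$: within a column $j_1=j_2$ the edges mirror those between $h_{i_1}^f$-values and $h_{i_2}^f$-values, and across columns $j_1\neq j_2$ they mirror those between $h_{j_1}^f$-values and $h_{j_2}^f$-values. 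Universality embeds $H$ into $G$, and then the weak near unanimity identity alone --- via the fact that all $e_i\circ h_i^f$ coincide, so any two $h_i^f$ induce the same edge relation on $F^2$ --- guarantees that both the ``column'' maps $h_i^f(x,y)\mapsto(i,j,x,y)$ and the ``diagonal'' maps $h_i^f(x,y)\mapsto(\psi(i),i,x,y)$ are partial isomorphisms of $G$, hence extend to automorphisms $\alpha_j,\alpha_\psi$. Setting $g_{i,j}:=\alpha_j\circ h_i^f$, $f_j:=\alpha_j\circ f$, $f_\psi:=\alpha_\psi\circ f$ verifies Lemma~\ref{lem:lineq2} on $F$, and Lemma~\ref{lem:endoexistence} globalises. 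No property of $f$ beyond injectivity and the weak nu identity is used, so no case analysis is needed.
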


\begin{proof}
If $\Pol(\mA)$ contains a constant operation, then the linear identity $c(x,y) = c(y,x)$ holds for some constant binary $c\in\Pol(\relA)$. So we only have to study the case  where $f(x_1,\ldots,x_k)$ is injective and weak nu modulo outer embeddings of $G$. 

As in Definition~\ref{def:numinors}, denote the nu-minors of $f$ by $h_1^f,\ldots,h_k^f$; we are going to construct the functions $g_{i.j}$, $f_j$, and $f_\psi$ required in Lemma~\ref{lem:lineq2} from these nu-minors. By Lemma~\ref{lem:endoexistence} we only have to prove for every finite $F\subseteq V$ that there are functions $g_{i,j}$, $f_j$, and $f_\psi$ that satisfy the identities in Lemma~\ref{lem:lineq2} on $F$. To this end, we construct a graph $H$ with vertices $(i,j,x,y)$, where $i,j\in \{1,\ldots,k\}$ and $x,y\in F$, and in which two vertices $(i_1,j_1,x_1,y_1)$ and $(i_2,j_2,x_2,y_2)$ are adjacent if and only  if
\begin{itemize}
\item $j_1 = j_2$ and $(h_{i_1}^f(x_1,y_1),h_{i_2}^f(x_2,y_2))\in E$, or
\item $j_1 \neq j_2$ and $(h_{j_1}^f(x_1,y_1),h_{j_2}^f(x_2,y_2))\in E$.
\end{itemize}
By the universality of the random graph we can regard $H$ as a subgraph of $G$. By our construction, for every $j\in\{1,\ldots,k\}$ there exists $\alpha_j\in\Aut(G)$ with $\alpha_j \circ h_i^f(x,y) = (i,j,x,y)$ for all $i\in\{1,\ldots,k\}$ and all $x,y\in F$. Similarly for every $\psi\colon \{1,\ldots,k\}\To \{1,\ldots,k\}$ there exists $\alpha_\psi\in\Aut(G)$ such that $\alpha_\psi \circ h_i^f(x,y) = (\psi(i),i,x,y)$ for all $i\in\{1,\ldots,k\}$ and all $x,y\in F$. It is easy to verify that then $g_{i,j} := \alpha_j \circ h_i^f$, $f_j := \alpha_j \circ f$ and $f_\psi := \alpha_\psi \circ f$ satisfy the equations in Lemma \ref{lem:lineq2} on $F$, and we are done.
\end{proof}


\section{Model-complete Cores: A New Proof}\label{sect:cores}

We are going to give a new and short proof of Theorem~\ref{thm:cores} in the language of monoids. As mentioned in Section~\ref{subsect:cores} of the introduction, our proof will work for \emph{weakly oligomorphic} structures, a generalization of $\omega$-categorical structures~\cite{PechCores}. Those are best defined via their endomorphism monoid.

\begin{defn}\label{defn:wo}
A transformation monoid $\M$ on a countable set $M$ is called \emph{weakly oligomorphic} if for all $n\geq 1$ the equivalence relation $\sim_n$ on $M^n$, given by $a\sim_n b$ if and only if there exist $m,m'\in\M$ such that $a=m(b)$ and $b=m'(a)$, has only finitely many classes. 
A countable structure is called \emph{weakly oligomorphic} if its endomorphism monoid is weakly oligomorphic.  
\end{defn}
Let us remark that weakly oligomorphic monoids have been called oligomorphic in~\cite{PechCores}; this leads, however, to inconsistencies with the corresponding notion for function clones, and so the definition shall henceforth be as stated here.

We first outline the idea behind our proof by recalling the situation for finite structures. When $\relA$ is a structure with finite domain which is not a core, then it has a non-surjective endomorphism. Restricting $\relA$ to the image of that endomorphism, one obtains a homomorphically equivalent structure with smaller domain. After finitely many iterations in this fashion, one obtains a structure which is a core; this structure is the core of $\relA$.

When $\relA$ is infinite and weakly oligomorphic, one could expect the analogous argument to work, where termination of the process after finitely many steps is guaranteed by a compactness (rather than finiteness) argument using weak oligomorphicity. However, this turns out to be insufficient, which is the reason for the argument to become considerably more involved: in addition to the compactness argument, the minimal domain of the model-complete core has to be generic in a sense, which is achieved via a second, Fra\"{i}ss\'{e}-type argument. In particular, contrary to the finite case, in general there is no surjective endomorphism of $\relA$ onto the domain of the model-complete core. It is worth noting that the first step corresponds to the construction of a  structure all of whose endomorphisms are self-embeddings; the second step then constructs a structure where in addition every embedding is elementary, i.e., contained in the closure of its automorphisms. This viewpoint also makes clear why the second step is not present in the finite, as it is automatic. 

We start with compactness. Similarly as in~\cite{Topo-Birk}, we define an equivalence relation on $\M$ in order to obtain a compact object.

\begin{defn}\label{defn:leftequi}
Extending the definition of $\sim_n$ in Definition~\ref{defn:wo}, we denote by $\sim$ the equivalence relation on $\M\subseteq M^M$ defined by $f\sim g$ if for all $n\geq 1$ and all $x\in M^n$ we have $f(x)\sim_n g(x)$.
\end{defn}

The standard K\H{o}nig's lemma argument proving the following lemma has been executed in~\cite{BodJunker} for a finer equivalence relation and monoids containing an oligomorphic permutation group, then again in~\cite{Topo-Birk} for the case of oligomorphic function clones, and has once again been presented, perhaps more conceptually, in the most general context in~\cite{canonical}  -- the proof here would be identical, so we omit it.

\begin{lem}\label{lem:compact}
If $\M$ is a topologically closed weakly oligomorphic transformation monoid, then the factor space $\M/\sim$ is compact.
\end{lem}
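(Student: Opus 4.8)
The plan is to adapt the standard K\H{o}nig's lemma compactness argument, exactly as referenced in the remark preceding the statement, to the setting of weakly oligomorphic monoids. The goal is to show that $\M/\sim$, equipped with the quotient topology (equivalently, the topology induced by viewing $\M/\sim$ as a subspace of $\prod_{n\geq 1}(M^n/\sim_n)^{M^n}$), is compact. The natural strategy is to exhibit $\M/\sim$ as a closed subspace of a product of finite discrete spaces, which is compact by Tychonoff.

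First I would fix an enumeration of $M = \{m_0, m_1, \ldots\}$ and, for each $n\geq 1$, use weak oligomorphicity to note that $M^n/\sim_n$ is finite; hence the set $T_n$ of $\sim_n$-classes of the tuples $(f(m_{i_1}),\ldots,f(m_{i_n}))$ for $f$ ranging over $\M$ and $(i_1,\ldots,i_n)$ ranging over finite index tuples is finite. The key observation is that the $\sim$-class of $f$ is completely determined by the function that sends each finite tuple $\bar m$ of elements of $M$ to the $\sim_n$-class of $f(\bar m)$; this is essentially the definition of $\sim$, together with the fact that $\sim_n$-equivalence of $f(\bar m)$ and $g(\bar m)$ for all tuples of all lengths is exactly what $f\sim g$ means. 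So $\M/\sim$ embeds into $\prod_{\bar m} (M^{|\bar m|}/\sim_{|\bar m|})$, a product of finite sets, which is compact.

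Next I would show the image is closed. A point in the product is a coherent assignment of $\sim$-classes to finite tuples; I claim that any such assignment lying in the closure of the image of $\M$ is actually realized by some $f\in\M$. This is where K\H{o}nig's lemma (or a direct compactness argument) enters: given a coherent limit assignment $c$, one builds a finitely branching tree whose nodes at level $k$ are the partial functions $M\supseteq\{m_0,\ldots,m_{k-1}\}\to M$ that are ``compatible'' with $c$ in the sense that the $\sim_n$-class of every tuple of their values agrees with $c$ — each level is nonempty because $c$ is in the closure of the image of $\M$, and finitely branching because of oligomorphicity (only finitely many $\sim_1$-classes of images of each $m_i$), so an infinite branch gives $f\colon M\to M$ with $f\sim$ matching $c$; and $f\in\M$ because $\M$ is topologically closed, since $f$ is a pointwise limit of restrictions of elements of $\M$. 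One must take a little care that the tree nodes can genuinely be chosen from restrictions of monoid elements — this is arranged by requiring compatibility with $c$ on all tuples from the current finite domain, and using closedness of $\M$ at the very end.

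The main obstacle, and the only genuinely delicate point, is the finite-branching condition in the K\H{o}nig argument: one must argue that at each stage there are only finitely many relevant choices for the next value $f(m_k)$ up to the equivalence being tracked. This follows because two candidate partial functions agreeing up to $\sim$ on their common domain can be identified for the purposes of the tree, and weak oligomorphicity bounds the number of $\sim_n$-classes; but it requires being careful that the tree is set up in terms of $\sim$-classes of partial behaviour rather than literal partial functions, so that levels stay finite. Since the excerpt explicitly states that this proof is identical to those in~\cite{BodJunker, Topo-Birk, canonical}, I would in practice simply cite those references rather than reproduce the argument, as the authors do.
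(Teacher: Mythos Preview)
Your proposal is correct and matches the paper's approach: the paper omits the proof entirely, pointing to the standard K\H{o}nig's lemma argument in~\cite{BodJunker, Topo-Birk, canonical}, and your sketch accurately reconstructs that argument --- embedding $\M/\sim$ into a product of finite discrete spaces and using a finitely-branching tree (with nodes taken as $\sim$-classes of partial behaviours to ensure finite branching) to show the image is closed. Your final remark that one would in practice simply cite the references is exactly what the authors do.
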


\begin{lem}\label{lem:ideal}
If $\M$ is a topologically closed weakly oligomorphic transformation monoid, then $\M$ contains a minimal non-empty topologically closed left ideal.
\end{lem}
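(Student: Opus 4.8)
The statement is the standard existence of a minimal closed left ideal in a "compact-like" semigroup, so the plan is to run the usual Zorn's lemma argument, but taking care that the chains of closed left ideals behave well under intersection thanks to the compactness of $\M/\sim$ from Lemma~\ref{lem:compact}. First I would observe that $\M$ itself is a non-empty topologically closed left ideal of $\M$ (here "left ideal" means a set $\J \subseteq \M$ with $\M\circ\J\subseteq\J$, i.e. closed under post-composition by arbitrary elements of $\M$), so the poset $\mathcal P$ of non-empty topologically closed left ideals of $\M$, ordered by reverse inclusion, is non-empty. To apply Zorn's lemma it suffices to show that every chain in $\mathcal P$ has an upper bound, i.e. that the intersection of a chain of non-empty closed left ideals is again a non-empty closed left ideal.

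**The key step.** Closedness and the left-ideal property are trivially preserved by arbitrary intersections, so the only real content is \emph{non-emptiness} of the intersection of a chain $(\J_i)_{i\in I}$ of non-empty closed left ideals. This is where I would use Lemma~\ref{lem:compact}. The point is that each $\J_i$, being a left ideal, is $\sim$-saturated: if $f\in\J_i$ and $g\sim f$, then for every finite $F\subseteq M$ there is $m\in\M$ with $m\circ f\rest_F = g\rest_F$ (applying the definition of $\sim$ coordinatewise to a tuple enumerating $F$), and since $\M$ is topologically closed a compactness/limit argument produces $m'\in\M$ with $m'\circ f = g$, whence $g\in\J_i$. Therefore each $\J_i$ is the preimage under the quotient map $q\colon\M\to\M/\sim$ of a subset $q(\J_i)$ of $\M/\sim$, and this subset is closed in the quotient (since $\J_i$ is closed and $\sim$-saturated and $q$ is the quotient map). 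Hence $(q(\J_i))_{i\in I}$ is a chain of non-empty closed subsets of the compact space $\M/\sim$; it has the finite intersection property, so $\bigcap_i q(\J_i)\neq\emptyset$ by compactness, and thus $\bigcap_i \J_i = q^{-1}(\bigcap_i q(\J_i))$ is non-empty. (If one prefers, one can avoid the saturation discussion by noting that closed subsets of $\M$ that are unions of $\sim$-classes descend to closed subsets of $\M/\sim$, and that any closed left ideal has this form; either way the compactness of the quotient is the engine.)

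**Conclusion and the main obstacle.** With chains bounded, Zorn's lemma yields a maximal element of $\mathcal P$ in the reverse-inclusion order, i.e. a minimal non-empty topologically closed left ideal of $\M$, which is exactly the claim. The routine parts -- that intersections preserve closedness and the ideal property, and the Zorn bookkeeping -- I would state briefly. The one genuine subtlety, and the step I expect to require the most care, is establishing that the members of the chain are $\sim$-saturated (equivalently, that a non-empty closed left ideal that also happens to be saturated descends to a closed, hence compact, subset of $\M/\sim$), since this is what lets compactness of $\M/\sim$ do its work; the verification uses that $\M$ is topologically closed together with the defining property of $\sim$ via the standard limit argument already invoked elsewhere in this section. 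Everything else is the textbook semigroup-theoretic argument for the existence of minimal left ideals in compact right-topological semigroups, transported to this setting.
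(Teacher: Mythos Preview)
Your approach is essentially the paper's: both run Zorn's lemma using the compactness of $\M/\sim$ from Lemma~\ref{lem:compact}. The paper organizes it the other way around---it works directly in the quotient, considering non-empty closed subsets $I'\subseteq\M/\sim$ that are left-invariant in the sense that $[f]_\sim\in I'$ and $m\in\M$ imply $[m\circ f]_\sim\in I'$, finds a minimal such $I'$ by Zorn, and pulls back---but the content is the same, and indeed the paper's pullback step implicitly relies on exactly the saturation fact you isolate.

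One small correction to your saturation argument: you claim a limit argument produces $m'\in\M$ with $m'\circ f=g$. That is stronger than what you have and not obviously available (the witnesses $m_n$ agree with a putative $m'$ only on $f[F_n]$, and there is no reason they converge elsewhere). What you actually get is that $m_n\circ f\to g$ pointwise, and since each $m_n\circ f\in\J_i$ and $\J_i$ is closed, $g\in\J_i$. That is all you need for saturation, and the rest of your argument goes through unchanged.
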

\begin{proof}
Consider the set $S$ of all non-empty topologically closed subsets $I'$ of $\M/\sim$ with the property that whenever $[f]_\sim\in I'$ and $m\in \M$, then $[m\circ f]_\sim\in I'$. Then by compactness arbitrary descending chains in $S$  have a non-empty intersection in $S$. Hence, by Zorn's lemma, $S$ contains a minimal element, the preimage of which under the factor mapping from $\M$ to $\M/\sim$ is topologically closed, left-invariant, and minimal with this property.
\end{proof}



In a sense, any function in a minimal non-empty (topologically) closed left ideal of $\M$ as guaranteed by Lemma~\ref{lem:ideal} can be considered to have minimal range, analogous to the finite case described above. We now argue that this minimal range gives rise to a generic structure. 
Our Fra\"{i}ss\'{e}-type argument (cf.~\cite{OriginalFraisse, Fraisse}) can be performed either by introducing a suitable relational language, or via a more general category-theoretic approach~\cite{kubis}. For brevity we choose the former and define a relational structure $\mathbb{M}$ on the domain $M$ by introducing, for each equivalence class of each relation $\sim_n$,  an $n$-ary relation equal to this class. Subsets of $M$ will be regarded as induced substructures of $\relstr{M}$.
Note that when $I\subseteq \M$ is a minimal closed left ideal, $g\in I$, and $F, F'$ are finite subsets in the range of $g$, then $F$ embeds into $F'$ if and only if there exists $m\in\M$ such that $m[F]\subseteq F'$. 

\begin{lem}\label{lem:fraisse}
Let $I$ be a minimal non-empty closed left ideal of a closed transformation monoid $\M$ and let $g \in I$. Then 
$$
\{g[F] \; | \; F\subseteq M\text{ finite}\}
$$ is a Fra\"{i}ss\'{e} category under embeddings as above.
\end{lem}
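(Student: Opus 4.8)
The goal is to verify that the collection $\mathcal{K} := \{g[F] \mid F \subseteq M \text{ finite}\}$, regarded as a class of finite induced substructures of $\mathbb{M}$, has the amalgamation property (together with the joint embedding property and hereditarity, which are either trivial or follow from the same ideas). Throughout I use the characterization noted just before the statement: for finite $F, F'$ in the range of $g$, $F$ embeds into $F'$ (as substructures of $\mathbb{M}$) if and only if there is $m \in \M$ with $m[F] \subseteq F'$; more generally a map $F \to F'$ is an embedding precisely when it is the restriction of some $m \in \M$.

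\textbf{First step: hereditarity and JEP.} Hereditarity is immediate, since any subset of $g[F]$ is of the form $g[F_0]$ for $F_0 \subseteq F$. For joint embedding, given two objects $g[F_1]$ and $g[F_2]$, I would simply take $F = F_1 \cup F_2$: then $g[F_1], g[F_2] \subseteq g[F]$, and inclusion of subsets of the range of $g$ is an embedding by the criterion above (the identity map is witnessed by $m = \id$). So JEP reduces to amalgamation over the empty structure and will not need separate work.

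\textbf{Main step: amalgamation.} Suppose we are given embeddings $e_i \colon A \to B_i$ for $i = 1,2$, where $A = g[F]$ and $B_i = g[F_i]$ are objects of $\mathcal{K}$. I would first reduce to the case where the $e_i$ are inclusions of subsets of $g[M]$: by the embedding criterion, $e_i$ is the restriction of some $m_i \in \M$, and since $I$ is a left ideal, $m_i \circ g \in I$; replacing $A$ by $(m_2 \circ m_1)[\ldots]$ appropriately one can arrange that both copies of $A$ sit inside the range of a single element of $I$. This is where minimality of $I$ enters, and it is the step I expect to be the crux: one needs that after composing $g$ on the left with elements of $\M$ to realize both amalgamation legs, the resulting function still has a range large enough to contain a common extension — which is exactly what minimality of the closed left ideal buys us, since $\langle$closed left ideal generated by any $h \in I\rangle$ equals $I$ again, so $h$ has, up to the equivalence $\sim$, the same "rich" range as $g$. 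Concretely: pick $h \in I$ and elements $m, m_1, m_2 \in \M$ so that $m_1 \circ h$ and $m_2 \circ h$ restrict on the relevant finite sets to copies of $B_1$ and $B_2$ agreeing on the copy of $A$; then use that $h$ itself lies in the closed left ideal generated by $m_1 \circ h$ (and by $m_2 \circ h$) to pull the configuration back into the range of a single $g' \in I$, producing a finite $F' \subseteq M$ with $g'[F'] \supseteq$ (copies of) $B_1$ and $B_2$ glued along $A$. Finally, since $g' \sim g$ one translates $g'[F']$ back to an object $g[F'']$ of $\mathcal{K}$, and the inclusions give the required amalgam with embeddings $B_i \to g[F'']$ commuting over $A$.

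\textbf{Remaining details.} One must check that the two "legs" $B_1 \to g[F'']$ and $B_2 \to g[F'']$ really are embeddings in the sense of $\mathbb{M}$ (i.e. restrictions of monoid elements) and that they agree on $A$; both follow from tracking the monoid elements $m, m_1, m_2$ through the construction, together with the fact that the relations of $\mathbb{M}$ are exactly the $\sim_n$-classes, so that "preserving relations" is synonymous with "being a restriction of an element of $\M$". The countability of $\mathcal{K}$ up to isomorphism, needed for it to be a Fra\"iss\'e category, is automatic from weak oligomorphicity of $\M$: there are only finitely many $\sim_n$-classes for each $n$, hence only countably many isomorphism types of finite substructures of $\mathbb{M}$. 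The main obstacle, as indicated, is organizing the amalgamation so that minimality of $I$ is invoked correctly — namely realizing that any $h \in I$ generates $I$ as a closed left ideal, which is what lets us treat the range of every member of $I$ as "universal enough" to host amalgams.
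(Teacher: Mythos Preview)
Your proposal has the right instinct — minimality of $I$ is exactly what makes amalgamation work, via the observation that every $h\in I$ generates $I$ as a closed left ideal, so embeddings realized by elements of $\M$ can be ``undone'' on any finite set. But you take a much longer route than necessary, and your crucial step is not actually carried out.

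The paper's proof is three lines. Given embeddings $m_1\colon g[F]\to g[F_1]$ and $m_2\colon g[F]\to g[F_2]$, one invokes minimality once: since $m_i\circ g\in I$ and $g$ lies in the closure of $\M\cdot(m_i\circ g)$, there exist $m_i'\in\M$ with $m_i'\circ m_i\circ g\rest_{F_i}=g\rest_{F_i}$. With these left inverses in hand, $g[F_1\cup F_2]$ is declared to be the amalgam. No auxiliary $h\in I$ is introduced, no passage to a second $g'\in I$, and no translation step via $g'\sim g$: the amalgam already literally has the form $g[F']$ for $F'=F_1\cup F_2$.

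By contrast, your plan introduces several layers — reduce to inclusions, realize the two legs via $m_1\circ h$ and $m_2\circ h$ for an auxiliary $h$, pull back to some $g'\in I$, then translate $g'[F']$ to $g[F'']$ — none of which the paper needs. More seriously, your ``Concretely'' paragraph never explains how to choose $h,m,m_1,m_2$ so that $m_1\circ h$ and $m_2\circ h$ produce copies of $B_1,B_2$ \emph{agreeing on the copy of $A$}; that is precisely the amalgamation you are trying to establish, so as written the step is circular or at best a restatement of the goal. The subsequent ``pull the configuration back'' is likewise not made precise. Your approach may be completable, but it circles around the single application of minimality that the paper uses directly.
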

\begin{proof}
We check the amalgamation property. Consider structures $g[F], g[F_1], g[F_2]$ in the above set and embeddings $m_1\colon g[F] \to g[F_1]$, $m_2\colon g[F] \to g[F_2]$. Since $I$ is a minimal non-empty closed left ideal, there exist $m_1', m_2' \in\M$ such that $m_1'\circ m_1 \circ g \rest_{F_1}=g \rest_{F_1}$ and $m_2'\circ m_2 \circ g \rest_{F_2}=g \rest_{F_2}$. Hence, $g[F_1\cup F_2]$ is an amalgam.
\end{proof}

The Fra\"{i}ss\'{e} limit $\mathbb X$ of the category in Lemma \ref{lem:fraisse} yields the  model-complete core of $\M$ as follows.

\begin{lem}\label{lem:shinkmonoid}
Let $\mathbb X$ be the Fra\"{i}ss\'{e} limit of any Fra\"{i}ss\'{e} category as in Lemma~\ref{lem:fraisse}. Then the monoid
\begin{align*}
\cM:=\{ f\in X^X\; |\; &\forall F\subseteq X \text{ finite }\\ 
&f\rest_F \text{ is an embedding from } F \text{ onto } f[F]\; \}
\end{align*}
is a model-complete core, i.e., has dense invertibles. Moreover, if $\M$ is weakly oligomorphic, then $\cM$ is oligomorphic.
\end{lem}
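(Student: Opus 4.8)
The plan is to recognise $\cM$ as the monoid $\Emb(\mathbb{X})$ of self-embeddings of the Fra\"{i}ss\'{e} limit $\mathbb{X}$, and then to read off both assertions from standard Fra\"{i}ss\'{e} theory together with the hypothesis on $\M$. First I would observe that $f\in X^X$ belongs to $\cM$ if and only if $f$ is a global self-embedding of $\mathbb{X}$: the defining condition of $\cM$ says that $f\rest_F$ is an isomorphism between the substructures induced on $F$ and on $f[F]$ for every finite $F\subseteq X$, and since injectivity of $f$ is already witnessed on two-element subsets and every relation of $\mathbb{X}$ -- being one of the classes of some $\sim_m$ -- concerns only finitely many elements at a time, this local condition is equivalent to $f$ preserving and reflecting all relations of $\mathbb{X}$; conversely a self-embedding restricts to an embedding on every finite set. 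Hence $\cM=\Emb(\mathbb{X})$; in particular $\cM$ is a monoid, and its group of invertibles is exactly the set of bijective self-embeddings, that is, $\Aut(\mathbb{X})$.

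Next, for the model-complete core property I would invoke homogeneity of the Fra\"{i}ss\'{e} limit: by hypothesis (via Lemma~\ref{lem:fraisse}) $\mathbb{X}$ is the Fra\"{i}ss\'{e} limit of a Fra\"{i}ss\'{e} category, hence every isomorphism between finite substructures of $\mathbb{X}$ extends to an automorphism of $\mathbb{X}$. Given $f\in\cM$ and finite $F\subseteq X$, the map $f\rest_F\colon F\to f[F]$ is precisely such an isomorphism, so it extends to some $\alpha\in\Aut(\mathbb{X})$ which then agrees with $f$ on $F$. This shows $\Aut(\mathbb{X})$ is dense in $\cM$ in the topology of pointwise convergence, i.e., $\cM$ is a model-complete core.

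Finally, assume $\M$ is weakly oligomorphic, so $\sim_m$ has only finitely many classes on $M^m$ for each $m$. The age of $\mathbb{X}$ consists of the isomorphism types of the structures $g[F]$ with $F\subseteq M$ finite, which are finite induced substructures of the relational structure $\mathbb{M}$ of the construction. The isomorphism type of such a substructure of size $\ell$, enumerated by a tuple $\bar s\in M^\ell$, is determined by the assignment sending each $m\le \ell$ and each index tuple $\bar i\in\{1,\dots,\ell\}^m$ to the $\sim_m$-class of the subtuple $\bar s_{\bar i}$; as there are finitely many such $\ell\le n$, $m$, $\bar i$, and finitely many $\sim_m$-classes, there are only finitely many isomorphism types of at most $n$-element substructures of $\mathbb{X}$. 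Since, by homogeneity, the orbits of $\Aut(\mathbb{X})$ on $X^n$ correspond to isomorphism types of at most $n$-element substructures of $\mathbb{X}$ equipped with a surjective length-$n$ enumeration -- again finitely many -- the group $\Aut(\mathbb{X})$ of invertibles of $\cM$ is oligomorphic, and therefore so is $\cM$. The only step that requires genuine care is this last one: one must check that weak oligomorphicity of $\M$, which is a statement about bi-reducibility of tuples in $M$, really does bound the number of small structures in the age of $\mathbb{X}$ -- this works precisely because every relation of $\mathbb{M}$ was defined to be a $\sim_m$-class, so the quantifier-free type of any finite tuple is assembled from finitely much such data, and the possibly infinite relational language is harmless because a structure of size at most $n$ only meets relations of arity at most $n$.
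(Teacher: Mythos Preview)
Your proof is correct and follows the paper's approach exactly---identifying $\cM$ with $\overline{\Aut(\mathbb X)}$ via homogeneity of the Fra\"{i}ss\'{e} limit and reading oligomorphicity off the finiteness of the $\sim_n$-classes---only you spell out in detail what the paper compresses into two lines. The closing remark that a size-$n$ structure ``only meets relations of arity at most $n$'' is not literally true (tuples with repeated entries lie in higher-arity relations), but your actual argument, that the $\sim_m$-class of any tuple drawn from an $\ell$-element set is already determined by the $\sim_\ell$-class of an enumeration of that set, is correct and is what does the work.
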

\begin{proof}
By the homogeneity of $\mathbb X$, the monoid $\cM$ is equal to the closure of $\Aut(\mathbb X)$ in $X^X$. Hence, $\cM$ is closed and a model-complete core. If $\M$ is weakly oligomorphic, then $\cM$ is oligomorphic since all equivalence relations $\sim_n$ have only finitely many classes.
\end{proof}

We can now derive Theorem~\ref{thm:cores} in its more general form for weakly oligomorphic structures.

\begin{thm}\label{thm:cores:wo}
Every weakly oligomorphic structure $\relA$ is homomorphically equivalent to a model-complete core $\relB$. Moreover, $\relB$ is $\omega$-categorical and unique up to isomorphism.
\end{thm}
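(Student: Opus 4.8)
The plan is to assemble Theorem~\ref{thm:cores:wo} from the lemmas already developed, treating the structure $\relA$ through its endomorphism monoid $\M := \overline{\End(\relA)}$, which is a topologically closed weakly oligomorphic transformation monoid on the countable set $A$. First I would invoke Lemma~\ref{lem:ideal} to obtain a minimal non-empty closed left ideal $I \subseteq \M$, and fix some $g \in I$. By Lemma~\ref{lem:fraisse}, the collection $\{g[F] \mid F \subseteq A \text{ finite}\}$ of induced substructures of $\relstr{M}$ is a Fra\"{i}ss\'{e} category under embeddings, and I let $\mathbb X$ be its Fra\"{i}ss\'{e} limit, taken with the relations of the language of $\relstr{M}$. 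Lemma~\ref{lem:shinkmonoid} then gives that the monoid $\cM$ of ``local embeddings'' of $\mathbb X$ is a closed model-complete core, and is oligomorphic because $\M$ is weakly oligomorphic. I would let $\relB$ be the canonical structure on the domain $X$ of $\mathbb X$ whose relations are exactly those invariant under $\cM$ (equivalently, put on $X$ the relations of $\relA$ pulled back along the homomorphisms constructed below, or simply reduce $\relstr{X}$ to the language of $\relA$); then $\Aut(\relB) = \cM$ is dense in $\End(\relB)$, so $\relB$ is a model-complete core, and $\omega$-categoricity of $\relB$ follows from oligomorphicity of $\cM$ via Ryll-Nardzewski.

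Next I would establish homomorphic equivalence between $\relA$ and $\relB$. For the direction $\relB \to \relA$: every finite substructure $g[F]$ of $\relstr{M}$ is, by construction of the relations of $\relstr{M}$ from the classes of $\sim_n$ and the fact that $g \in \M = \overline{\End(\relA)}$, isomorphic (as a substructure of $\relstr{A}$ in the language of $\relA$) to a substructure of $\relA$; since $\mathbb X$ is the Fra\"{i}ss\'{e} limit of these finite pieces and $\relA$ is weakly oligomorphic, a back-and-forth / compactness argument produces a homomorphism $\relB \to \relA$. For the direction $\relA \to \relB$: the map $g$ itself sends $\relA$ into $g[A] \subseteq \relstr{M}$, and every finite piece $g[F]$ embeds into $\mathbb X$ by the extension property of the Fra\"{i}ss\'{e} limit; assembling these embeddings coherently (again by compactness, using that $\M$ is closed and weakly oligomorphic so that only finitely many ``types'' of partial maps occur) yields a homomorphism $\relA \to \mathbb X = \relB$. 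Composing shows $\relA$ and $\relB$ are homomorphically equivalent.

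For uniqueness up to isomorphism, I would argue as is standard for cores: if $\relB$ and $\relB'$ are both model-complete cores homomorphically equivalent to $\relA$, then they are homomorphically equivalent to each other, say via $h \colon \relB \to \relB'$ and $h' \colon \relB' \to \relB$; then $h' \circ h \in \End(\relB)$, and since $\relB$ is a model-complete core, $h'\circ h$ lies in the closure of $\Aut(\relB)$, from which one deduces that $h$ is an embedding, and symmetrically $h'$ is an embedding, whence (by a further back-and-forth using density of automorphisms and weak oligomorphicity to control orbits) $h$ is onto and hence an isomorphism. This is the same argument as in the finite case once density of invertibles is available.

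The main obstacle I expect is the passage in the second paragraph from the local, finite-piece data to global homomorphisms $\relA \to \relB$ and $\relB \to \relA$: one must check that the relations with which $\relstr{M}$ (hence $\mathbb X$, hence $\relB$) is equipped genuinely match those of $\relA$ under these maps, i.e., that the ``generic'' domain produced by the Fra\"{i}ss\'{e} step has not lost or gained any atomic information relative to $\relA$. This is exactly the point flagged in the introduction as the reason the argument is more delicate than the finite case --- the Fra\"{i}ss\'{e} step must make every embedding elementary without destroying homomorphic equivalence --- and it is where the compactness afforded by weak oligomorphicity (Lemma~\ref{lem:compact}) and the minimality of the left ideal $I$ both have to be used carefully. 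Once that is in hand, $\omega$-categoricity and uniqueness are routine.
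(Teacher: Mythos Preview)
Your overall architecture matches the paper's, but there are two genuine gaps, and you have misidentified where the real difficulty lies.

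\textbf{The definition of $\relB$.} You waver between three incompatible definitions: the structure on $X$ with all $\cM$-invariant relations (wrong language, so ``homomorphically equivalent to $\relA$'' is meaningless), the pullback along yet-to-be-constructed homomorphisms (circular), and ``reduce $\relstr{X}$ to the language of $\relA$'' (impossible, since $\relstr{X}$ is in the language of $\relstr{M}$, whose relations are $\sim_n$-classes). The paper's move, which you are missing, is to first set $\mathbb Y$ to be the substructure of $\relstr{M}$ induced on the range of $g$; observe that $\mathbb X$ and $\mathbb Y$ are both $\omega$-categorical with the same age, hence mutually embeddable; and then \emph{realize $\mathbb X$ as a substructure of $\mathbb Y$}, so that $X\subseteq A$ concretely. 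Now $\relB$ is simply the induced substructure of $\relA$ on $X$. With this definition the homomorphic equivalence is immediate: the inclusion gives $\relB\to\relA$, and $g$ followed by an embedding of $\mathbb Y$ into $\mathbb X$ gives $\relA\to\relB$. No back-and-forth or compactness assembly of partial maps is needed here; your ``main obstacle'' paragraph is aimed at a non-problem.

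\textbf{The step you skipped is the actual content.} You assert ``$\Aut(\relB)=\cM$ is dense in $\End(\relB)$'' (you presumably mean $\End(\relB)=\cM$), but this is precisely what has to be proved, and it is where the minimality of $I$ does its work. The non-trivial inclusion is $\End(\relB)\subseteq\cM$: given $e\in\End(\relB)$ and a finite tuple $x$ in $X$, one must produce $m\in\M$ with $m\circ e(x)=x$. The paper does this by writing $x=g(y)$, using minimality once to find $m'\in\M$ with $m'\circ g(x)=x$, and then minimality again (applied to $e\circ m'\circ g\in\M$, which lands in $I$ after composing with $g$) to find $m\in\M$ with $m\circ e\circ m'\circ g(g(y))=g(y)$, whence $m\circ e(x)=x$. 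Without this argument there is no reason an arbitrary endomorphism of the induced substructure $\relB$ should be a local $\relstr{M}$-embedding, and your claim that $\relB$ is a model-complete core is unjustified.
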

\begin{proof}
Set $\M:=\End(\relA)$, and let $I$ be a minimal non-empty closed left ideal. 
For an arbitrarily chosen $g \in I$, let $\mathbb Y$ be the induced substructure of $\mathbb M$ on the range $Y$ of $g$.
Further let $\mathbb X$ be the Fra\"{i}ss\'{e} limit given by $g$ and Lemma~\ref{lem:fraisse} and let $\cM$ be as in Lemma~\ref{lem:shinkmonoid}.
Because $\M$ is weakly oligomorphic, both $\mathbb X$ and $\mathbb Y$ are $\omega$-categorical; moreover, by definition $\mathbb X$ and $\mathbb Y$ have the same finite substructures. It is well-known that whenever two $\omega$-categorical structures have the same finitely generated substructures, then they embed into each other (cf.~\cite{Oligo}), and whence there is an embedding of $\mathbb X$ into $\mathbb Y$ and vice-versa. We may thus henceforth assume that $\mathbb X$ is a substructure of $\mathbb Y$.

We set $\relB$ to be the induced substructure of $\relA$ on $X$ (so $B=X$). Then $\relA$ and $\relB$ are homomorphically equivalent: $\relB$ is a substructure of $\relA$, and composing $g$ with an embedding of $\mathbb Y$ into $\mathbb X$ yields a homomorphism from $\relA$ into $\relB$. 

We next show that $\End(\relB)=\cM$. The only non-trivial inclusion being $\End(\relB)\subseteq\cM$, let $e\in \End(\relB)$, and let $x$ be a finite tuple of elements in $B=X$; we find an element of $\cM$ which agrees with $e$ on $x$. By the definitinon of $\cM$ and since it is a model-complete core, it suffices to find $m\in \M$ such that $m\circ e(x)=x$. To this end, note that we can write $x$ as $g(y)$, for some tuple $y$ in $A$. By the minimality of $I$, there exists $m'\in\M$ such that $m'\circ g(x)=x$. Again by minimality, and since $e\circ m' \circ g\in\M$, there exists $m\in\M$ such that $m\circ e\circ m'\circ g(g(y))=g(y)$, so 
 $$
 m\circ e(x)=m\circ e\circ m' \circ g(x)= m\circ e\circ m' \circ g(g(y))=g(y)=x\; ,
 $$
 proving our claim.

By Lemma~\ref{lem:shinkmonoid}, $\End(\relB)=\cM$ is oligomorphic and a model-complete core. Hence, $\relB$ is $\omega$-categorical. Its uniqueness follows easily from the definitions, as in previous well-known proofs.
\end{proof}

Finally, we connect the concepts of \emph{model-complete cores} and \emph{reflections}. Let $\C$ be a function clone on a set $C$, let $D$ be a set, and let $u\colon C\To D$ and $v\colon D\To C$ be functions. The \emph{reflection} of $\C$ by $u,v$ is the set
$$
\{u(t(v(x_1),\ldots,v(x_n))\; |\; t\in\C\}.
$$
The reflection of a transformation monoid is defined similarly~\cite{wonderland}. The following can be derived directly from Lemma~\ref{lem:shinkmonoid}, without proving Theorem~\ref{thm:cores:wo}.

\begin{prop}\label{prop:reflections}
Let $\M$ be a weakly oligomorphic closed transformation monoid. Then it has a reflection contained in an oligomorphic closed model-complete core $\cM$, which in turn has a reflection contained in $\M$.
\end{prop}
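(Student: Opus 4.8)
The plan is to extract everything from the earlier lemmas in this section, since Proposition~\ref{prop:reflections} is essentially a repackaging of the construction behind Theorem~\ref{thm:cores:wo}, but keeping track of the maps instead of forgetting them. First I would set $\M$ to be the given weakly oligomorphic closed transformation monoid, apply Lemma~\ref{lem:ideal} to obtain a minimal non-empty closed left ideal $I$, fix some $g\in I$, build the Fra\"{i}ss\'{e} category of Lemma~\ref{lem:fraisse} from $g$, take its limit $\mathbb X$, and let $\cM$ be the oligomorphic closed model-complete core on $X$ produced by Lemma~\ref{lem:shinkmonoid}. So the objects are already in hand; what remains is to exhibit the two reflections.

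For the reflection of $\M$ inside $\cM$: as in the proof of Theorem~\ref{thm:cores:wo}, $\mathbb X$ and the structure $\mathbb Y$ induced on the range $Y$ of $g$ have the same finite substructures, hence (being $\omega$-categorical) embed into each other; fix an embedding $w\colon X\To Y\subseteq M$ and note $g\rest_Y^{-1}$ composed appropriately, or more directly, fix a map $v\colon X\To M$ with $v=$ (inclusion of $X$ into $Y$ via the embedding) and a map $u\colon M\To X$ obtained by composing $g$ with an embedding $\mathbb Y\hookrightarrow\mathbb X$. Then for $f\in\M$, the composite $u\circ f\circ v$ lies in $\cM$: this is exactly the computation $m\circ e(x)=x$-style argument in the proof of Theorem~\ref{thm:cores:wo} showing $\End(\relB)\subseteq\cM$, i.e. that every such composite restricted to a finite set is an embedding, using minimality of $I$ twice. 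Thus the reflection of $\M$ by $u,v$ is contained in $\cM$.

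For the reflection of $\cM$ inside $\M$: pick $v'\colon X\To M$ to be $v$ again (inclusion of $X$ as a subset of $Y\subseteq M$), and $u'\colon M\To X$ to be the same homomorphism $g$ followed by an embedding $\mathbb Y\hookrightarrow\mathbb X$ as above; but now I need, for $h\in\cM$, that $u'\circ h\circ v'$ — suitably read as an element of $X^X$ or as an element of $M^M$ after composing with inclusion — actually lands in $\M$. The cleanest route: since $\cM$ is generated topologically by $\Aut(\mathbb X)$, it suffices to treat $h\in\Aut(\mathbb X)$ (and use closedness of $\M$). For such $h$, the finite restrictions of $g\circ(\text{inclusion})\circ h$ are partial embeddings between finite substructures of $\mathbb M$ lying in the range of $g$, hence extend to elements of $\M$ by the minimality/amalgamation properties of $I$ exactly as in Lemma~\ref{lem:fraisse}; a König/compactness argument (as already invoked for Lemma~\ref{lem:endoexistence} and in~\cite{canonical}) then assembles these local extensions into a genuine element of $\M$, showing the reflection of $\cM$ by $u',v'$ sits inside $\M$. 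I would state the two reflection maps explicitly and cite Lemma~\ref{lem:shinkmonoid}, Lemma~\ref{lem:fraisse}, and the minimality of $I$ for each verification.

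The main obstacle I anticipate is bookkeeping with the two pairs of reflection maps so that the composites genuinely have the claimed types and the "contained in'' direction goes the right way: one must be careful that $u,v$ for the first reflection and $u',v'$ for the second are not forced to be mutually inverse, and that the local-to-global step for the second reflection really does produce elements of $\M$ and not merely of its range-restricted analogue — this is where closedness of $\M$ and the minimality of the ideal $I$ both get used, and it is morally the same content as the inclusion $\End(\relB)\subseteq\cM$ in the proof of Theorem~\ref{thm:cores:wo}, just read in the opposite direction. Everything else is a direct reading-off of Lemmas~\ref{lem:ideal}--\ref{lem:shinkmonoid}.
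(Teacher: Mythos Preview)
Your approach is essentially the paper's: build $I,g,\mathbb Y,\mathbb X,\cM$ from Lemmas~\ref{lem:ideal}--\ref{lem:shinkmonoid}, take $u\colon M\to X$ to be $g$ followed by an embedding $\mathbb Y\hookrightarrow\mathbb X$ and $v\colon X\to M$ an embedding $\mathbb X\hookrightarrow\mathbb Y$, and check the two containments. Two points where the paper is cleaner than your sketch:

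\textbf{Same pair for both directions.} The paper uses the \emph{same} $u,v$, swapped: the reflection of $\M$ by $(u,v)$ sits in $\cM$, and the reflection of $\cM$ by $(v,u)$ sits in $\M$. Your worry about needing a second pair $(u',v')$ is unnecessary, and in fact your formula $u'\circ h\circ v'$ for the second reflection does not type-check (with $u'\colon M\to X$, $v'\colon X\to M$, $h\colon X\to X$, the composite $h\circ v'$ is undefined); what you want is $v\circ h\circ u\colon M\to M$.

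\textbf{The second containment is easier than you suggest.} No K\H{o}nig argument or minimality is needed: for $h\in\cM$ and a finite tuple $x$ in $M$, the map $v\circ h\circ e$ (where $u=e\circ g$) restricted to $g(x)$ is an embedding of $\mathbb M$-substructures, so $v\circ h\circ u(x)\sim_n g(x)$ by the very definition of $\sim_n$; hence some $m'\circ g\in\M$ agrees with $v\circ h\circ u$ on $x$, and closedness of $\M$ finishes. Minimality of $I$ is only used (as you correctly say) for the \emph{first} containment, exactly as in the $\End(\relB)\subseteq\cM$ step of Theorem~\ref{thm:cores:wo}.
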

\begin{proof}
Let $I, g,\mathbb Y, \mathbb X, \cM$ be as above. We set $u\colon A\To X$ to be $g$ composed with any embedding from $\mathbb Y$ into $\mathbb X$, and $v\colon X\To A$ to be any embedding from $\mathbb X$ into $\mathbb Y$. The reflection $\{u\circ m\circ v\;|\; m\in\M\}$ is contained in $\cM$. Conversely, the reflection $\{ v\circ m\circ u \;|\; m\in\cM\}$ is contained in $\M$.
\end{proof}

If we are not interested in obtaining Theorem~\ref{thm:cores}, but only in its utility for CSPs, then we do not need to use Theorem~\ref{thm:cores:wo}, but can directly apply Proposition~\ref{prop:reflections}. A function clone is called \emph{weakly oligomorphic} if the monoid of its unary functions is weakly oligomorphic.

\begin{cor}\label{cor:reflectionsclones}
Every weakly oligomorphic closed function clone $\C$ has a reflection contained in an oligomorphic closed model-complete core $\cC$, which in turn has a reflection contained in $\C$. In particular, the CSP of any weakly oligomorphic structure is polynomial-time equivalent to the CSP of an oligomorphic model-complete core.
\end{cor}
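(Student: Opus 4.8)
\medskip

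\noindent\textbf{Proof proposal.} The plan is to pass to the monoid of unary operations, run the construction of this section, and then transport the resulting data back to the level of function clones by means of the homomorphisms that the construction provides. Since $\C$ is a closed function clone, it equals $\Pol(\relA)$ for the relational structure $\relA$ on $C$ whose relations are all relations preserved by $\C$; then $\End(\relA)$ is the monoid of unary operations of $\C$, which is weakly oligomorphic by hypothesis, so $\relA$ is a weakly oligomorphic structure. I would then apply to $\M:=\End(\relA)$ the construction behind Proposition~\ref{prop:reflections} (equivalently, the first part of the proof of Theorem~\ref{thm:cores:wo}): pick a minimal non-empty closed left ideal $I$ (Lemma~\ref{lem:ideal}), an element $g\in I$, the structure $\mathbb Y$ induced by $\mathbb M$ on $\ran g$, the Fra\"{i}ss\'{e} limit $\mathbb X$ of the category of Lemma~\ref{lem:fraisse}, realized via bi-embeddability as an induced substructure of $\mathbb Y$ on a set $X\subseteq\ran g$, and the monoid $\cM$ of Lemma~\ref{lem:shinkmonoid}. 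Set $\relB:=\relA\rest_X$ and $\cC:=\Pol(\relB)$.

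\smallskip

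That $\cC$ is topologically closed is clear, and that it is an \emph{oligomorphic model-complete core} is exactly what the construction delivers: the proof of Theorem~\ref{thm:cores:wo} shows $\End(\relB)=\cM$, and $\cM$ is oligomorphic and a model-complete core by Lemma~\ref{lem:shinkmonoid} (using that $\M$ is weakly oligomorphic); since the unary part of $\Pol(\relB)$ is $\End(\relB)$, the same holds for $\cC$. For the two reflections, recall that $\relA$ and $\relB$ are homomorphically equivalent: the map $u:=\iota\circ g$, where $\iota$ is an embedding of $\mathbb Y$ into $\mathbb X$, is a homomorphism $\relA\to\relB$, and the inclusion $v\colon\relB\hookrightarrow\relA$ is a homomorphism $\relB\to\relA$. (Here one uses that every relation of $\relA$ is preserved by all of $\End(\relA)$, hence is a union of $\sim_n$-classes, so that embeddings between the auxiliary $\sim_n$-structures are in fact embeddings between the corresponding induced substructures of $\relA$.) Finally, a direct check shows that for homomorphisms $u\colon\relA\to\relB$, $v\colon\relB\to\relA$ and any $t\in\Pol(\relA)$ the operation $u\circ t(v(x_1),\ldots,v(x_n))$ lies in $\Pol(\relB)$ --- applying $v$ componentwise sends a tuple of a relation of $\relB$ into the corresponding relation of $\relA$, $t$ keeps it there, and $u$ returns it to $\relB$ --- and likewise with $\relA$ and $\relB$ exchanged. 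Thus the reflection of $\C=\Pol(\relA)$ by $u,v$ is contained in $\cC$, and the reflection of $\cC=\Pol(\relB)$ by $v,u$ is contained in $\C$.

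\smallskip

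For the last assertion, let $\relA$ be a weakly oligomorphic structure in a finite relational signature and apply the above to $\C:=\Pol(\relA)$, keeping that signature for $\relB$. Since homomorphically equivalent structures have the same primitive positive theory, $\CSP(\relA)=\CSP(\relB)$, and $\relB$ is an $\omega$-categorical model-complete core; in particular the two CSPs are polynomial-time equivalent. Alternatively, the polynomial-time equivalence follows from the first part of the statement together with the fact of~\cite{wonderland} that a reflection of a polymorphism clone that is contained in another polymorphism clone induces a polynomial-time reduction between the associated CSPs.

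\smallskip

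\noindent\textbf{Where the difficulty lies.} The only genuinely non-formal ingredient, namely that $\cC$ is a \emph{model-complete core}, is imported wholesale from the construction, and the thing to get right is that one must use its \emph{full} strength: after only the minimal-left-ideal step one obtains a structure on $\ran g$ whose self-embeddings need not be elementary, so the invertibles of its polymorphism clone are not dense and $\cC$ would fail to be a model-complete core; it is the Fra\"{i}ss\'{e}-type genericity of $\mathbb X$ secured by Lemmas~\ref{lem:fraisse} and~\ref{lem:shinkmonoid} that forces $\End(\relB)=\cM$ to have dense invertibles. This is also the reason to route the argument through the structure $\relB$: if instead one tried to take $\cC$ to be the closed function clone generated by the reflection of $\C$ together with $\overline{\Aut(\mathbb X)}$, the delicate point would become controlling the unary part of that generated clone --- showing it is exactly $\cM$ rather than something larger --- whereas passing to $\Pol(\relB)$ makes this automatic.
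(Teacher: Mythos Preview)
Your proposal is correct and follows essentially the same route as the paper: the paper's proof simply states that the argument is identical to that of Proposition~\ref{prop:reflections} (using the same maps $u,v$ lifted from the monoid level to the clone level) and then invokes \cite[Proposition~4.6]{wonderland} for the CSP statement. You have unpacked this by explicitly realizing $\C$ as $\Pol(\relA)$, taking $\cC:=\Pol(\relB)$ with $\relB=\relA\rest_X$, and verifying via the homomorphic equivalence that the reflections by $u,v$ land where they should; your parenthetical remark that the relations of $\relA$ are unions of $\sim_n$-classes is exactly the point needed to see that the $\mathbb M$-embedding $\iota$ is also an $\relA$-homomorphism, and your discussion of why the Fra\"{i}ss\'{e} step is essential is on target.
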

\begin{proof}
The proof of the first statement is identical to that of Proposition~\ref{prop:reflections}. The second statement then follows from~\cite[Proposition~4.6]{wonderland}.
\end{proof}

    \bibliographystyle{alpha}
\bibliography{global.bib,CSPbib.bib}

\end{document}